\documentclass[11pt, a4paper]{amsart}

\usepackage[utf8]{inputenc}
\usepackage[T1]{fontenc}        
\usepackage{lmodern}            
\usepackage{microtype}

\usepackage[margin=1in]{geometry}       

\usepackage{appendix}
\usepackage{svg}

\usepackage{amsmath,amssymb,mathtools,amsthm}
\usepackage{bm}                 
\usepackage{gensymb}            

\usepackage{aliascnt}
\usepackage{pdfpages}

\usepackage{pgfplots}
\pgfplotsset{compat=1.18}

\usepackage[numbers, square, sort&compress]{natbib}    
\usepackage{xcolor}
\usepackage{booktabs}           
\usepackage[colorlinks=true, linkcolor=blue, citecolor=blue, urlcolor=black]{hyperref}
\usepackage[capitalise, nameinlink]{cleveref}

\newtheorem{theorem}{Theorem}
\newtheorem{lemma}[theorem]{Lemma}
\crefname{lemma}{Lemma}{Lemmas}
\Crefname{lemma}{Lemma}{Lemmas}

\newtheorem{corollary}[theorem]{Corollary}
\crefname{corollary}{Corollary}{Corollaries}
\Crefname{corollary}{Corollary}{Corollaries}

\newcommand{\R}{\mathbb{R}}
\renewcommand{\P}{\mathbb{P}}
\newcommand{\E}{\mathbb{E}}
\newcommand{\N}{\mathbb{N}}
\newcommand{\1}{\mathbf{1}}
\newcommand{\visible}{\leftrightarrow{}}

\newtheorem{proposition}[theorem]{Proposition}

\theoremstyle{definition}
\newtheorem{definition}[theorem]{Definition}

\theoremstyle{remark}

\theoremstyle{definition}
\newtheorem{assumption}{Assumption}
\crefname{assumption}{Assumption}{Assumptions}
\Crefname{assumption}{Assumption}{Assumptions}

\usepackage{csquotes}
\usepackage{dsfont}
\usepackage{stmaryrd}
\usepackage{algorithm}
\usepackage[noend]{algpseudocode}
\usepackage{hyperref}

\DeclareMathOperator*{\argmax}{arg\,max}

\definecolor{darkgreen}{RGB}{0,100,0}

\algdef{SE}[DOWHILE]{Do}{doWhile}
  {\algorithmicdo}
  {\algorithmicwhile\ }

\newcommand{\invisible}[1]{}

\algnewcommand{\algorithmicgoto}{\textbf{go to} line}%
\algnewcommand{\Goto}[1]{\algorithmicgoto~\ref{#1}}%

\title{Discretization-free exact recovery in geometric community detection}

\author{Maarten Hoeneveld}
\author{Moritz Otto}
\author{Raphaël Sala}

\address{Mathematical Institute\\
Leiden University\\
Leiden, The Netherlands}
\email{maartenhoeneveld@gmail.com}

\address{Mathematical Institute\\
Leiden University\\
Leiden, The Netherlands}
\email{m.f.p.otto@math.leidenuniv.nl}

\address{Institut de Mathématiques\\
Toulouse University\\
Toulouse, France}
\email{contact.r.sala@gmail.com}

\date{\today}

\begin{document}

\begin{abstract}
Geometric community detection seeks to recover latent communities in networks where connectivity depends jointly on community structure and continuous spatial geometry. Existing exact-recovery approaches typically discretize the underlying space, which can impose restrictive structural assumptions on the connectivity functions. We develop a polynomial-time, discretization-free algorithm for exact recovery in the Geometric Hidden Community Model (GHCM), operating directly on the continuous geometry. Our method succeeds even when connectivity functions coincide on a nontrivial portion of their visibility range and when two communities can be distinguished only through their connectivity to a third community. We prove exact recovery under these weaker conditions and provide experiments showing that the algorithm succeeds beyond the theoretically guaranteed regime.
\end{abstract}

\maketitle

\section{Introduction}

Community detection is a fundamental problem in graph learning and network
science. While classical models such as the Stochastic Block Model (SBM)
capture community-dependent connectivity, many real-world networks are
embedded in continuous physical or metric spaces, where interactions depend
jointly on community membership and spatial proximity. This motivates
geometric extensions of the SBM, in which observed edge weights depend on
both the latent labels of their endpoints and their geometric distance.

We study exact community recovery in the Geometric Hidden Community Model
(GHCM) introduced by Gaudio et al.~\cite{gaudio2025incguan}, where vertices
are positioned in a continuous spatial domain and edge distributions depend
on both community labels and spatial distance. Existing efficient recovery
algorithms for the GHCM navigate this continuous geometry through spatial
discretization: the domain is partitioned into small cells and community
labels are propagated across sufficiently occupied regions. This raises two
natural questions: \emph{Is spatial discretization necessary for
computationally efficient exact recovery, and how restrictive are the
structural distinctness assumptions used in existing analyses?}

We address both questions through two complementary contributions:
\begin{itemize}
    \item \textbf{Discretization-free exact recovery.}
    We introduce a polynomial-time exact-recovery algorithm that operates
    directly on the continuous vertex locations, without partitioning the
    underlying space into cells. The algorithm combines initialization on a
    logarithmic-size region, greedy propagation based on the available
    statistical information, and a final local likelihood refinement. Its
    propagation phase can be implemented in
    $\mathcal O(n\log^2 n)$ operations.

\item \textbf{Relaxed structural conditions.}
Our propagation analysis allows a pair of communities to be
distinguished through witness communities different from the pair
itself, substantially weakening the structural assumptions required by
previous propagation arguments. We establish these guarantees for both
our discretization-free algorithm and a block-based propagation
algorithm, showing that this improvement is independent of the removal
of spatial discretization.
\end{itemize}

Under a witness-connectivity condition, both algorithms achieve exact
recovery whenever the information-theoretic recovery condition holds.
Consequently, whenever the information-theoretic condition itself implies
witness connectivity, our algorithms attain the sharp information-theoretic
threshold. This leaves an intermediate regime in which exact recovery is
information-theoretically possible but our sufficient connectivity condition
is not satisfied. Numerical experiments indicate that the
discretization-free algorithm continues to succeed well inside this regime,
suggesting that the additional connectivity condition may reflect a
limitation of the current analysis or of block-based propagation rather than
an intrinsic barrier to recovery.

Our implementation and numerical experiments are publicly available at
\url{https://github.com/Sala2Code/GHCM-Greedy-Exact-Recovery}.

\textbf{Related Work.} The study of community detection in spatially embedded networks was initiated by Baccelli and Sankararaman \cite{Sankararaman2018} via the Geometric Stochastic Block Model (GSBM). Gaudio et al. \cite{gaudio2025dec} subsequently developed an efficient exact-recovery procedure for the GSBM based on local propagation and likelihood refinement, complementing the lower bounds of Abbe et al. \cite{Abbe+2020}. Avrachenkov et al. \cite{avrachenkov2024communitydetectionblockmodels} established exact recovery for a one-dimensional geometric block model with general distance-dependent edge probabilities. This framework was generalized by Gaudio et al. \cite{gaudio2025incguan} to the GHCM, accommodating arbitrary joint edge-weight distributions. More recently, Gaudio and Jin \cite{gaudio2026jan} characterized the information-theoretic threshold for exact recovery in the distance-dependent GHCM. Related models have also been explored in other geometric settings, such as the Soft Geometric Block Model with unknown locations \cite{avrachenkov2021higher} and spherical geometric block models \cite{galhotra2018geometric,galhotra2023community}.

\section{Model and Main Result}
\label{sec:model}

We consider the following  GHCM from \cite{gaudio2026jan}, which allows the edge weight distribution between two vertices to depend on their distance, in addition to their community labels.

 \begin{definition}
Let $n \in \N$ be a scaling parameter. Let $\lambda > 0$ be a constant, $d \in \N$ be the dimension, $k \in \N$ be the number of communities, and let $(\mathbb X, \mathcal X)$ be a measurable space of edge weights with $\sigma$-finite measure $\mu$. Let $Z$  with $|Z| = k$ be the set of community labels. Let $\pi \in \R^k$ represent the prior probabilities of each community. For any $i,j\in Z$ and $y\in \mathbb R$, let $P_{ij}(y)=P_{ji}(y)$ be the symmetric edge-weight distribution between two vertices in communities $i$ and $j$ at distance $y$. Assume that $P_{ij}(y) \ll \mu$ for all $y \in \R$ and let $p_{ij}(\cdot ; y)$ denote the probability density of $P_{ij}(y)$ with respect to $\mu$. Then, we sample $G \sim \text{GHCM}(\lambda, n, r, \pi, P(\cdot), d)$ as follows:
\begin{enumerate}
    \item The locations of the vertices are generated by a Poisson point process with intensity $\lambda$ over $\mathcal S_{d, n} := [-n^{1/d}/2, n^{1/d}/2]^d$, the cube of volume $n$ centered at the origin. Let $V$ be the set of generated vertices and $V(B)$ the number of vertices in a given measurable subset $B \subseteq \mathcal S_{d,n}$.
    
    \item Each vertex $v \in V$ is independently assigned a community label, denoted $\sigma^*(v)$, with $\P(\sigma^*(v) = i) = \pi_i > 0$ for each $i \in Z$. We write $V_i:=\{v \in V:\, \sigma^*(v)=i\}$ for the subset of Poisson points with community label $i$.
    
    \item Conditioned on the locations and community labels, edge weights are generated independently between each pair of vertices. Let $\overline{P}(y) := P(y/(\log n)^{1/d})$ and $\overline{p}(\cdot; y)$ be the density of $\overline{P}(y)$. For each pair of vertices $\{u, v\}$ where $u \neq v$, the edge weight $X_{uv}$ is sampled from
    \begin{equation*}
        X_{uv} \sim \overline{P}_{\sigma^*(u)\sigma^*(v) }(\|u-v\|),
    \end{equation*}
    where $\|\cdot\|$ denotes the Euclidean toroidal metric. We denote the observed edge weight as $x_{uv}$. We assume that 
    \[    r := \inf\{y > 0 : \exists x \in \mathbb X \text{ such that } P_{ab}(y') = \delta_x \text{ for all }  y' > y\text{ and } a,b \in Z\} < \infty,
    \] where $\delta_x$ is Dirac measure in $x \in \mathbb X$ and call $r (\log n)^{1/d}$ the \textit{visibility radius}. In addition, we say that $u$ is \textit{visible} to $v$ if $\|u - v\| \leq r (\log n)^{1/d}$, which we denote $u \visible v$. 
\end{enumerate}
\end{definition}
Our goal is to achieve exact recovery, which is a particular notion of recovering the community labels. As discussed in \cite{gaudio2025incguan}, note that if there are symmetries in the prior probabilities and edge weight distributions, then it is only possible to recover the correct labeling up to a permutation. Consequently, \cite{gaudio2025incguan} introduced the definition of a permissible relabeling to account for such symmetries, which we modify to account for distance-dependent edge weight distributions. 

\begin{definition}[Permissible Relabeling]
\label{def:permissible}
A permutation of communities $\omega: Z \to Z$ is called permissible if $\pi_i = \pi_{\omega(i)}$ and $P_{ij}(y) = P_{\omega(i)\omega(j)}(y)$ for all $i, j \in Z$ and almost all $y \in [0, r]$. We denote the set of permissible relabelings as $\Omega_{\pi, P}$.
\end{definition}
Now, we define exact recovery. Let $\sigma^*_n$ be the true labeling and $\tilde{\sigma}_n$ be an estimated labeling. We define the agreement of $\tilde{\sigma}_n$ and $\sigma^*_n$ as
\begin{equation*}
    A(\tilde{\sigma}_n, \sigma^*_n) = \frac{1}{|V|} \max_{\omega \in \Omega_{\pi, P}} \sum_{u \in V} \1\{\tilde{\sigma}_n(u) = \omega \circ \sigma^*_n(u)\}
\end{equation*}
which is the proportion of vertices that $\tilde{\sigma}_n$ labels correctly, up to a permissible relabeling. Then, we say that $\tilde{\sigma}_n$ achieves
\begin{itemize}
    \item \textit{exact recovery} if $\lim_{n \to \infty} \P(A(\tilde{\sigma}_n, \sigma^*_n) = 1) = 1$,
    \item \textit{almost-exact recovery} if $\lim_{n \to \infty} \P(A(\tilde{\sigma}_n, \sigma^*_n) \geq 1 - \epsilon) = 1$, for all $\epsilon > 0$.
\end{itemize}

To quantify the distinguishability between communities, we define the
\textit{Chernoff--Hellinger (CH) divergence} \citep{abbe_2015_sbm_limits}.
For $a\in Z$ and $y\in[0,r]$, let
\[
P_a(\,\cdot\,;y)
:=
\bigl(P_{ac}(\,\cdot\,;y)\bigr)_{c\in Z}
\]
denote the vector of edge-weight distributions from a vertex in
community $a$ to vertices in the respective communities at distance
$y$. For $t\in(0,1)$ and probability measures $P,Q\ll\mu$ with
densities $p,q$, respectively, define
\begin{equation}
\label{eq:def_phi}
\phi_t(P,Q)
\coloneqq
\int_{\mathbb X} p(x)^t q(x)^{1-t}\,\mu(dx).
\end{equation}
By the weighted arithmetic--geometric mean inequality,
\(
p(x)^tq(x)^{1-t}
\leq tp(x)+(1-t)q(x),
\)
and hence $\phi_t(P,Q)\leq 1$. Moreover,
 $\phi_t(P,Q)=1$ if and only if $P=Q$. 

We define the CH divergence between communities $a,b\in Z$ by
\begin{equation}
\label{eq:def_CH}
D_+(P_a\|P_b)
\coloneqq
1-
\inf_{t\in(0,1)}
\int_0^r
\frac{d y^{d-1}}{r^d}
\sum_{c\in Z}\pi_c\,
\phi_t\bigl(P_{ac}(y),P_{bc}(y)\bigr)
\,dy.
\end{equation}
Thus, $D_+(P_a\|P_b)$ measures the average distinguishability between
communities $a$ and $b$, where the edge-weight distributions are
averaged over the distance distribution and over the community label
of the neighboring vertex according to the prior $\pi$.

For $a\neq b$ and $I\subseteq[0,r]$, define the set of
\textit{witness communities}
\[
W_{ab}(I)
\coloneqq
\left\{
w\in Z:
\inf_{t\in(0,1)}
\phi_t\bigl(P_{aw}(y),P_{bw}(y)\bigr)<1
\text{ for a.e. }y\in I
\right\}.
\]
Thus, $w\in W_{ab}(I)$ precisely when the edge-weight distributions
from communities $a$ and $b$ to community $w$ are distinguishable for
almost every distance in $I$.

\begin{assumption}[Identifiability of pairwise distributions]
\label{ass:identifiability}Assume that 
\begin{equation}
\label{eq: conn constraint}
\lambda \nu_d \max_{0 \le l \le r }\min_{a \neq b}\max_{l \leq s \leq r} 
l^d \sum_{w \in W_{ab}([s-l, s])} \pi_w
>
\begin{cases}
1, & d \ge 2, \\
2, & d = 1,
\end{cases}
\end{equation}
where $\nu_d$ denotes $d$-dimensional volume of the unit ball in $\mathbb R^d$.

Intuitively, this condition requires that there exists a common interval length $l \in [0,r]$ such that for every pair of communities, there exists a distance interval of length $l$ on which the corresponding witness communities provide enough informative vertices to form a connected propagation structure.
\end{assumption}

Let $\Omega_P$ be the set of all permutations $\omega:Z\to Z $ such that $P_{ij}(y)=P_{\omega(i)\omega(j)}(y)
\text{ for all }i,j\in Z
\text{ and a.e. }y\in[0,r]$.
Intuitively, the following assumption says that at sufficiently short distances, assigning two vertices from the same true community to two different candidate communities produces a uniformly distinguishable pair distribution.
\begin{assumption}[Initialization separation]
\label{ass:init-identifiability}
There exist $r_{\mathrm{init}}\in(0,r]$ and
$\Phi_{\mathrm{init}}\in(0,1)$ such that, for every
$a,b,c\in Z$ with $b\neq c$,
\[
\phi_{1/2}\left(
P_{bc}(y),P_{aa}(y)
\right)
\leq
\Phi_{\mathrm{init}}
\]
for a.e. $y\in[0,r_{\mathrm{init}}]$.
\end{assumption}

\begin{assumption}[Bounded log-likelihood difference]
\label{ass:bounded-ll}
For any communities $a, b, a', b' \in {Z}$, there exists~$\rho > 0$ such that for all~$x \in \mathbb X$ and~$y \in [0,r]$,
    \begin{equation}
         \bigg|\log \frac{{p}_{ab}(x; y)}{{p}_{a'b'}(x; y)}\bigg| < \rho.
    \end{equation}
\end{assumption}

The following theorem is our main result. 
\begin{theorem}[Achievability]
\label{theorem: achiev}
Let Assumptions \ref{ass:identifiability}, \ref{ass:init-identifiability} and \ref{ass:bounded-ll} hold and suppose that 
\begin{align}
\lambda \nu_d r^d \min_{a \neq b} D_+({P}_a\| {P}_b) > 1. \label{ass:D}
\end{align}
Then Algorithms \ref{alg:exact-recovery} and \ref{alg:phase2_block} presented below achieve exact recovery in polynomial time (in $n$).
\end{theorem}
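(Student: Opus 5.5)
The proof follows the three-phase structure of the algorithms and the standard two-round template of Gaudio et al.: we show that the first two phases produce \emph{almost-exact} recovery with a strong local guarantee, and that the refinement then upgrades this to exact recovery. The whole argument is carried out up to a global permissible relabeling $\omega\in\Omega_{\pi,P}$. The steps below are taken in order.

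\textbf{Step 1 (initialization).} Fix a ball $B_0$ of volume $\Theta(\log n)$, scaled so that all its points are pairwise within distance $r_{\mathrm{init}}(\log n)^{1/d}$. By Poisson concentration it contains $\Theta(\log n)$ vertices with high probability. The algorithm chooses the labeling of $B_0$ that maximizes the likelihood, or exhaustively searches over its polynomially many candidate classes. Suppose a candidate labeling differs from $\omega\circ\sigma^*$ on a set with no consistent permutation. Then it assigns two vertices from a common true community $a$ to distinct labels $b\neq c$ on $\Theta(\log n)$ pairs. By \cref{ass:init-identifiability}, each such pair contributes a factor of at most $\Phi_{\mathrm{init}}<1$ to the Bhattacharyya bound. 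A Chernoff bound together with a union bound over the at most $k^{O(\log n)}$ candidates therefore shows that the candidate loses with probability $1-o(1)$. The constant in the volume of $B_0$ has to be chosen large enough that $\Phi_{\mathrm{init}}^{\Theta(\log n)}$ beats the union bound. The result is that $B_0$ is labeled exactly, up to $\omega\in\Omega_P$.

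\textbf{Step 2 (propagation; the main obstacle).} The algorithm visits vertices greedily. A vertex $v$ is labeled once there is a large enough set of already-labeled vertices $u$ whose distance to $v$ lies in $[(s-l),s](\log n)^{1/d}$ and whose labels are witnesses in $W_{ab}([s-l,s])$; $v$ is then classified by pairwise likelihood tests against these $u$. Choose $l,s$ attaining the max--min in \cref{ass:identifiability}.

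Two things must be proved. The first is \emph{reachability}: every vertex is eventually labeled. Restrict to the witness-thinned Poisson process, which has intensity $\lambda\sum_{w\in W_{ab}}\pi_w$, and look at annulus-neighbourhoods of volume $\nu_d l^d\log n$ (after a shift by $s-l$). By the classical connectivity threshold for random geometric graphs, strict inequality in \eqref{eq: conn constraint} gives with high probability that every point of $\mathcal S_{d,n}$ has at least $\delta\log n$ informative labeled vertices in its annulus, and that the propagation order reaches the whole torus. The threshold is $1$ for $d\ge2$ and $2$ for $d=1$ because in one dimension the coverage has to hold on both sides. The second is \emph{accuracy}: each test errs with probability $n^{-c}$ for some small $c>0$ when the informative count is $\ge\delta\log n$. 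This follows from Chernoff bounds using $\inf_t\phi_t<1$ a.e.\ on the interval, made uniform via \cref{ass:bounded-ll}.

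Errors can compound along the chain, so I will not aim for exactness here. Instead I aim for the local guarantee that every visibility ball contains at most $\eta\log n$ mislabeled vertices, with $\eta$ arbitrarily small. To get it, bound the number of errors in each ball by a sum of weakly dependent Bernoulli$(n^{-c})$ variables using a bounded-difference/Poissonization argument, then take a union bound over $O(n/\log n)$ balls. Since each vertex's test only uses previously labeled vertices, the analysis must condition carefully on past labels. I would try a ``grid of independent edge subsets'' device, i.e.\ sample splitting of the edges, to avoid dependence, and that is where I expect the technical difficulty to sit. The propagation of the block-based Algorithm~\ref{alg:phase2_block} is handled by the same argument with blocks in place of annuli, and its running time is polynomial.

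\textbf{Step 3 (refinement).} Each vertex $v$ is relabeled by $\arg\max_a \sum_{u\visible v}\log \overline{p}_{a\tilde\sigma(u)}(x_{uv};\|u-v\|)$, where $\tilde\sigma$ is the output of Step 2. Using the genie-aided estimator with true neighbour labels, a Chernoff bound gives an error probability per vertex of
\[
\exp\bigl(-\lambda\nu_d r^d D_+(P_a\|P_b)\log n(1+o(1))\bigr)=n^{-(1+\epsilon)}
\]
by \eqref{ass:D}; a union bound over the $\approx\lambda n$ vertices then makes this $o(1)$. Swapping the true neighbour labels for those of $\tilde\sigma$ changes the log-likelihood by at most $2\rho\eta\log n$, by \cref{ass:bounded-ll} and Step 2. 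Taking $\eta$ small keeps the exponent above $1$, which gives exact recovery. Every phase is polynomial time.
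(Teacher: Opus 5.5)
Your three-phase template and your Step~3 agree with the paper (the genie-aided bound of Lemma~\ref{lem:correct_labels_MAP_failures}, then a $2\rho\beta\log n$ perturbation via Assumption~\ref{ass:bounded-ll}). Steps~2 and~1, however, each have a genuine gap.

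\textbf{Propagation.} The claim that each test errs with probability $n^{-c}$ holds only conditionally on the source being nearly correct. Lemma~\ref{lemma:one_step_error_bound} gives the bound $(k-1)e^{2\rho m_{\hat\sigma}(S)-\mathcal I_S(u)}$. So the error indicators cascade (one wrong label can flip many later ones); they are not weakly dependent Bernoulli$(n^{-c})$ variables, and bounded differences or Poissonization do not break this circularity. Edge sample-splitting is unnecessary, and it is not part of either algorithm.

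The missing idea is that, given $\mathcal F_v=\sigma(V,\sigma^*,\{\hat\sigma(u):u<v\})$, the edges from $v$ to its source have never been used. Hence the truncated indicators $\xi_v=\mathbf 1_{\mathcal G_v}\mathbf 1\{\mathcal I_{S(v)}(v)\ge c_2\log n\}\mathbf 1\{\hat\sigma(v)\ \text{wrong}\}$ satisfy $\E[\xi_v\mid\mathcal F_v]\le p_n=(k-1)e^{2\rho m}n^{-c_2}$. Iterating the tower property along the adaptive order gives $\E[\prod_{i\le m+1}\xi_{v_i}\mid V]\le p_n^{m+1}$. You then take a union bound over $(m+1)$-subsets of each $O(\log n)$-size neighbourhood. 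Finally, a first-failure-time argument shows that, on the event of Lemma~\ref{lem:enough_info}, the $m+1$ errors in the first overflowing neighbourhood all have $\xi=1$.

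Your reachability step is also not sound. The annulus $\{x:\|x-v\|\in[s-l,s](\log n)^{1/d}\}$ has volume $\nu_d(s^d-(s-l)^d)\log n$, not $\nu_d l^d\log n$. More importantly, $\delta\log n$ witness vertices there do not help unless they are already labeled when $v$ is processed, and for Algorithm~\ref{alg:exact-recovery} this has to come from the argmax rule. In the paper, $\lambda\nu_d\ell^d\pi_{\mathrm{wit}}>1$ makes the $\ell$-close graph of $\delta$-spread blocks connected (Proposition~\ref{prop:connectivity}). A proximity chain of labeled spread blocks reaching distance $r(\log n)^{1/d}$ from a vertex must contain a block inside every annulus of width $\ell$ (Lemma~\ref{lem:past_witness_block}). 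Hence some unlabeled vertex, and therefore the argmax vertex, has affinity at least $c_2\log n$. The uniform bound $\phi_{1/2}\le\Phi<1$ on most of that annulus comes from Lemma~\ref{lem: informative interval} and the Mecke equation, not from Assumption~\ref{ass:bounded-ll}.

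\textbf{Initialization.} As written, your union bound fails. Let $N=|V\cap B_0|$. A candidate with one mislabeled vertex of community $a$ has only about $\pi_a N$ same-community discordant pairs. Summing $\Phi_{\mathrm{init}}^{\pi_{\min}N}$ over $k^N$ candidates therefore requires $k\,\Phi_{\mathrm{init}}^{\pi_{\min}}<1$. Enlarging $B_0$ does not help, since both exponents are linear in $N$. You must separate low discrepancy, handled vertex by vertex, from discrepancy at least $c\log n$. In the latter case you get $\Theta((\log n)^2)$ informative pairs, and $C^{(\log n)^2}$ beats $k^{O(\log n)}$ (Propositions~\ref{prop: ld} and~\ref{prop: hd}).

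Moreover, candidates equal or close to $\omega'\circ\sigma^*$ for a permutation $\omega'$ have no same-community pairs with distinct labels, so Assumption~\ref{ass:init-identifiability} says nothing about them. If $\omega'\notin\Omega_P$, you need pairs at distances where $P_{ab}\not\equiv P_{\omega'(a)\omega'(b)}$; this is the role of the set $\mathfrak R$ and the event $\mathcal G_{\mathrm{shell}}$. If $\omega'\in\Omega_P\setminus\Omega_{\pi,P}$, the likelihoods coincide, so recovery ``up to $\Omega_P$'' is not exact recovery. The prior must enter; the paper does this through the class-size constraint $X_0^*(\epsilon)$ and the optimality of the MAP.

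Finally, the theorem concerns the algorithms as stated. Their $B_{\mathrm{init}}$ is a cube of side $2r(\log n)^{1/d}$, whose diameter exceeds $r_{\mathrm{init}}(\log n)^{1/d}$, so you cannot assume all pairs are within the initialization range. This is why Lemma~\ref{lem:geometric-split} is needed.
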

From \cite[Theorem 1]{gaudio2026jan} we know that Condition \eqref{ass:D} is a necessary condition for exact recovery (information-theoretic threshold). An important special case (considered in \cite[Theorem 2]{gaudio2026jan}) is the situation where for every pair $(a,b)\in Z^2$ with $a \neq b$ and every $w \in Z$ it holds that $P_{aw}(y)\neq P_{bw}(y)$ for almost all $y \in [0,r]$. In this case, we have that $W_{ab}([0,r])=Z$ for every $a \neq b$, and Assumption \ref{ass:identifiability} reduces to 
\begin{align*}
\lambda \nu_d r^d
>
\begin{cases}
1, & d \ge 2, \\
2, & d = 1,
\end{cases}
\end{align*}
Thus, in this case Assumption \ref{ass:identifiability} becomes trivial in dimension $d \ge 2$ whenever Condition \eqref{ass:D} is met. See Figure~\ref{fig: new constraint} for an illustration of different parameter regimes. 
\begin{figure}[H]
    \centering
    \includegraphics[
        width=\linewidth
    ]{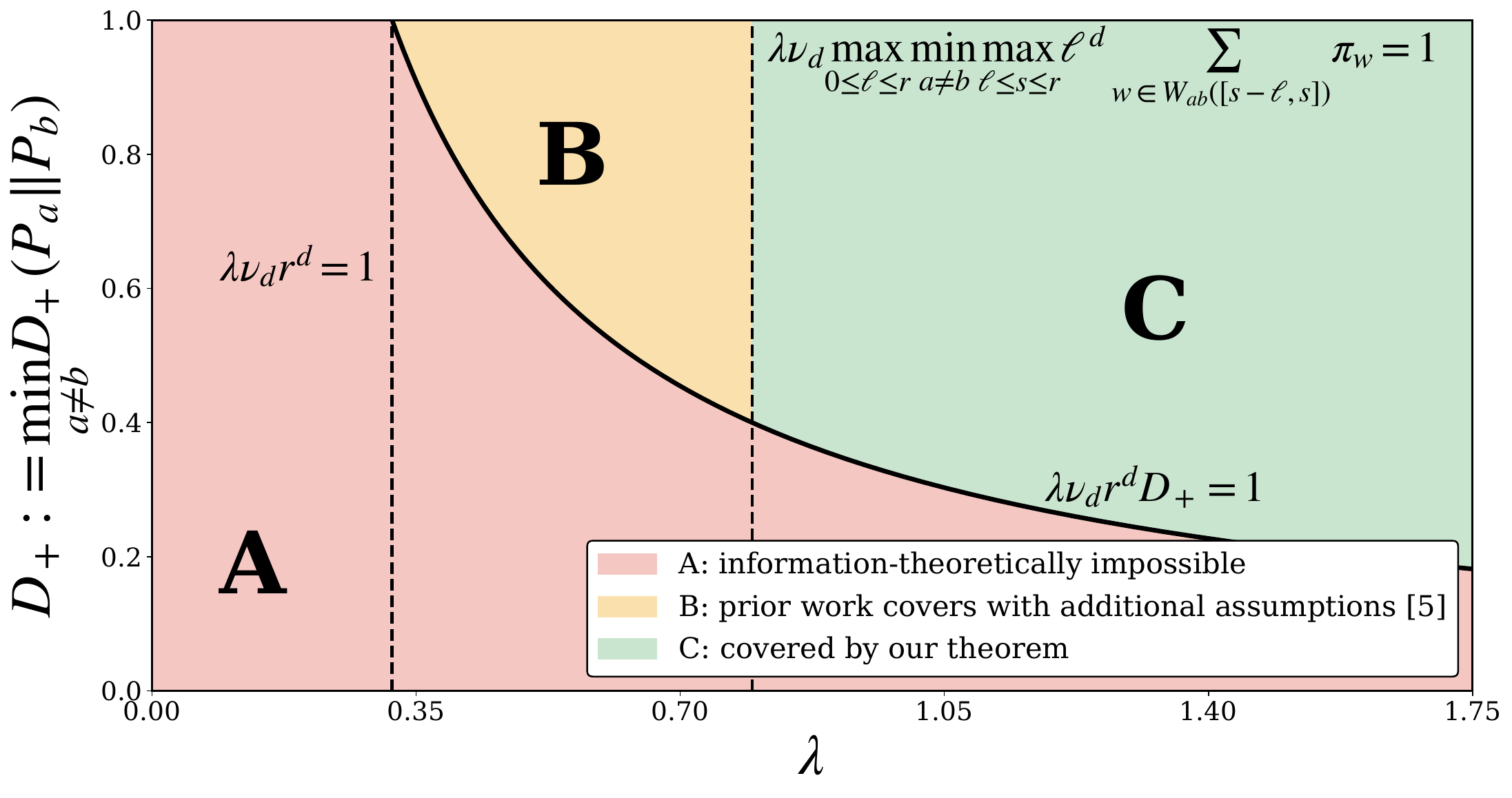}

\caption{Illustration of Theorem \ref{theorem: achiev}}
\label{fig: new constraint}
\end{figure}
\noindent
For $d\geq 2$, Region A is information-theoretically impossible, since
$\lambda\nu_d r^d D_+ < 1$ \cite[Theorem~2.1]{gaudio2025incguan}.
Region C is covered by Theorem~\ref{theorem: achiev},
where both the refinement condition $\lambda \nu_d r^d D_+ > 1$ and the witness-connectivity condition $\lambda \nu_d \pi_{\mathrm{wit}}L^d > 1$ hold. Region B is the remaining intermediate regime where exact recovery is achieved by~\cite{gaudio2026jan} under \cite[Assumptions 1 and 3]{gaudio2025incguan}.
The solid curve corresponds to the information-theoretic threshold $\lambda\nu_d r^d D_+=1$. The left dashed vertical line corresponds to $\lambda\nu_d r^d=1$, the visibility-graph connectivity threshold. The right dashed vertical line corresponds to the additional connectivity threshold $\lambda\nu_d \max_{0\leq l\leq r} \min_{a\neq b} \max_{l\leq s\leq r}l^d\sum_{w\in W_{ab}([s-l,s])}\pi_w =1$. The solid curve and the right dashed line intersect at a critical value of $D_+$. Below it, the information-theoretic condition already implies connectivity, so the two thresholds coincide. Above it, Region B lies between the information-theoretic and connectivity thresholds.


\section{Exact Recovery Algorithms}
Next we present our algorithms for exact recovery. For $u \in \mathcal S_{d,n}$ let $N(u):=\{v \in S_{d,n}:u~\visible~v, u\neq v\}$ be the {\em neighborhood} of $u$.

\begin{definition}[Affinity information]\label{def:affinity_information}
    Let $u \in \mathcal S_{d,n}$, $S \subseteq V \cap N(u)$ and let $\hat \sigma:S \to Z$ be a labeling on $S$. We call
    \begin{align}
        \mathcal{I}_S(u) \coloneqq \min_{a \neq b} \sup_{t \in (0,1)} \sum_{v \in S} - \log  \phi_t \left(\bar P_{a \hat \sigma(v)}(\|u-v\|), \bar P_{b\hat\sigma(v)}(\|u-v\|)\right).\label{eq:def_I}
    \end{align}
    the \textit{affinity information} of $u$ on the \textit{source} $S$ for a labeling $\widehat{\sigma}$.
\end{definition}

We now give an overview of our first, discretization-free Algorithm \ref{alg:exact-recovery}. We  fix the initial set
\begin{align}
    B_{\mathrm{init}} \coloneqq [-r(\log n)^{1/d},r(\log n)^{1/d}]^d.
\end{align}
As in \cite{gaudio2026jan}, we use a Maximum A Posteriori (MAP) estimator on $B_{\mathrm{init}}$, which maximizes the posterior likelihood of the observed graph. Since $k$ is fixed and $B_{\mathrm{init}}$ contains $O(\log n)$ vertices with high probability, exhaustive MAP over the initial set requires
\(
k^{O(\log n)} = n^{O(1)}
\)
evaluations and is therefore polynomial in $n$.

Second, we label all vertices outside the initial set via a propagation procedure. Given that a subset $\hat V\subseteq V$ has already been labeled, we take the node that maximizes the affinity information $\mathcal I_{\hat V \cap N(u)}(u)$ among all nodes $u \in V \setminus \hat V$. We then choose its label such that likelihood of the observed graph restricted to the vertex set $\hat V \cup \{u\}$ is maximized, and then update $\hat V$. We repeat this procedure until all of the nodes in $V$ are labeled. Ignoring vertex selection, propagating all vertices requires
$O(n\log n)$ operations. A naïve implementation of
the selection rule requires $O(n^2\log n)$ operations. By maintaining
the affinity scores incrementally in a priority queue, this can be
reduced to $O(n(\log n)^2)$. 

Third, we implement the same refinement as \cite{gaudio2026jan} to obtain exact recovery by iterating through all nodes, with complexity $O(n \log n)$. Given the estimated labeling from the initialization and propagation $\hat{\sigma}$, we compute $\tilde{\sigma}(v)$ for each vertex $v \in  V$ by choosing the community that maximizes its posterior likelihood given $\hat{\sigma}$ and $G = ({V}, E)$,
\[\tilde{\sigma}(u)
=
\underset{a \in Z}{\arg\max}
\sum_{\substack{v\in V:\,
u\visible v,\,
\hat\sigma(v)\neq *,\,
u\neq v}}
\log\left(
\bar p_{a,\hat\sigma(v)}
(x_{uv};\|u-v\|)
\right).\]
We next present pseudocode for Algorithm \ref{alg:exact-recovery}. The subroutines~\texttt{Maximum a Posteriori}, \texttt{Propagate} and \texttt{Refine} are presented in the Appendix.
\begingroup
\renewcommand{\thealgorithm}{A}
\begin{algorithm}[H]
\caption{Discretization-free Exact Recovery}
\label{alg:exact-recovery}
\begin{algorithmic}[1]
\Statex \textbf{Input:} $G\sim \mathrm{GHCM}(\lambda,n,r,\pi,P(\cdot),d)$.
\Statex \textbf{Output:} An estimated community labeling $\widetilde\sigma: V\to Z$.
\Statex
\State \textbf{Phase I (Initialization):}
\State  Let $\hat V:= V \cap B_{\mathrm{init}}$. 
\State Apply \texttt{Maximum a Posteriori} (Algorithm \ref{alg:map}) on input $(G, \hat V)$ to obtain a labeling $\hat{\sigma}$ on $\hat V$.
\Statex
\State \textbf{Phase II (Propagation):}
\While{$V\setminus \hat V \neq \emptyset$}
    \State Let $u \coloneqq \argmax_{v\in V \setminus \hat V} \mathcal I_{N(v) \cap \hat V}(v)$.
   \State  Apply \texttt{Propagate} (Algorithm \ref{alg:propagate}) on input $(G,\hat V,\{u\},\hat\sigma_{\hat V})$ to obtain a labeling $\hat\sigma(u)$.
\State $\hat V\leftarrow\hat V\cup\{u\}$.
\EndWhile

\Statex
\State \textbf{Phase III (Refinement):}
\For{$u\in V$}
\State Apply \texttt{Refine} (Algorithm \ref{alg:refine}) on input $(G, u,\hat \sigma)$ to obtain a labeling $\tilde{\sigma}(u)$.
\EndFor
\end{algorithmic}
\end{algorithm}
\endgroup

We now present our block-based Algorithm \ref{alg:phase2_block}, generalizing the algorithm from \cite{gaudio2026jan} (originating from \cite{gaudio2024exactcommunityrecoverygeometric}). First we need some more definitions. Let
\begin{align}
    \ell \coloneqq \argmax_{0 \le l \le r }\min_{a \neq b}\max_{l\le s\le r} 
l^d \sum_{w \in W_{ab}([s-l,s])} \pi_w,\qquad \pi_{\mathrm{wit}}:=\min_{a \neq b}\max_{\ell\le s \le r} 
\sum_{w \in W_{ab}([s-\ell,s])} \pi_w,\label{def:ell}
\end{align}
and note that $\lambda \nu_d \ell^d \pi_{\mathrm{wit}}$ is the left-hand side of \eqref{eq: conn constraint}. For every $a\neq b$, fix $s_{ab}\in[\ell,r]$ such that
\[
\sum_{w\in W_{ab}([s_{ab}-\ell,s_{ab}])}\pi_w
\geq \pi_{\mathrm{wit}},
\]
and set
\[
I_{ab}:=[s_{ab}-\ell,s_{ab}].
\]
Thus, $|I_{ab}|=\ell$ and
\[
\sum_{w\in W_{ab}(I_{ab})}\pi_w
\geq \pi_{\mathrm{wit}}
\qquad\text{for every }a\neq b.
\]

Next we describe how we partition $\mathcal S_{d, n}$. If $d=1$, take
\begin{equation}\label{eq:condition_chi_0_d1}
0<\chi < \frac{1}{2}\Big(1-\frac{1}{\lambda\pi_{\mathrm{wit}}\ell}\Big).
\end{equation}
If $d\geq 2$, take $\chi > 0$ such that    \begin{equation}\label{eq:condition_chi_0}
       \lambda \pi_{\mathrm{wit}} \ell^d \Big( \nu_d \Big(
            1-\frac{3\sqrt d}{2}\chi^{1/d}
            \Big)^d- \chi \Big) > 1
        \quad \text{and} \quad
            1 - \frac{3\sqrt d}{2}\chi^{1/d} > 0.
    \end{equation}
The existence of $\chi$ is guaranteed by Lemma \ref{lemma:condition_chi_0}. 
We discretize $\mathcal S_{d, n}$ into a family $\mathcal P_{\chi}$ of cubes of volume $\ell^d \chi \log n$, which we call \textit{blocks}. For $s>0$ we call two blocks 
$B_i$ and $B_j$ $s$-\textit{close} if 
    \begin{equation}
        \sup_{u\in B_i, v\in B_j} \|u - v\| \le s (\log n)^{1/d},
    \end{equation}
where  $\| \cdot \|$ is the toroidal metric on $\mathcal S_{d,n}$. For $B, B' \in \mathcal P_{\chi}$ we write $B \sim B'$ if $B$ and $B'$ are $\ell$-close with $\ell$ given at \eqref{def:ell}, and $B \visible B'$ if they are $r$-close. Let $K_d$ denote the number of blocks that are $\ell$-close to a given block $B$, excluding the block itself. Given a graph $G$ on $\mathcal S_{d,n}$ and a partition of $\mathcal S_{d,n}$ into blocks of volume $v(n)$, we follow \cite[Definition 5]{gaudio2026jan} and define the $(v(n),c(n))$-block visibility graph of $G$ as the graph $H=(V^\dagger, E^\dagger)$, where $V^\dagger := \{i \in [n/v(n)] : V(B_i) \geq c(n)\}$ is the set of all blocks with at least $c(n)$ vertices and $E^\dagger := \{\{i, j\}: i, j \in V^\dagger, B_i \sim B_j\}$ is the set of all pairs of blocks in $V^\dagger$ that are mutually $\ell$-close. Let $\rho>0$ satisfy Assumption \ref{ass:bounded-ll}, define
\begin{align}\label{eq:Kr}
\delta_0:=\frac{(\lambda \pi_{\mathrm{wit}}\ell^d \chi K_d-1)^2}{8 \lambda \pi_{\mathrm{wit}}\ell^d \chi K_d^2},\quad
\beta :=\frac{1}{4\rho}(\lambda \nu_d r^d \min_{a\neq b}D_+( P_a \|  P_b)-1),\quad  K_r \coloneqq  
    \left\lceil
    \frac{\nu_d(r+\sqrt d \ell\chi^{1/d})^d}{\ell^d\chi}
    \right\rceil 
\end{align}
  and fix \begin{align}\label{eq:condition_delta}
       0 < \delta < \min\left\{\delta_0, \frac{\beta}{K_r}\right\}
    .\end{align}
We call a block $B$ $\delta$-occupied block if $V(B) \ge \delta \log n$. The parameters $\chi$ and $\delta$ are chosen so that the blocks contain enough nodes of each community and so that the $(\ell^d \chi \log n, \delta \log n)$-block visibility graph $H=(V^\dagger, E^\dagger)$ of $\delta$-occupied blocks is connected with high probability. 

Given a labeling $\hat\sigma$, we call a $\delta$-occupied block $B$
\emph{$\hat\sigma$-certified} if
\begin{align}
\min_{a\neq b}
\sum_{w\in W_{ab}(I_{ab})}
\left|
\left\{
v\in V(B):\hat\sigma(v)=w
\right\}
\right|
\ge
\frac12\delta\pi_{\mathrm{wit}}\log n.
\label{def:certified}
\end{align}
If $\delta$ and $\hat \sigma$ are clear from the context, let $\hat V_{\mathrm{cert}}^\dagger \subseteq V^\dagger$ denote the subset of $\hat \sigma$-certified blocks. 

\medskip
We next give an overview of our second, block-based algorithm, which generalizes the algorithm given in \cite{gaudio2026jan}. We start performing a Maximum A Posteriori estimate on the initial set $B_{\mathrm{init}}$, which is identical to Algorithm \ref{alg:exact-recovery}. Then we divide $\mathcal S_{d,n}$ into blocks of volume $\ell^d \chi \log n$ and construct the $(\ell^d \chi \log n,\delta \log n)$-visibility graph $H=(V^\dagger,E^\dagger)$ of $\delta$-occupied blocks. Given that $H$ is connected, we determine the subset of all $\hat \sigma$-certified blocks. We then choose a $\delta$-occupied block $B$ which is not fully contained in $B_{\mathrm{init}}$ and $\ell$-close to a $\hat \sigma$-certified block. We apply Propagate (Algorithm \ref{alg:propagate}) on input $(G, \bigcup_{k \in \hat V_{\mathrm{cert}}^\dagger} V \cap B_k, V \cap B_i,\hat \sigma)$ to obtain a labeling $\hat \sigma$ on the vertex set of $B$. We then update $\hat V^\dagger$ and $\hat V_{\mathrm{cert}}^\dagger$ and apply Propagate on the next $\delta$-occupied block which is $\ell$-close to a $\hat \sigma$-certified block. We will show that this procedure allows us to label all blocks in $H$ with high probability. Such a technique of \textit{propagation} was first introduced in 2024 by \cite{gaudio2024exactcommunityrecoverygeometric}. The difference to the algorithm in \cite{gaudio2026jan} is that in the propagation step we do not only use the estimated labeling in one parent block, but from all labeled blocks intersecting the visibility range of a given block. This is necessary, since we work under the weaker Assumption \ref{ass:identifiability}, which only guarantees that a given pair of communities can be distinguished in a small interval. Moreover, we need to use nodes of all communities in $Z$, and cannot restrict ourselves to nodes with the most present community label. Finally, exact recovery is obtained by a refinement step that is identical to above.

\begingroup
\renewcommand{\thealgorithm}{B}
\begin{algorithm}[H]
    \caption{Block-based Exact Recovery}
    \label{alg:phase2_block}
    \begin{algorithmic}[1]
\Statex \textbf{Input:} $G\sim \mathrm{GHCM}(\lambda,n,r,\pi,P(\cdot),d)$.
\Statex \textbf{Output:} An estimated community labeling $\widetilde\sigma: V\to Z$.

\medskip
\State \textbf{Phase I:}
\State  Let $\hat V:= V \cap B_{\mathrm{init}}$. 
\State Apply \texttt{Maximum a Posteriori} (Algorithm \ref{alg:map}) on input $(G, \hat V)$ to obtain a labeling $\hat{\sigma}$ on $\hat V$.

\medskip
\State \textbf{Phase II:}
\State Take $\chi$ satisfying \eqref{eq:condition_chi_0_d1} if $d=1$, or \eqref{eq:condition_chi_0} if $d\geq 2$, and $\delta>0$ satisfying \eqref{eq:condition_delta}.
\State Partition $\mathcal S_{d,n}$ into $n / (\ell^d \chi \log n)$ blocks of volume $\ell^d\chi\log n$.
\State Let $B_i$ be the $i$-th block for $i \in [\lceil n/(\ell^d \chi \log n)\rceil]$.
\State Construct the $(\ell^d \chi \log n,\delta \log n)$-visibility graph $H=(V^\dagger,E^\dagger)$ of $\delta$-occupied blocks.
\State Let $\hat V^\dagger:=\{i \in V^\dagger:\,B_i \subseteq B_{\mathrm{init}} \}$ and let $\hat V_{\mathrm{cert}}^\dagger \subseteq \hat V^\dagger$ be the subset of all $\hat \sigma$-certified blocks.
\While {$\{k \in V^\dagger \setminus \hat V^\dagger:\,\exists j \in \hat V_{\mathrm{cert}}^\dagger:\,B_j \sim B_k\} \neq \emptyset$}
\State Let $i:=\min \{k \in V^\dagger \setminus \hat V^\dagger:\,\exists j \in \hat V_{\mathrm{cert}}^\dagger:\,B_j \sim B_k\}$.

\State Apply \texttt{Propagate} on input
$(
G,\,
\bigcup_{k\in\hat V_{\mathrm{cert}}^\dagger}(V\cap B_k),\,
(V\cap B_i)\setminus\hat V,\,
\hat\sigma
)$
to obtain labels on $(V\cap B_i)\setminus\hat V$.
\State $\hat V\leftarrow\hat V\cup(V\cap B_i)$.
\State $\hat V^\dagger\leftarrow\hat V^\dagger\cup\{i\}$.
 \If{$B_i$ is $\hat \sigma$-certified}
$\hat V_{\mathrm{cert}}^\dagger \leftarrow \hat V_{\mathrm{cert}}^\dagger  \cup \{i\}$.
 \EndIf
\EndWhile
\For{$v \in V \setminus \bigcup_{i \in V^\dagger} (V \cap B_i)$}
    \State Set $\hat{\sigma}(v) = *$.
\EndFor
\medskip

\State \textbf{Phase III:}
\For{$v \in V$}
    \State Apply \texttt{Refine} (Algorithm \ref{alg:refine}) on input $(G, v, \hat{\sigma})$ to compute $\tilde{\sigma}(v)$.
\EndFor
    \end{algorithmic}
\end{algorithm} 
\endgroup

\section{Proof Sketch}
We provide a brief sketch of the argument showing that Algorithms \ref{alg:exact-recovery} and \ref{alg:phase2_block} achieve exact recovery. The detailed proof of Theorem \ref{theorem: achiev} can be found in the Appendix. 

\textbf{Initialization.} We adapt the proof of \cite{gaudio2026jan} to the weaker assumptions considered here. In particular, no initial sampling is required, as the random number of Poisson points in the initial set $B_{\mathrm{init}}$ is handled directly using the Mecke equation. This yields exact recovery on $B_{\mathrm{init}}$.

\textbf{Propagation.}
The proof proceeds by showing that errors cannot accumulate during the
propagation phase. The argument has three ingredients.

First, Lemma~\ref{lem:enough_info} shows that, with high probability,
every vertex encountered during propagation has sufficiently many
previously labeled neighbours. More precisely, for every propagated vertex $v$, with $S(v)$ denoting its set of previously labeled neighbours used for propagation, the available affinity information satisfies
\[
\mathcal I_{S(v)}(v)\ge c_2\log n,
\]
unless the number of errors in one of the relevant neighbourhoods already
exceeds a fixed constant \(m\).

Second, conditionally on all information revealed before \(v\) is
processed, Lemma~\ref{lemma:one_step_error_bound} gives
\[
\mathbb P\left(
\widehat\sigma(v)\notin
\{\omega^\star\circ\sigma^\star(v),*\}
\,\middle|\,\mathcal F_v
\right)
\le
\exp\{2\rho m-\mathcal I_{S(v)}(v)\}
\le e^{2\rho m}n^{-c_2}.
\]
Thus, as long as there are at most \(m\) previous errors in every relevant
neighbourhood, the probability of creating a new error is polynomially
small in \(n\).

Finally, we show that these rare errors cannot accumulate to more than
\(m\) errors in any neighbourhood. Fix a vertex \(w\) and consider the
vertices in \(N(w)\) in their propagation order. Although their error
indicators are dependent because the propagation order is adaptive, the
conditional error bound above implies, by iterating the tower property,
that for any \(m+1\) vertices \(v_1<\cdots<v_{m+1}\),
\[
\mathbb E\left[
\prod_{i=1}^{m+1}\xi_{v_i}\,\middle|\,V
\right]
\le
\bigl(e^{2\rho m}n^{-c_2}\bigr)^{m+1}.
\]
Since by Corollary \ref{cor:poisson_upper_concentration}, with probabiltiy $1-o(1)$, every visibility neighbourhood contains  \(O(\log n)\) vertices, a union bound over all \(m+1\)-subsets shows that the
probability of \(m+1\) errors in a fixed neighbourhood is
\(o(n^{-1})\), provided \(m\) is chosen sufficiently large. A further
union bound over the \(O(n)\) relevant neighbourhoods then gives an
\(o(1)\) probability that any neighbourhood contains more than \(m\)
errors.

More precisely, if \(v^\star\) denotes the first vertex for which the
desired error bound fails, then all vertices preceding \(v^\star\)
satisfy the error bound. Hence the \(m+1\) errors witnessing the failure
at \(v^\star\) are all among the auxiliary error variables controlled by
the preceding estimate. This yields the required contradiction.

Consequently, with high probability, throughout the entire propagation
procedure every neighbourhood contains at most \(m\) incorrectly
labelled vertices.

\textbf{Refinement.} The argument is analogous to the corresponding one in \cite{gaudio2026jan}, but we include it for completeness. The main observation is that, for any threshold $\beta>0$, $\delta$ can be chosen sufficiently small such that, for every vertex, the probability of making an error when applying Algorithm~\ref{alg:refine} is $o(1/n)$. A union bound over all vertices then yields Exact Recovery, and consequently Theorem~\ref{theorem: achiev}.

\section{Numerical Analysis}

Although Algorithms~\ref{alg:exact-recovery} and
\ref{alg:phase2_block} are proved under the same sufficient assumptions,
their propagation mechanisms are different.
In Region B, the witness-connectivity condition fails, so no choice of $\chi$ satisfying the block-construction condition in \eqref{eq:condition_chi_0} is possible.
The discretization-free propagation rule remains well defined, so 
we investigate numerically whether this additional connectivity condition also appears to be necessary.
We consider a model in dimension $d=2$, $r=3$ and three communities with prior $\pi \coloneqq \left(\tfrac12,\tfrac3{10},\tfrac15\right)$. For $p\in(0,1)$ and $a,b \in\{1,2,3\}$, we define the distance-dependent edge-weight distributions
\begin{equation}
\label{eq:numerical-model}
P_{ab}(y)\coloneqq 
\begin{cases}
\operatorname{Ber}(p),
& a=b,\quad y\in[0,r/3)\cup(2r/3,r],\\
\operatorname{Ber}(1-p),
& a\neq b,\quad y\in[0,r/3)\cup(2r/3,r],\\
\operatorname{Ber}(1/2),
& y\in[r/3,2r/3],
\end{cases}
\end{equation}
where $\operatorname{Ber}(q)$ denotes Bernoulli distribution with parameter $q\in [0,1]$. Thus, observations are informative at short and long distances, while the intermediate range carries no community information.
This simple model captures both generalizations considered in this work (i.e. allowing $P_{aw}=P_{bw}$ for some $w \in Z$ and distance-dependent distributions with informative and non-informative intervals).
The connectivity functions coincide on a non-trivial interval, and for each
pair $a\neq b$, only communities $a$ and $b$ are witnesses.
Consequently, $\ell=r/3$ and $\pi_{\mathrm{wit}}=\min_{a\neq b}\pi_a+\pi_b=1/2$. 
We define $D_+(p) \coloneqq \min_{a\neq b}D_+(P_a\|P_b) = \tfrac13\left(1-2\sqrt{p(1-p)}\right)$. The exact-recovery threshold is therefore
$\lambda\nu_d r^dD_+(p)=1$, the witness-connectivity threshold is $\lambda\nu_d\ell^d\pi_{\mathrm{wit}}=1$, so we get a non-empty Region~B as illustrated in Figure~\ref{fig: new constraint}.

For the experiment, we fix the scaling parameter to $n=10\,000$.
For each cell of the $(\lambda,D_+)$ phase diagram, we generate independent GHCM realizations, and apply Algorithm~\ref{alg:exact-recovery}.
To isolate the propagation phase, the initial box $V \cap B_{\mathrm{init}}$ is labeled according to the ground truth, this corresponds on average to about $3.3\%$ of the vertices. This can be viewed as conditioning on successful initialization.
We use an adaptive grid, with increased resolution near the exact-recovery
boundary. Cells intersecting the boundary are recursively subdivided up to
five levels. We use $20$ independent trials for cells along the boundary and $10$ trials elsewhere. Since the three prior probabilites are distinct, $\Omega_{\pi,P}=\{\mathrm{id}\}$ in this experiment,  where $\mathrm{id}:Z\to Z$ denotes the identity map. Each cell reports the empirical exact-recovery frequency after refinement,
namely $\tfrac1T\sum_{t=1}^{T}\mathds{1}\left\{\widehat{\sigma}_{(t)}=
\sigma^{*}_{(t)}\right\}$ where $T$ is the number of independent trials in that cell and $\widehat{\sigma}_{(t)}$ and $\sigma^{*}_{(t)}$ are the estimated and ground-truth labeling for trial $t$.

\begin{figure}[H]
    \centering
    \includegraphics[width=\linewidth]{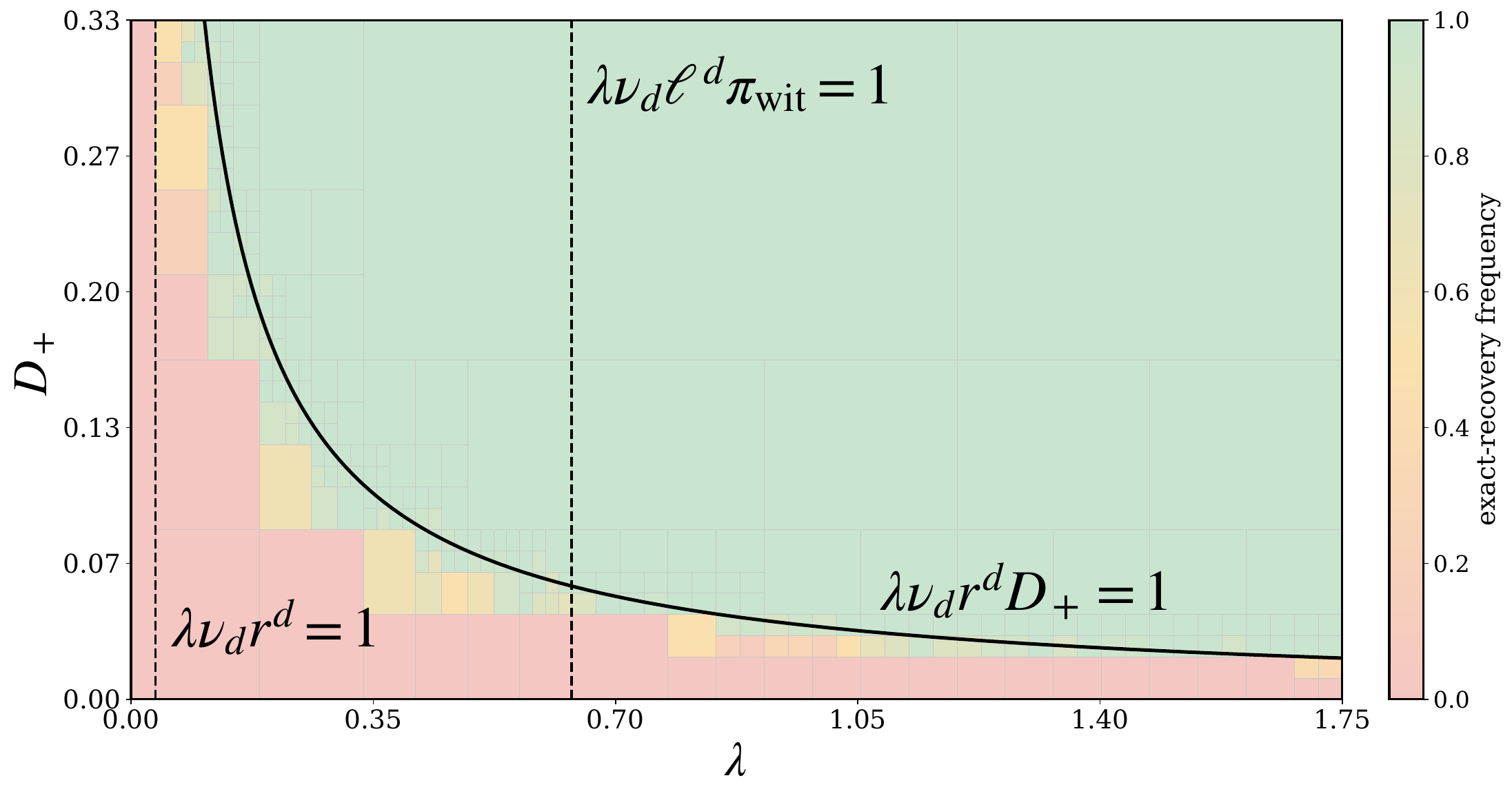}
    \caption{
    Empirical exact-recovery frequency of the discretization-free     Algorithm \ref{alg:exact-recovery} followed by refinement.
    The solid curve is the information-theoretic threshold
    $\lambda\nu_d r^dD_+=1$.
    The left dashed line is the visibility-graph connectivity threshold
    $\lambda\nu_d r^d=1$, while the right dashed line is the
    witness-connectivity threshold
    $\lambda\nu_d\ell^d\pi_{\mathrm{wit}}=1$ required by the block
    construction.
    The grid is refined near the information-theoretic boundary.
    }
    \label{fig:greedy-heatmap}
\end{figure}

Figure~\ref{fig:greedy-heatmap} shows that the empirical transition of
Algorithm \ref{alg:exact-recovery} occurs close to the information-theoretic boundary and extends well
inside Region~B. In particular, Algorithm \ref{alg:exact-recovery} achieves exact recovery in a regime where the
sufficient witness-connectivity condition used by the block construction is
violated. This suggests that the additional connectivity condition may be a limitation of the block-based propagation mechanism rather than an intrinsic barrier for discretization-free propagation.

\section{Discussion and Open Problems}

In this paper, we established exact recovery for two polynomial-time
algorithms under substantially weaker propagation assumptions than those
of \cite{gaudio2026jan}. Our analysis replaces the global distinctness
conditions used in previous propagation arguments by the weaker
witness-connectivity condition of Assumption~\ref{ass:identifiability}.
We establish this guarantee both for a discretization-free greedy
propagation procedure and for a block-based algorithm that aggregates
information across multiple previously labeled blocks.

This condition yields two regimes, illustrated in Figure~\ref{fig: new constraint}. When the information-theoretic condition implies witness connectivity, our algorithm achieves exact recovery down to the information-theoretic threshold, which is known to be necessary. In the remaining intermediate regime, the information-theoretic condition holds but our witness-connectivity condition does not. Whether polynomial-time exact recovery is possible in this regime without the assumptions of \cite{gaudio2026jan} remains open. Moreover, our numerical experiments suggest that the discretization-free algorithm itself succeeds beyond the range covered by our analysis. Proving exact recovery for this algorithm beyond our current theoretical guarantee is therefore an important open problem.

A second open direction is to remove Assumption~\ref{ass:bounded-ll}. This would allow edge-weight distributions with different supports, but creates difficulties because propagation relies on estimated labels. A single incorrectly labeled vertex can, in models with vanishing densities, assign zero likelihood to the correct community and thereby disrupt subsequent propagation. Addressing this issue likely requires a more robust propagation rule or an explicit treatment of uncertainty in previously estimated labels. While analogous boundedness assumptions can be avoided in the Gaussian setting of \cite{gaudio2025incguan}, the general case remains open.

\section*{Acknowledgment}
Part of this work was carried out during an internship of RS at Leiden University in summer 2026. Financial support from the Drs.~Kuikenga Fonds voor Mathematici (grant no.~W253098-1-035) is gratefully acknowledged.

\section*{AI statement}
OpenAI's ChatGPT was used to explore proof ideas and to assist with language polishing. All mathematical arguments and details were independently verified by the authors, who take full responsibility for their correctness.

\bibliographystyle{plainnat}
\bibliography{reference}
\newpage
\appendix


\section{Initialization, Propagagtion and Refinement algorithms}
\setcounter{algorithm}{0}
\renewcommand{\thealgorithm}{\arabic{algorithm}}

\begin{algorithm}[H]
\caption{Maximum a Posteriori}
\label{alg:map}
\begin{algorithmic}[1]
\Statex \textbf{Input:} Graph $G$ and vertex set $S\subset V$.
\Statex \textbf{Output:} An estimated labeling $\widehat\sigma_S:S\to Z$.

\State Set
\[
    \widehat\sigma_S
    =
    \argmax_{\sigma:S\to Z}
    \mathbb P(\sigma_S^*=\sigma\mid G).
\]
\end{algorithmic}
\end{algorithm}

\begin{algorithm}[H]
\caption{Propagate}
\label{alg:propagate}
\begin{algorithmic}[1]
\Statex \textbf{Input:} Graph $G$,
$S, S'$ disjoint mutually vertex sets and a labeling $\widehat\sigma_{S}:S\to Z$.
\Statex \textbf{Output:} An estimated labeling $\widehat\sigma_{S'}: S'\to Z$.
\State Set
    \[
        \widehat\sigma_{S'}(u)
        =
        \argmax_{b\in Z}
        \sum_{v\in S: u \visible v}
        \log p_{\hat\sigma_S(v)b}(x_{uv};\|u-v\|).
    \]

\end{algorithmic}
\end{algorithm}

\begin{algorithm}[H]
\caption{Refine}
\label{alg:refine}
\begin{algorithmic}[1]
\Statex\textbf{Input:} $G\sim \mathrm{GHCM}(\lambda,n,r,\pi,P(\cdot),d)$, a vertex $u\in  V$,
and a labeling $\widehat\sigma: V\to Z\sqcup\{*\}$.
\Statex \textbf{Output:} An estimated labeling $\widetilde\sigma :  V \to Z$.

\State Set
\[
    \widetilde\sigma(u)
    =
    \argmax_{a\in Z}
    \sum_{v\in  V:u\visible v, \hat \sigma(v) \neq *}
    \log\left(p_{a,\widehat\sigma(v)}(x_{uv};\|u-v\|)\right).
\]

\end{algorithmic}
\end{algorithm}

\section{Preliminaries}
\label{app: appendix}

The following Mecke equation is a standard tool in the analysis of Poisson processes. For a proof see \cite[Theorem 4.1]{lp}.

\begin{proposition}[Mecke equation] \label{thm:mecke}
    For any measurable $f\ge 0$,
    \begin{equation}  
    \E \Big[\sum _{v \in V} f(v,V)\Big] = \lambda \int_{\mathcal S_{d,n}} \E[ f(v,V\cup \{v\})]
\,\mathrm d v.
\end{equation}
\end{proposition}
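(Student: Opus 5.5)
The statement to prove is the Mecke equation (Proposition~\ref{thm:mecke}) for the homogeneous Poisson process $V$ of intensity $\lambda$ on $\mathcal S_{d,n}$. I will verify it first for product-type functionals and then extend by a monotone class argument. The key structural input is the explicit representation of $V$ on the bounded window $\mathcal S_{d,n}$. Let $N=|V|\sim\mathrm{Poisson}(\lambda n)$, and, conditionally on $N=m$, let the points $U_1,\dots,U_m$ be i.i.d.\ uniform on $\mathcal S_{d,n}$. Since the window has finite volume $n$, this representation is available directly.

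\textbf{Step 1: direct computation.} For measurable $f\ge 0$, I condition on $N$ and use exchangeability:
\[
\E\Big[\sum_{v\in V} f(v,V)\Big]=\sum_{m\ge1}e^{-\lambda n}\frac{(\lambda n)^m}{m!}\, m\, \E\big[f(U_1,\{U_1,\dots,U_m\})\big].
\]
Next I write $\E[f(U_1,\cdot)]=n^{-1}\int_{\mathcal S_{d,n}}\E[f(x,\{x,U_2,\dots,U_m\})]\,dx$ and substitute $j=m-1$. This gives
\[
\lambda\int_{\mathcal S_{d,n}}\sum_{j\ge0}e^{-\lambda n}\frac{(\lambda n)^j}{j!}\E\big[f(x,\{x,U_1,\dots,U_j\})\big]\,dx .
\]
The inner sum is exactly $\E[f(x,V\cup\{x\})]$, because conditionally on $|V|=j$ the set $V\cup\{x\}$ is distributed as $\{x,U_1,\dots,U_j\}$. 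Exchanging sum and integral is justified by Tonelli, since $f\ge0$. The factor $m/m!=1/(m-1)!$ is what makes the index shift work.

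\textbf{Step 2: measurability and generality.} The computation above already applies to every measurable $f\ge0$ on $\mathcal S_{d,n}\times\mathbf N$, where $\mathbf N$ is the space of finite counting measures with its standard $\sigma$-field. So no monotone class extension is actually needed once I check that $(x,u_1,\dots,u_m)\mapsto f(x,\{x,u_1,\dots,u_m\})$ is jointly measurable. This holds because the map $(u_1,\dots,u_m)\mapsto\sum_i\delta_{u_i}$ is measurable into $\mathbf N$. Both sides may equal $+\infty$, and the identity still holds in $[0,\infty]$.

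\textbf{Main obstacle.} The only delicate points are bookkeeping. First, the multiset-versus-set issue: ties among the $U_i$ have probability zero. Second, measurability of $f(x,V\cup\{x\})$ jointly in $(x,\omega)$. Alternatively, I could simply cite \cite[Theorem 4.1]{lp}, which covers general $\sigma$-finite intensities. The concrete finite-window computation above is self-contained and suffices for all later uses in the paper.
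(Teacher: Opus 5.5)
Your argument is correct, but it is not the paper's route: the paper gives no proof and simply cites \cite[Theorem 4.1]{lp}, which is the Mecke equation for an arbitrary $s$-finite intensity measure on a general measurable space, together with the converse characterisation of Poisson processes.

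Your proof is a direct, self-contained computation. You use the mixed binomial representation $V\overset{d}{=}\{U_1,\dots,U_N\}$ with $N\sim\mathrm{Poisson}(\lambda n)$, then exchangeability and Tonelli. The key identity is $m\,\frac{(\lambda n)^m}{m!}\cdot\frac1n=\lambda\,\frac{(\lambda n)^{m-1}}{(m-1)!}$. This yields the formula for every measurable $f\ge 0$ at once. As you observe in Step 2, no monotone class argument is needed, so the opening sentence announcing one should be dropped.

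What your route buys is a short, elementary argument that relies only on the intensity being finite on the window $\mathcal S_{d,n}$. What the citation buys is generality, and the paper actually relies on it. The later applications of Proposition~\ref{thm:mecke} are \eqref{eq:Asigma_mecke}, the bound on $\mathbb P(\mathcal G_{\mathrm{shell}}^c)$, and the refinement step. Each of these involves the labels $\sigma^*$ and, in the first and last, also the edge weights. That is, they concern the marked process on $\mathcal S_{d,n}\times Z$ with intensity $\lambda\,\mathrm dv\otimes\pi$, which the cited theorem covers directly.

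So your closing claim that the computation ``suffices for all later uses'' needs a one-line extension. Replace the $U_i$ by i.i.d.\ pairs $(U_i,\sigma_i)$ with $\sigma_i\sim\pi$ independent of $U_i$. First integrate out the edge weights conditionally on the marked configuration, which gives a measurable function of that configuration. Your computation then goes through verbatim and produces exactly the factor $\sum_{a}\pi_a$ appearing in \eqref{eq:Asigma_mecke}.
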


The following lemma provides a standard concentration inequalities. The statement is taken from \cite{mitzenmacher} and its proof can be found therein.

\begin{lemma}[Poisson Chernoff bound]
\label{lemma:poisson_concentration}
Let $X\sim \operatorname{Poisson}(\lambda)$ for some $\lambda>0$. Then we have
\begin{align*}
    &\P(X  \ge x) \le \frac{(e \lambda)^x e^{-\lambda}}{x^x},\qquad x > \lambda,\\
    &\P(X  \le x) \le \frac{(e \lambda)^x e^{-\lambda}}{x^x},\qquad x < \lambda.
\end{align*}
\end{lemma}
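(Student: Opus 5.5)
The plan is the standard exponential-moment (Chernoff) argument. It rests on the explicit moment generating function of a Poisson variable: for every $t\in\R$,
\[
\E[e^{tX}]=\sum_{j\ge 0}e^{tj}\frac{\lambda^j e^{-\lambda}}{j!}=\exp\bigl(\lambda(e^t-1)\bigr).
\]
Both tail bounds then follow from Markov's inequality applied to $e^{tX}$, with $t$ of the appropriate sign, followed by optimizing over $t$.

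For the upper tail, fix $x>\lambda$ and any $t>0$. Since $y\mapsto e^{ty}$ is increasing, Markov's inequality gives
\[
\P(X\ge x)=\P(e^{tX}\ge e^{tx})\le e^{-tx}\exp\bigl(\lambda(e^t-1)\bigr).
\]
The exponent $\lambda(e^t-1)-tx$ is minimized at $e^t=x/\lambda$. This choice is admissible, since $t=\log(x/\lambda)>0$ exactly because $x>\lambda$. Substituting it yields
\[
\P(X\ge x)\le \Bigl(\frac{\lambda}{x}\Bigr)^x e^{x-\lambda}=\frac{(e\lambda)^x e^{-\lambda}}{x^x}.
\]

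For the lower tail, fix $x<\lambda$ and $s>0$. Now $y\mapsto e^{-sy}$ is decreasing, so
\[
\P(X\le x)=\P(e^{-sX}\ge e^{-sx})\le e^{sx}\exp\bigl(\lambda(e^{-s}-1)\bigr).
\]
The optimal choice is $e^{-s}=x/\lambda$, which is admissible since $s=\log(\lambda/x)>0$ exactly because $x<\lambda$. Substituting gives the identical expression $(\lambda/x)^x e^{x-\lambda}$.

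Two small edge cases need attention. For $x=0$ in the lower tail, use the convention $0^0=1$; then the bound reads $e^{-\lambda}=\P(X=0)$ and holds trivially. Also, $x$ need not be an integer, but Markov's inequality does not require this.

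There is no real obstacle here. The only points requiring care are checking that the optimizing $t$ has the correct sign in each regime, which is precisely what the conditions $x>\lambda$ and $x<\lambda$ guarantee, and the degenerate case $x=0$.
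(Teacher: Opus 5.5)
Your proof is correct and is the standard exponential-moment argument: Markov's inequality applied to $e^{tX}$ (or $e^{-sX}$) with the Poisson moment generating function, optimized at $e^{t}=x/\lambda$. The paper gives no proof of its own and simply cites Mitzenmacher--Upfal, whose proof is this same computation; your checks on the sign of the optimizing parameter and on the case $x=0$ are the right ones.
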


Lemma \ref{lemma:poisson_concentration} allows us to find an upper bound for the number of nodes in a block.

\begin{corollary}\label{cor:poisson_upper_concentration}
    There exists a constant $\Delta > 0$ such that, with high probability, for all blocks $B_i$, we have $V(B_i) \leq \Delta \log n$.
\end{corollary}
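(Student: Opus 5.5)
We must prove Corollary \ref{cor:poisson_upper_concentration}: there is a constant $\Delta>0$ such that, with high probability, every block satisfies $V(B_i)\le \Delta\log n$. The plan is a Poisson tail bound for a single block combined with a union bound over all blocks.

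First, fix a single block $B_i\in\mathcal P_\chi$. It has volume $\ell^d\chi\log n$. The vertex locations form a Poisson point process of intensity $\lambda$, so $V(B_i)\sim\operatorname{Poisson}(\mu_n)$ with $\mu_n:=\lambda\ell^d\chi\log n$. (If boundary blocks are smaller because $n/(\ell^d\chi\log n)$ is not an integer, their counts are stochastically dominated by such a Poisson variable, so the same bound applies.)

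Next, apply Lemma \ref{lemma:poisson_concentration} with $x=\Delta\log n$, where $\Delta>\lambda\ell^d\chi$ so that $x>\mu_n$. This gives
\[
\P\bigl(V(B_i)\ge \Delta\log n\bigr)\le \Bigl(\frac{e\mu_n}{x}\Bigr)^{x}e^{-\mu_n}
\le \exp\Bigl\{-\Delta\log n\,\bigl(\log\tfrac{\Delta}{\lambda\ell^d\chi}-1\bigr)\Bigr\}.
\]
Choose $\Delta$ large enough that $\Delta\bigl(\log\frac{\Delta}{\lambda\ell^d\chi}-1\bigr)\ge 2$. This is possible because the left-hand side tends to infinity as $\Delta\to\infty$. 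With this choice, each block exceeds $\Delta\log n$ vertices with probability at most $n^{-2}$.

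Finally, there are $\lceil n/(\ell^d\chi\log n)\rceil=O(n/\log n)$ blocks. A union bound gives total failure probability $O(n^{-1}/\log n)=o(1)$, which proves the claim.

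There is no real obstacle here. The only point requiring care is choosing $\Delta$ large enough, independently of $n$, that the per-block tail is $o(1/\#\text{blocks})$; the explicit exponent above makes this immediate.

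The same argument applies to any family of $O(n)$ regions of volume $O(\log n)$, for example visibility neighbourhoods or the region $B_{\mathrm{init}}$, by enlarging $\Delta$. This is the form used later in the propagation analysis.
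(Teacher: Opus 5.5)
Your proof is correct and follows essentially the paper's argument: each block count is $\operatorname{Poisson}(\lambda\chi\ell^d\log n)$, Lemma \ref{lemma:poisson_concentration} is applied at $x=\Delta\log n$, and $\Delta$ is taken large enough that a union bound over the $O(n/\log n)$ blocks gives $o(1)$. The paper does this by noting $\Delta\log(e\lambda\chi\ell^d/\Delta)\to-\infty$, whereas you fix an explicit exponent $\ge 2$. Your stochastic-domination remark for possibly smaller boundary blocks is a small piece of extra care that the paper omits.
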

\begin{proof}
    Note that $V(B_i)$ is Poisson($\lambda \chi \ell^d \log n$)-distributed. By Lemma~\ref{lemma:poisson_concentration} and a union bound, we have for all $\Delta>\lambda \chi \ell^d$, 
    $$
    \mathbb P\left(\bigcup_{i=1}^{\lceil n/(\ell^d \chi \log n)\rceil}  \{V(B_i) > \Delta \log n\}\right) \le 
    O\Big(\frac{n}{\log n}\Big) n^{- \lambda \chi \ell^d} \Big(\frac{e \lambda \chi \ell^d}{\Delta} \Big)^{\Delta \log n}.
    $$
    Since $\Delta \log\Big(\frac{e \lambda \chi \ell^d}{\Delta} \Big) \to -\infty$ as $\Delta \to \infty$, the assertion follows.
\end{proof}

The following lemma will be useful in the proof of exact recovery of the initialization step and almost-exact recovery of the propagation step. It generalizes {\cite[Lemma 5]{gaudio2026jan}}.

\begin{lemma}[Informative interval]
\label{lem: informative interval}
Let $a,b,a',b'\in Z$ and $t\in(0,1)$. Suppose that there exists an
interval $I\subseteq[0,r]$ of positive length such that
\[
P_{ab}(y)\neq P_{a'b'}(y)
\qquad\text{for a.e. }y\in I.
\]
Then, for every $\varepsilon>0$, there exists $\gamma\in(0,1)$ such that
\[
\left|
\left\{
y\in I:
\phi_t\bigl(P_{ab}(y),P_{a'b'}(y)\bigr)
\geq 1-\gamma
\right\}
\right|
<\varepsilon |I|.
\]
\end{lemma}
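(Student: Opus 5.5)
Fix $t\in(0,1)$ and write $f(y):=\phi_t(P_{ab}(y),P_{a'b'}(y))$ for $y\in I$. The plan is to show that the sets $A_\gamma:=\{y\in I: f(y)\ge 1-\gamma\}$ decrease to a null set as $\gamma\downarrow 0$, and then conclude by continuity of Lebesgue measure from above.

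\emph{Step 1 (measurability).} I will check that $f$ is a measurable function of $y$. This should follow from joint measurability of the densities $(x,y)\mapsto p_{ab}(x;y)$ and Tonelli's theorem, since $f(y)$ is the integral of the nonnegative function $p_{ab}(x;y)^t p_{a'b'}(x;y)^{1-t}$ over $x$. This is implicitly assumed in the paper, as it already writes integrals of $\phi_t$ over $y$ in the definition of $D_+$. Measurability makes each $A_\gamma$ a measurable subset of $I$.

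\emph{Step 2 (limit set is null).} The sets $A_\gamma$ are nested: $A_\gamma\subseteq A_{\gamma'}$ whenever $\gamma<\gamma'$. Their intersection along $\gamma_k=1/k$ is
\[
\bigcap_{k\ge 1} A_{1/k}=\{y\in I: f(y)\ge 1\}.
\]
By the weighted AM--GM bound recalled after \eqref{eq:def_phi}, $f\le 1$, with equality if and only if the two densities agree $\mu$-a.e., that is, if and only if $P_{ab}(y)=P_{a'b'}(y)$. The equality case is where the argument needs care: equality in $p^tq^{1-t}\le tp+(1-t)q$ holds pointwise only where $p=q$, and the integrated inequality then forces $p=q$ $\mu$-a.e. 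Given this, the intersection equals $\{y\in I:P_{ab}(y)=P_{a'b'}(y)\}$, which has Lebesgue measure zero by hypothesis.

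\emph{Step 3 (conclusion).} Since $|I|\le r<\infty$, continuity from above gives $|A_{1/k}|\to 0$. Given $\varepsilon>0$, choose $k$ with $|A_{1/k}|<\varepsilon|I|$; this is possible because $|I|>0$. Set $\gamma=1/k\in(0,1)$, taking $k\ge 2$ to keep $\gamma$ strictly inside $(0,1)$.

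The only genuine obstacles are the measurability in Step 1 and the equality case of the AM--GM bound in Step 2. The rest is routine measure theory.
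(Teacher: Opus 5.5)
Your proof is correct and matches the approach the paper relies on. The paper omits the proof and defers to \cite[Lemma 5]{gaudio2026jan}, which uses the same argument. The sets $\{y\in I:\phi_t(P_{ab}(y),P_{a'b'}(y))\ge 1-\gamma\}$ decrease to $\{y\in I:P_{ab}(y)=P_{a'b'}(y)\}$ as $\gamma\downarrow 0$; this set is null by the equality case of weighted AM--GM, and continuity from above on the bounded interval $I$ gives the claim. You also rightly flagged joint measurability of the densities in $(x,y)$ as an implicit assumption of the model.
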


\begin{proof}
    We skip the proof since it is almost identical to the proof of \cite[Lemma 5]{gaudio2026jan}.
\end{proof}

Let $B_r:=\{x \in \R^d:\,\|x\|\le r\}$. For any interval $I \subset [0,r]$ define the annulus
\begin{align}
\mathcal A_{I}:=\Big\{u \in B_r:\,\|u\|\in I \Big\}. \label{def:Aab}
\end{align}

The following corollary is a direct consequence of Lemma \ref{lem: informative interval}. It is the analogue of \cite[Corollary 1]{gaudio2026jan}.

\begin{corollary} \label{cor:D}
    Let $a,b,a',b'\in Z$ and let $I\subseteq[0,r]$ be an interval of positive length such that
    \[
    P_{ab}(y)\neq P_{a'b'}(y)
    \qquad\text{for a.e. }y\in I.
    \]
    Let $D$ denote the distance from the origin of a point chosen uniformly at random from $\mathcal A_I$. Then, for every $\epsilon>0$ and $t\in(0,1)$, there exists $\gamma\in(0,1)$ such that
    \[
    \mathbb P\left(
    \phi_t\bigl(P_{ab}(D),P_{a'b'}(D)\bigr)
    \geq 1-\gamma
    \right)
    \leq \epsilon.
    \]
    The value of $\gamma$ depends only on $\epsilon$, $t$, and $I$.
\end{corollary}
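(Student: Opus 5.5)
The plan is to reduce Corollary~\ref{cor:D} to Lemma~\ref{lem: informative interval} by computing the law of $D$ explicitly. For a point chosen uniformly from the annulus $\mathcal A_I$, polar coordinates show that $D$ has density
\[
f_D(y)=\frac{y^{d-1}\mathbf 1\{y\in I\}}{\int_I s^{d-1}\,ds},
\]
with respect to Lebesgue measure on $[0,r]$. Write $E_\gamma:=\{y\in I:\phi_t(P_{ab}(y),P_{a'b'}(y))\geq 1-\gamma\}$. Then
\[
\mathbb P\bigl(\phi_t(P_{ab}(D),P_{a'b'}(D))\geq 1-\gamma\bigr)=\int_{E_\gamma} f_D(y)\,dy .
\]
Measurability of $y\mapsto\phi_t(P_{ab}(y),P_{a'b'}(y))$ is implicit in the model and is needed for the lemma anyway.

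Next, bound the density uniformly. Write $I$ with endpoints $\alpha<\beta$, where $\beta\le r$. Then $\sup_{y\in I} y^{d-1}\le \beta^{d-1}$ and $\int_I s^{d-1}ds=(\beta^d-\alpha^d)/d>0$, since $I$ has positive length. Hence
\[
f_D\le C_I:=\frac{d\,\beta^{d-1}}{\beta^d-\alpha^d}<\infty
\]
on $I$, and $C_I$ depends only on $I$. It follows that
\[
\mathbb P(\cdots)\le C_I\,|E_\gamma| .
\]

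Now apply Lemma~\ref{lem: informative interval} with $\varepsilon':=\epsilon/(C_I|I|)$. This gives $\gamma\in(0,1)$ with $|E_\gamma|<\varepsilon'|I|$, and therefore $\mathbb P(\cdots)<\epsilon$. The $\gamma$ produced by the lemma depends only on $\varepsilon'$, $t$ and $I$, which are themselves determined by $\epsilon$, $t$ and $I$; this gives the stated dependence.

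There is no real obstacle. The only point needing care is the density bound: for $d\ge2$ with $\alpha=0$, the constant $C_I=d/\beta$ stays finite, and $C_I$ must not blow up, which holds because $|I|>0$. If the pairs $(a,b)$ and $(a',b')$ are to be taken uniform as well, one should note that the lemma's $\gamma$ may depend on them, but since $Z$ is finite one can simply take the minimum over all such pairs.
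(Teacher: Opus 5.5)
Your proposal is correct and follows the argument the paper intends. The paper gives no written proof; it records the corollary as a direct consequence of Lemma~\ref{lem: informative interval}, analogous to \cite[Corollary 1]{gaudio2026jan}. The missing details are exactly your bound $f_D\le C_I=d\beta^{d-1}/(\beta^d-\alpha^d)$ on the density of $D$ over $I$, followed by applying the lemma with $\varepsilon'=\epsilon/(C_I|I|)$. Your side remarks also correctly handle the implicit points: measurability, and taking a minimum of $\gamma$ over the finitely many admissible pairs. That minimum is legitimate because the event $\{\phi_t\ge 1-\gamma\}$ shrinks as $\gamma$ decreases.
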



\section{Exact recovery on the initial set}
In this section we show that the Maximum A Posteriori (MAP) estimator on the initial set achieves exact recovery. To get an efficient algorithm, we restrict to a logarithmic size and take $B_{\mathrm{init}} = [-r (\log n)^{1/d},r (\log n)^{1/d}]^d$ and $V_{\mathrm{init}}:=V \cap B_{\mathrm{init}}$. We follow the strategy from \cite{gaudio2025incguan, gaudio2026jan} and analyze a restricted Maximum Likelihood Estimator (MLE) rather than the MAP itself. We will show that the MLE given by
\begin{equation}
    \bar{\sigma} = \underset{\sigma \in X^{*}_0(\epsilon)}{\arg \max }\sum_{u \in {V_{\mathrm{init}}}}\sum_{v \in {V_{\mathrm{init}}}, v\neq u} \log \left( \bar{p}_{\sigma(u),\sigma(v)}(x_{uv}; \|u-v\|)\right)
\end{equation}
with
\begin{equation}
\label{eq: restr mle}
    X^*_0(\epsilon) := \left\{ \sigma : {V_{\mathrm{init}}} \rightarrow {Z} \ \Big| \ | \{u \in {V_{\mathrm{init}}} : \sigma(u) = j \} | \in \big( (\pi_j - \epsilon)|V_{\mathrm{init}}|, (\pi_j + \epsilon)|V_{\mathrm{init}}| \big) \quad \forall j \in {Z}   \right\}
\end{equation}
achieves exact recovery on $V_{\mathrm{init}}$ with high probability. Since the MAP is optimal, this implies that the MAP achieves exact recovery on $V_{\mathrm{init}}$ with high probability as well.

\begin{proposition}
\label{prop: initial ball}
Let ${G} \sim \mathrm{GHCM}( \lambda,n, r, \pi, P(\cdot), d)$ and let $\bar{\sigma}$ be the restricted MLE. Suppose that Assumptions~\ref{ass:identifiability}, \ref{ass:init-identifiability}  and \ref{ass:bounded-ll} hold. Then,
$\bar{\sigma}$  achieves exact recovery on $V_{\mathrm{init}}$ with high probability.
\end{proposition}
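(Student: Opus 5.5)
We follow the strategy of \cite{gaudio2025incguan, gaudio2026jan}: compare the restricted MLE $\bar\sigma$ with every competitor labeling in $X_0^*(\epsilon)$ that is not a permissible relabeling of $\sigma^*$, and show by a union bound that none beats the truth. Throughout we condition on $V_{\mathrm{init}}$. By Lemma~\ref{lemma:poisson_concentration}, with high probability $|V_{\mathrm{init}}|=\Theta(\log n)$, every community has $(\pi_j\pm\epsilon/2)|V_{\mathrm{init}}|$ members, and every point has $\Theta(\log n)$ vertices of each community within distance $r_{\mathrm{init}}(\log n)^{1/d}$. The second property gives $\sigma^*\in X_0^*(\epsilon)$. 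For the third property, the boundary of the box is not a problem, since every point of $B_{\mathrm{init}}$ sees at least a $2^{-d}$ fraction of its $r_{\mathrm{init}}$-ball inside the box. The random number of points is handled through the Mecke equation (Proposition~\ref{thm:mecke}), so no Poissonization or sampling step is needed.

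\textbf{Step 1: error profile of a competitor.} Fix $\sigma\in X_0^*(\epsilon)$ and let $\omega$ be the permutation maximizing the agreement of $\sigma$ with $\omega\circ\sigma^*$. Suppose $\sigma$ disagrees with $\omega\circ\sigma^*$ on $m\ge1$ vertices. The log-likelihood difference $L(\sigma)-L(\sigma^*)$ is a sum over pairs $\{u,v\}$ of independent terms $\log(\bar p_{\sigma(u)\sigma(v)}/\bar p_{\sigma^*(u)\sigma^*(v)})$. Only pairs whose label assignment is \emph{distinguishable} contribute, namely pairs with $P_{\sigma(u)\sigma(v)}\neq P_{\sigma^*(u)\sigma^*(v)}$. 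The Chernoff bound with $t=1/2$ gives
\[
\P\bigl(L(\sigma)\ge L(\sigma^*)\bigr)\le\prod_{\text{pairs}}\phi_{1/2}\bigl(\bar P_{\sigma(u)\sigma(v)},\bar P_{\sigma^*(u)\sigma^*(v)}\bigr).
\]

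\textbf{Step 2: counting distinguishable pairs.} Two regimes arise.
\begin{itemize}
\item If $m$ is small, say $m\le\eta|V_{\mathrm{init}}|$, then each misclassified $u$ with $\sigma(u)=b\neq c=\omega\circ\sigma^*(u)$ has, within distance $r_{\mathrm{init}}(\log n)^{1/d}$, $\Theta(\log n)$ correctly labelled vertices $v$ of a single community $a$. For these pairs Assumption~\ref{ass:init-identifiability} bounds each factor by $\Phi_{\mathrm{init}}<1$. We would use the form $\phi_{1/2}(P_{ba},P_{ca})$ after relabeling via $\omega$, choosing $a$ appropriately; this is the step where the exact reading of the assumption must be checked.
\item If $m$ is large, the size constraints in $X_0^*(\epsilon)$ together with the maximality of $\omega$ force, as in \cite[Lemma 7]{gaudio2025incguan}, a positive fraction of pairs $u,v$ with $\sigma^*(u)=\sigma^*(v)=a$ but $\sigma(u)=b\neq c=\sigma(v)$. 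Indeed, otherwise $\sigma$ would be close to a permutation of $\sigma^*$. Among these pairs, $\Omega(m\log n)$ lie at distance at most $r_{\mathrm{init}}(\log n)^{1/d}$, and Assumption~\ref{ass:init-identifiability} applies directly to them.
\end{itemize}
In both regimes the product is at most $\Phi_{\mathrm{init}}^{c\,m\log n}=n^{-c' m}$.

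Assumption~\ref{ass:bounded-ll} is used to keep the distinguishable-pair counts concentrated: with bounded log-ratios, we can union over $V_{\mathrm{init}}$ in the Mecke computation without heavy tails.

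\textbf{Step 3: union bound and main obstacle.} The number of competitors with $m$ errors is at most $\binom{|V_{\mathrm{init}}|}{m}k^m\le(Ck\log n)^m$. Summing $(C\log n)^m n^{-c'm}$ over $m\ge1$ gives $o(1)$. Hence $\bar\sigma$ agrees with a permissible relabeling of $\sigma^*$ with high probability.

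Permissibility needs a separate check. Any $\omega$ that leaves the likelihood exactly unchanged must lie in $\Omega_P$. The prior constraint $\pi_i=\pi_{\omega(i)}$ then follows because $X_0^*(\epsilon)$ with $\epsilon$ small excludes permutations of $\sigma^*$ that swap communities with different priors.

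We expect the main obstacle to be the combinatorial Step 2 in the large-$m$ regime. There one must show, deterministically given good spatial concentration in $B_{\mathrm{init}}$, that a labeling far from every permutation of $\sigma^*$ produces $\Omega(m\log n)$ close pairs covered by Assumption~\ref{ass:init-identifiability}. We would first attempt to partition $B_{\mathrm{init}}$ into subcubes of side $\sim r_{\mathrm{init}}(\log n)^{1/d}/\sqrt d$ and apply the pigeonhole argument cube by cube.
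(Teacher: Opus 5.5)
Your treatment of competitors with $1\le m\le\eta|V_{\mathrm{init}}|$ takes a different route from the paper's, and it works. The paper writes $\ell_0(G,\sigma)-\ell_0(G,\sigma^*)=A(\sigma)+B(\sigma)$. It controls $A$ uniformly in $\sigma$ through per-vertex events built on the witness annuli $I_{ab}$ (Assumption~\ref{ass:identifiability}), and bounds the cross term by $2\rho\,d_H(\sigma,\sigma^*)^2$ (Assumption~\ref{ass:bounded-ll}). You instead apply the Chernoff bound to the whole difference, so pairs of two misclassified vertices just contribute factors $\le 1$. The price is a union bound over at most $k!\binom{|V_{\mathrm{init}}|}{m}(k-1)^m$ labelings, which $n^{-c'm}$ beats because $|V_{\mathrm{init}}|=O(\log n)$.

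The step you flag resolves cleanly. For a misclassified $u$ with $\sigma^*(u)=a$, pair it with vertices $v$ having $\sigma^*(v)=a$ and $\sigma(v)=\omega(a)$. The factor is then $\phi_{1/2}(P_{\sigma(u)\omega(a)},P_{aa})$ with $\sigma(u)\neq\omega(a)$, which is exactly Assumption~\ref{ass:init-identifiability}, and it holds for every permutation $\omega$. (The form $\phi_{1/2}(P_{ba},P_{ca})$ is only justified when $\omega\in\Omega_P$.) So your route needs neither Assumption~\ref{ass:identifiability} nor Assumption~\ref{ass:bounded-ll}. Your remark that Assumption~\ref{ass:bounded-ll} keeps pair counts concentrated is beside the point: those counts are Poisson and do not involve edge weights.

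\textbf{Gap 1: the case $m=0$.} Suppose $\pi_{\omega(i)}=\pi_i$ for all $i$ but $\omega\notin\Omega_P$. Then $\omega\circ\sigma^*\in X_0^*(\epsilon)$ is a competitor that is not a permissible relabeling. It is not counted in your sum over $m\ge1$, and Assumption~\ref{ass:init-identifiability} says nothing about it, because no same-community pair is split. Your sentence that a likelihood-preserving $\omega$ must lie in $\Omega_P$ is about equality in law. It does not show $\ell_0(G,\omega\circ\sigma^*)<\ell_0(G,\sigma^*)$ with high probability, yet your Step~3 conclusion already presupposes this. What is needed is the quantitative argument from Case~2 of the paper's high-discrepancy proof:
\begin{itemize}
\item Non-permissibility gives $a,b$ and a set $J^{\omega}_{ab}\subseteq[0,r]$ of positive measure on which $\phi_{1/2}(P_{\omega(a)\omega(b)},P_{ab})\le\Phi_2<1$.
\item With high probability, every $u\in V_{\mathrm{init}}$ has at least $\eta\log n$ vertices of $V_b$ inside $B_{\mathrm{init}}$ at rescaled distance in $J^{\omega}_{ab}$. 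This is the event $\mathcal G_{\mathrm{shell}}$, obtained via the Mecke equation and the fact that a $2^{-d}$ fraction of each ball of radius $r(\log n)^{1/d}$ centred in $B_{\mathrm{init}}$ lies in $B_{\mathrm{init}}$.
\item Hence the Chernoff product has $\Theta((\log n)^2)$ factors $\le\Phi_2$, and the failure probability is at most $C^{(\log n)^2}$ with $C<1$. Only finitely many $\omega$ need this.
\end{itemize}

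\textbf{Gap 2: the geometric step in the large-$m$ regime.} This is the step you correctly call the main obstacle. The reduction from $m>\eta|V_{\mathrm{init}}|$ to a true community $a$ with two predicted labels, each of size $\ge\epsilon'|V_{\mathrm{init}}|$, is the paper's bijection argument and is fine. But a cube-by-cube pigeonhole does not produce close split pairs. If $\sigma$ labels $V_a$ by $b$ on one half of $B_{\mathrm{init}}$ and by $c$ on the other, every subcube can be monochromatic.

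What works is the interface argument of Lemma~\ref{lem:geometric-split}:
\begin{itemize}
\item Take cubes small enough that any two points in the same or in face-adjacent cubes are within $r_{\mathrm{init}}(\log n)^{1/d}$. Side $r_{\mathrm{init}}(\log n)^{1/d}/\sqrt d$ is too large for this.
\item Call a label heavy in a cube if it carries at least $\tau\log n$ points of $V_a$ there. Every cube has one, since with high probability each contains order $\log n$ points of $V_a$.
\item Use connectivity of the face-adjacency graph to find either a cube with two heavy labels or two adjacent cubes with distinct heavy labels.
\end{itemize}
Either way you get $\tau^2(\log n)^2$ split pairs within distance $r_{\mathrm{init}}(\log n)^{1/d}$, each contributing a factor at most $\Phi_{\mathrm{init}}$. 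This beats the $k^{O(\log n)}$ union bound.
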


Our proof of Proposition \ref{prop: initial ball} is an adaptation of the proof of Lemma D.4 in \cite{gaudio2025incguan}.

\begin{lemma}
\label{lem:restr mle proof}
For any $0 < \epsilon < \min_{j \in Z}\pi_{j}/2$, it holds that $\mathbb{P}(\sigma^* \in X_0^*(\epsilon)) = 1 - o(1)$.
\end{lemma}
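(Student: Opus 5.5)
The plan is to condition on the number of points $N:=|V_{\mathrm{init}}|$ and use the thinning property of the Poisson process. Given $N=m$, the labels $\sigma^*(u)$, $u\in V_{\mathrm{init}}$, are i.i.d.\ with law $\pi$, independent of the locations. Hence for each $j\in Z$ the count $N_j:=|\{u\in V_{\mathrm{init}}:\sigma^*(u)=j\}|$ is $\mathrm{Binomial}(m,\pi_j)$ conditionally on $N=m$. Equivalently, unconditionally $N_j\sim\mathrm{Poisson}(\lambda\pi_j|B_{\mathrm{init}}|)$, and these counts are independent across $j$. Since $|B_{\mathrm{init}}|=(2r)^d\log n$, the mean $\mu:=\lambda(2r)^d\log n$ of $N$ tends to infinity.

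The first step is to control $N$ itself. By Lemma~\ref{lemma:poisson_concentration}, applied with $x=(1\pm\eta)\mu$ for a small fixed $\eta>0$, we get $\P(|N-\mu|>\eta\mu)\le 2e^{-c(\eta)\mu}=n^{-c'}$ for some $c'>0$, so that $N\in[(1-\eta)\mu,(1+\eta)\mu]$ with high probability.

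The second step treats each community $j$ separately. We apply Lemma~\ref{lemma:poisson_concentration} to $N_j$ with thresholds $(\pi_j\pm\eta)\mu$, which gives $\P(|N_j-\pi_j\mu|>\eta\mu)\le n^{-c''}$. A union bound over the $k$ communities (a fixed number) shows that, with probability $1-o(1)$, every $N_j$ lies within $\eta\mu$ of $\pi_j\mu$ and simultaneously $N$ lies within $\eta\mu$ of $\mu$.

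On this event we compare with the window required by $X_0^*(\epsilon)$:
\[
|N_j-\pi_jN|\le |N_j-\pi_j\mu|+\pi_j|N-\mu|\le 2\eta\mu\le \frac{2\eta}{1-\eta}N.
\]
Choosing $\eta$ so small that $2\eta/(1-\eta)<\epsilon$, we obtain $N_j\in\bigl((\pi_j-\epsilon)N,(\pi_j+\epsilon)N\bigr)$ for all $j$, that is, $\sigma^*\in X_0^*(\epsilon)$. The assumption $\epsilon<\min_j\pi_j/2$ is not needed for this direction; it only guarantees that the window is nondegenerate.

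There is no real obstacle here. The only point requiring care is that $N$ is random, so the relative error has to be converted into a bound in terms of $N$ rather than $\mu$; intersecting with the concentration event for $N$ handles this. Alternatively, one could condition on $N=m$ and apply a binomial Chernoff bound, using that $m\ge(1-\eta)\mu\to\infty$.
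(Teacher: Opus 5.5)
Your proof is correct and follows the paper's argument. Both use Poisson thinning to get independent Poisson counts $N_j$, Chernoff concentration of $|V_{\mathrm{init}}|$ and of each $N_j$, a union bound over the $k$ labels, and a final conversion from deviations relative to $\mu$ into deviations relative to the random $N$. The only differences are cosmetic: you use additive $\eta\mu$ windows and the triangle inequality, while the paper uses multiplicative $\epsilon/4$ windows and a ratio bound.
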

\begin{proof}
The number of vertices in the initial cube with community label $j$ is an independent Poisson random variable $N_j \sim \text{Poisson}(\lambda \pi_j (2r)^d \log n)$, and $|V_{\mathrm{init}}| = \sum_{j \in  Z} N_j \sim \text{Poisson}(\mu)$ where $\mu := \lambda (2r)^d \log n$. We show that $N_j \in ((\pi_j - \epsilon)|V_{\mathrm{init}}|, (\pi_j + \epsilon)|V_{\mathrm{init}}|)$ for all $j \in Z$ with high probability.

First we show that $|V_{\mathrm{init}}|$ concentrates around $\mu$. By Lemma \ref{lemma:poisson_concentration} and the elementary inequality $(e/(1+\epsilon))^{1+\epsilon}\le e^{1-\epsilon^2/4}$ for all $\epsilon \in (0,1)$,
\begin{align*}
\mathbb{P}\Big(||V_{\mathrm{init}}| - \mu| \geq \frac{\epsilon}{4}\mu\Big) &\leq \mathbb{P}\Big(|V_{\mathrm{init}}| \ge (1+\frac{\epsilon}{4} ) \mu\Big) +  \mathbb{P}\Big(|V_{\mathrm{init}}| \le (1-\frac{\epsilon}{4} ) \mu\Big)\\
&\le 2 e^{-\mu} \Big(\frac{e}{1+\frac \epsilon 4}\Big)^{(1+\epsilon/4)\mu} \le 2 e^{-\mu \epsilon^2/64}= o(1).
\end{align*}

Second, we show that $N_j$ concentrates around $\pi_j \mu$ for each $j \in  Z$,
\[
\mathbb{P}\Big(|N_j - \pi_j \mu| \geq \frac{\epsilon}{4} \pi_j \mu\Big) \leq 2\exp\left(-\frac{\epsilon^2 \pi_j \mu}{64}\right)= o(1).
\]
A union bound over all $j \in  Z$ gives $N_j \in ((1 - \epsilon/4)\pi_j\mu,\; (1 + \epsilon/4)\pi_j\mu)$ for all $j$, with high probability.

Using the intersection of the two events above, we can upper bound the difference with high probability. For each $j \in  Z$
\[
\frac{N_j}{|V_{\mathrm{init}}|} \leq \frac{(1+\epsilon/4)\pi_j\mu}{(1-\epsilon/4)\mu} = \pi_j\cdot\frac{1+\epsilon/4}{1-\epsilon/4} \leq \pi_j + \epsilon,
\]
where the last inequality holds since $\epsilon < \min_{j \in Z}\pi_{j}/2$. The argument for the lower bound $\frac{N_j}{|V_{\mathrm{init}}|} \geq \pi_j - \epsilon$ follows symmetrically. Hence $\sigma^* \in X_0^*(\epsilon)$ with high probability.
\end{proof}

We aim to show that the log-likelihood of any incorrect labeling $\sigma: {V_{\mathrm{init}}} \rightarrow {Z}$ is strictly smaller than that of the true labeling $\sigma^*$.

We distinguish between labelings that are close to the ground truth and those that are significantly different. We formalize this using the Hamming distance up to permissible relabelings, which we term as \textit{discrepancy}.

\begin{definition}[Discrepancy]
The Hamming distance between two labelings $\sigma, \sigma': {V_{\mathrm{init}}} \to {Z}$ is defined by
\begin{equation}
    d_H(\sigma, \sigma') = \sum_{u \in {V_{\mathrm{init}}}} \1(\sigma(u) \neq \sigma'(u)).
\end{equation}
The discrepancy between $\sigma$ and $\sigma'$ is defined as the minimum Hamming distance achieved over the set of permissible relabelings $\Omega_{\pi,P}$:
\begin{equation}
    \text{DISC}(\sigma, \sigma') = \min_{\omega \in \Omega_{\pi,P}} d_H(\omega \circ \sigma, \sigma').
\end{equation}
\end{definition}

A labeling $\sigma$ is considered correct if and only if $\text{DISC}(\sigma, \sigma^*) = 0$. We establish the optimality of the true labeling by showing that the log-likelihood of any incorrect labeling is strictly lower than that of the ground truth.
\begin{proposition}[Low Discrepancy]
\label{prop: ld}
Suppose that Assumptions~\ref{ass:identifiability} and \ref{ass:bounded-ll} hold. There exists a constant $c \in (0,1)$ such that with high probability:
\[
\forall \sigma: {V_{\mathrm{init}}} \to 
{Z} \text{ such that } 0 < \mathrm{DISC}(\sigma, \sigma^*) < c \log n, \text{ we have } \ell_0({G}, \sigma) < \ell_0({G}, \sigma^*).
\]
\end{proposition}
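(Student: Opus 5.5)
Fix $\sigma$ with $0<\mathrm{DISC}(\sigma,\sigma^*)=m<c\log n$, and choose the optimal relabeling $\omega\in\Omega_{\pi,P}$. We may replace $\sigma$ by $\omega^{-1}\circ\sigma$, since permissible relabelings leave $\ell_0$ unchanged in law. Then $\sigma$ differs from $\sigma^*$ on a set $U\subseteq V_{\mathrm{init}}$ with $|U|=m$. The log-likelihood difference $\ell_0(G,\sigma)-\ell_0(G,\sigma^*)$ only involves pairs $\{u,v\}$ with at least one endpoint in $U$ and $u\visible v$. We split it into two parts. The first is pairs with $u\in U$, $v\notin U$; these carry the signal. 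The second is pairs with both endpoints in $U$, of which there are at most $m\cdot\Delta\log n$ by Corollary \ref{cor:poisson_upper_concentration} (applied to visibility balls covered by $O(1)$ blocks). By Assumption \ref{ass:bounded-ll}, the second part is deterministically bounded in absolute value by $2\rho\, m\,\Delta'\log n$ for small neighbourhood counts. A cleaner way is to bound it by $\rho$ times the number of pairs inside $U$ that are visible. Because $|U|<c\log n$, this is at most $\rho\,|U|\cdot\min(|U|,\Delta\log n)\le \rho c\, m\log n$.

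For the cross terms, we group by $u\in U$. Let $a=\sigma^*(u)\neq b=\sigma(u)$. The sum $\sum_{v\visible u,\,v\notin U}\log\frac{\bar p_{b\sigma^*(v)}}{\bar p_{a\sigma^*(v)}}(x_{uv})$ is, conditionally on locations and true labels, a sum of independent terms. A Chernoff bound with exponent $t=1/2$ bounds the probability that it is $\ge -\eta\log n$ by $n^{\eta/2}\prod_v\phi_{1/2}(\bar P_{a\sigma^*(v)},\bar P_{b\sigma^*(v)})$.

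Averaging over the Poisson configuration via the Mecke equation (Proposition \ref{thm:mecke}) gives an exponent of order $-\lambda\log n\int_{B_r}\sum_w\pi_w(1-\phi_{1/2}(P_{aw},P_{bw}))$. This is at least $c_0\log n$ for some $c_0>0$ by Assumption \ref{ass:identifiability} together with Corollary \ref{cor:D}. On $I_{ab}$ the witnesses $w\in W_{ab}(I_{ab})$ have $\phi_{1/2}<1-\gamma$ except on a set of small measure. The key point is that the per-vertex gains are nearly independent across $u\in U$: after removing the pairs inside $U$, the edge sets $\{x_{uv}:v\notin U\}$ for different $u$ are disjoint. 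Hence, conditionally on positions and labels, the total cross sum is a sum of independent variables. Its moment generating function at $t=1/2$ factorizes, and the tail bound becomes $\exp(-m(c_0-\eta)\log n)$ on the high-probability event that every visibility ball contains roughly its expected number of witness vertices.

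Finally, we take a union bound over all $\sigma$ with discrepancy $m$. There are at most $\binom{|V_{\mathrm{init}}|}{m}k^m\le (Ck\log n)^m=\exp(m\log\log n+O(m))$ such labelings. The failure probability at each $m$ is then $\exp(-m(c_0-\eta-\rho c)\log n+o(m\log n))$. Summing over $1\le m<c\log n$ gives $o(1)$ once $c$ and $\eta$ are small relative to $c_0/\rho$. The main obstacle is the uniform lower bound $c_0$ on the per-vertex divergence within $V_{\mathrm{init}}$. Some $u\in U$ lie near the boundary of $B_{\mathrm{init}}$, so only part of their visibility ball lies inside. Moreover, Assumption \ref{ass:identifiability} only guarantees informative witnesses in an annulus $\mathcal A_{I_{ab}}$. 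I would handle this by noting that since $B_{\mathrm{init}}$ has side $2r(\log n)^{1/d}$, every point of it has at least a $2^{-d}$ orthant of its visibility ball, including the annulus, inside $B_{\mathrm{init}}$. This still yields a positive constant $c_0$ depending only on $\lambda,\pi,\ell,\gamma$. The needed concentration of witness counts in these orthant-annuli then comes from Lemma \ref{lemma:poisson_concentration} plus a union bound over $O(\log n)$ vertices.
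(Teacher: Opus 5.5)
Your proof is correct, but it takes a different route from the paper's. The paper never takes a union bound over labelings. It writes $\ell_0(G,\sigma)-\ell_0(G,\sigma^*)=A(\sigma)+B_1(\sigma)+B_2(\sigma)$ with $A(\sigma)=B_1(\sigma)=\sum_{u\in U}S_u^{\sigma^*(u)\sigma(u)}$. Here $S_u^{ab}$ is the log-likelihood ratio of $u$ computed against the \emph{true} labels of all of $V_{\mathrm{init}}$, including the vertices in $U$. The term $B_2$ collects the second-order corrections on pairs inside $U$, so $|B_2|\le 2\rho\,d_H^2$. Since $S_u^{ab}$ does not depend on $\sigma$, one Chernoff-plus-Mecke estimate shows that, with high probability, $S_u^{ab}\le -c_1\log n$ for all $u\in V_{\mathrm{init}}$ and all $b\neq\sigma^*(u)$. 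This needs a union bound over only $O(\log n)$ vertices and $k-1$ labels. On that single event, $\ell_0(G,\sigma)-\ell_0(G,\sigma^*)\le -2d_Hc_1\log n+2\rho d_H^2$ holds deterministically for every $\sigma$ at once.

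Your signal instead comes only from edges with exactly one endpoint in $U$, and this set depends on $\sigma$. That forces a union bound over the $(Ck\log n)^m$ labelings of discrepancy $m$. You pay for it by factorizing the moment generating function across $u\in U$ (the edge sets are disjoint), which gives $n^{-\Omega(m)}$ and beats the entropy. This energy-versus-entropy argument works and needs only a high-probability event on positions and labels. It is heavier, however; the paper's true-label decomposition gives uniformity over $\sigma$ for free.

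Three small points need tidying in your version. First, each $u\in U$ loses the contribution of the vertices in $U$ from its neighbour sum, and you should state the cost. Assumption~\ref{ass:bounded-ll} gives $\phi_{1/2}\ge e^{-\rho/2}$, so the total loss is at most $\rho m^2/2\le \rho c\,m\log n/2$, which is absorbed by taking $c$ small. Second, the reduction via the optimal $\omega\in\Omega_{\pi,P}$ uses the pathwise identity $\ell_0(G,\omega\circ\tau)=\ell_0(G,\tau)$, holding almost surely, not merely equality in law. Third, the vertices are random, so the "union bound over $O(\log n)$ vertices" for the witness counts should go through the Mecke equation, as in the paper's treatment of $N_u^{ab}$.
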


\begin{proposition}[High Discrepancy]
\label{prop: hd}
Suppose that Assumption~\ref{ass:init-identifiability} holds. 
Fix any constant $c \in (0,1)$. Then with high probability:
\[
\forall \sigma \in X_0^*(\epsilon)\text{ such that } \mathrm{DISC}(\sigma, \sigma^*) \geq c \log n, \text{ we have } \ell_0({G}, \sigma) < \ell_0({G}, \sigma^*).
\]
\end{proposition}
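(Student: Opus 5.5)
We prove Proposition~\ref{prop: hd} by a union bound over all labelings $\sigma\in X_0^*(\epsilon)$ with $\mathrm{DISC}(\sigma,\sigma^*)\ge c\log n$, combined with a Chernoff (Bhattacharyya) bound on the log-likelihood ratio. Throughout we condition on the Poisson locations and on the high-probability events $|V_{\mathrm{init}}|\le C\log n$ (Corollary~\ref{cor:poisson_upper_concentration}, applied to $B_{\mathrm{init}}$) and $\sigma^*\in X_0^*(\epsilon)$ (Lemma~\ref{lem:restr mle proof}). The number of candidate labelings is then at most $k^{C\log n}=n^{C\log k}$. It therefore suffices to show, for each fixed such $\sigma$,
\[
\P\bigl(\ell_0(G,\sigma)\ge \ell_0(G,\sigma^*)\bigr)\le n^{-C'} \qquad\text{with } C'>C\log k .
\]
Since $c$ is an arbitrary fixed constant, we must obtain a bound of the form $\exp(-\Theta(\log^2 n))$. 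Our plan is to show that the number of informative pairs is of order $\log^2 n$, not merely $\log n$.

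\textbf{Markov bound.} By Markov's inequality with exponent $1/2$ and independence of the edges given the labels and locations,
\[
\P\bigl(\ell_0(\sigma)\ge\ell_0(\sigma^*)\bigr)\le \prod_{\{u,v\}}\phi_{1/2}\bigl(\bar P_{\sigma(u)\sigma(v)}(\|u-v\|),\bar P_{\sigma^*(u)\sigma^*(v)}(\|u-v\|)\bigr).
\]
Every factor is at most $1$. By Assumption~\ref{ass:init-identifiability}, a factor is at most $\Phi_{\mathrm{init}}$ whenever $\|u-v\|\le r_{\mathrm{init}}(\log n)^{1/d}$, $\sigma^*(u)=\sigma^*(v)=a$ and $\sigma(u)\neq\sigma(v)$ (take $b=\sigma(u)$, $c=\sigma(v)$). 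It therefore suffices to lower bound by $c''\log^2 n$ the number $M(\sigma)$ of such \emph{split pairs}.

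\textbf{Counting split pairs.} Partition $B_{\mathrm{init}}$ into a constant number of cells of diameter at most $r_{\mathrm{init}}(\log n)^{1/d}$. With high probability each cell contains $\Theta(\log n)$ vertices of every true community; this follows from Lemma~\ref{lemma:poisson_concentration} and a union bound over the $O(1)$ cells. Within a cell, let $n_{a,b}$ be the number of vertices of true community $a$ assigned label $b$. The number of split pairs in that cell is then
\[
\sum_a\sum_{b<b'}n_{a,b}\,n_{a,b'}.
\]
Suppose this is $o(\log^2 n)$ in every cell. Then in each cell and for each $a$, all but $o(\log n)$ vertices of true class $a$ share a single label $\tau_{\mathrm{cell}}(a)$.

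The \emph{main obstacle} is to turn this into a contradiction with $\mathrm{DISC}\ge c\log n$. Two things must be shown. First, $\tau$ must be the same map across cells. Second, $\tau$ must be a permissible relabeling, which requires injectivity together with $P_{ij}=P_{\tau(i)\tau(j)}$ and $\pi_i=\pi_{\tau(i)}$.

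For consistency across cells, we use overlapping cells. Arranging cells so that adjacent cells share a sub-cell containing $\Theta(\log n)$ vertices of each class forces $\tau$ to agree on overlaps, and hence to be global.

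For injectivity, note that if $\tau(a)=\tau(a')$ with $a\neq a'$, the label counts would violate the constraint of $X_0^*(\epsilon)$ for $\epsilon$ small. We would also need to choose $\epsilon$ small enough to exclude non-permissible permutations via the prior proportions. However, permutations $\tau$ with $\pi_{\tau(i)}=\pi_i$ that are \emph{not} in $\Omega_P$ are not excluded by counting. For those we fall back on the Markov bound from the previous step, now applied to pairs with different true labels. For such a $\tau$ there exist $i,j$ with $P_{ij}\neq P_{\tau(i)\tau(j)}$ on a set of positive measure. By Corollary~\ref{cor:D}, a positive fraction of the $\Theta(\log^2 n)$ pairs $(u,v)$ with classes $(i,j)$, both labeled correctly up to $\tau$, have Bhattacharyya factor at most $1-\gamma$. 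Such pairs exist because $\mathrm{DISC}<c\log n$ fails only through the $o(\log n)$ exceptions, and the bulk still follows $\tau$.

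Hence in all cases
\[
\log\P\bigl(\ell_0(\sigma)\ge\ell_0(\sigma^*)\bigr)\le -c'''\log^2 n,
\]
which beats the $n^{O(1)}$ union bound and completes the proof.
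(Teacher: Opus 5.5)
Your proposal is correct and is essentially the paper's argument. It uses the same Bhattacharyya--Markov bound against a $k^{O(\log n)}$ union bound, and the same dichotomy. In one case there are $\Theta((\log n)^2)$ short-range split pairs inside a true community, handled by Assumption~\ref{ass:init-identifiability}. In the other there is an approximate global permutation $\tau$ that counting forces to be bijective and prior-preserving, hence non-permissible; this is handled by $\Theta((\log n)^2)$ pairs at rescaled distances where $P_{ij}$ and $P_{\tau(i)\tau(j)}$ differ. Your per-cell split-pair threshold with overlapping cells is Lemma~\ref{lem:geometric-split} inlined, with the minor bonus that $\epsilon$ need not depend on $c$.

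Two steps need tightening. The step you leave implicit, that uniformly over $u\in V_{\mathrm{init}}$ there are $\Theta(\log n)$ class-$j$ vertices at rescaled distance in the informative set, is exactly the paper's event $\mathcal G_{\mathrm{shell}}$, obtained from the Mecke equation plus a Poisson Chernoff bound. Since that informative set need only have positive measure, you should extract $\{y:\phi_{1/2}(P_{\tau(i)\tau(j)}(y),P_{ij}(y))\le\Phi_2\}$ directly rather than cite Corollary~\ref{cor:D}, which is stated for intervals.
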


Throughout this section, we fix $t=1/2$. Recall the definition of $\ell$ and $\pi_{\mathrm{wit}}$ from \eqref{def:ell}. For any $a \neq b$ we fix some interval $I_{ab} \subset [0,r]$ with $|I_{ab}| = \ell $ such that $\sum_{w \in W_{ab}(I_{ab})}\pi_w \ge \pi_{\mathrm{wit}}$ and  for any $w \in W_{ab}(I_{ab})$ and almost all $y \in I_{ab}$,
$$
\phi_{1/2}(P_{aw}(y),P_{bw}(y))<1.
$$
The existence of $I_{ab}$ is guaranteed by Assumption. \ref{ass:identifiability}.

By Lemma \ref{lem: informative interval}, for any $\epsilon>0$ there exists some $\Phi <1$ such that for any $w \in W_{ab}(I_{ab})$,
\[
\Big|\Big\{y \in I_{ab}: \phi_{1/2}(P_{aw}(y),P_{bw}(y)) < \Phi\Big\}\Big| \ge (1-\epsilon) |I_{ab}|=(1-\epsilon) \ell.
\]
For $u \in B_{\mathrm{init}}$ let
\begin{align}
A_{ab}(u):=\{x \in B_{\mathrm{init}}:\,\frac{\|x-u\|}{(\log n)^{1/d}} \in I_{ab}\}.\label{def:Aabu}
\end{align}
By Lemma~\ref{lem: informative interval} and the geometry of
$B_{\mathrm{init}}$, there exist constants
$\Phi_1<1$ and $\kappa_0>0$ such that, uniformly over
$a\neq b$, $w\in W_{ab}(I_{ab})$, and
$u\in B_{\mathrm{init}}$,
\begin{align}
\left|
\left\{
x\in A_{ab}(u):
\phi_{1/2}\left(
\bar P_{aw}(\|x-u\|),
\bar P_{bw}(\|x-u\|)
\right)
<\Phi_1
\right\}
\right|
\geq
\kappa_0\log n.
\label{eq:Aab-informative-volume}
\end{align}

Let
\[
\mathfrak R
:=
\left\{
(\omega,a,b):
\begin{array}{l}
\omega:Z\to Z\text{ is a permutation},\\
\pi_{\omega(i)}=\pi_i\ \text{for all }i\in Z,\\
P_{ab}\not\equiv P_{\omega(a)\omega(b)}
\end{array}
\right\}.
\]
For every $(\omega,a,b)\in\mathfrak R$, there exists
$\Phi_{ab}^{\omega}<1$ such that
\[
\left|
\left\{
y\in[0,r]:
\phi_{1/2}\left(
P_{\omega(a)\omega(b)}(y),P_{ab}(y)
\right)
\leq\Phi_{ab}^{\omega}
\right\}
\right|>0.
\]
Since $\mathfrak R$ is finite, we may set
\[
\Phi_2
:=
\max_{(\omega,a,b)\in\mathfrak R}
\Phi_{ab}^{\omega}<1.
\]
For $(\omega,a,b)\in\mathfrak R$, let
\[
J_{ab}^{\omega}
:=
\left\{
y\in[0,r]:
\phi_{1/2}\left(
P_{\omega(a)\omega(b)}(y),
P_{ab}(y)
\right)
\leq\Phi_2
\right\}.
\]
Then $|J_{ab}^{\omega}|>0$ for every
$(\omega,a,b)\in\mathfrak R$.
For measurable $J \subseteq [0,r]$ let 
\[
A_J(u)
:=
\left\{
x\in B_{\mathrm{init}}:
\frac{\|x-u\|}{(\log n)^{1/d}}\in J
\right\}.
\]

There is a determininistic constant $\kappa>0$, uniform over all $(\omega,a,b) \in \mathfrak R$ and all $u \in B_{\mathrm{init}}$, such that
\[
|A_{J_{ab}^{\omega}}|\ge \kappa \log n.
\]
Indeed, we may choose
\[
\kappa=\frac{d \nu_d}{2^d} \min_{(\omega,a,b) \in \mathfrak R} \int_{J_{ab}^\omega} s^{d-1} \mathrm d s>0.
\]
Now let $m_0:=\lambda \pi_{\mathrm{min}} \kappa$, $\eta:=m_0/2>0$ and
\begin{align}
\mathcal G_{\mathrm{shell}}
:=
\bigcap_{(\omega,a,b)\in\mathfrak R}
\bigcap_{u\in V_{\mathrm{init}}}
\left\{
\left|
(V_b\setminus\{u\})
\cap A_{J_{ab}^{\omega}}(u)
\right|
\geq
\eta\log n
\right\}.
\label{eq:G-shell}
\end{align}
Then,
\begin{align*}
\mathbb P(\mathcal G_{\mathrm{shell}}^c)
&\leq
\mathbb E\left[
\sum_{u\in V_{\mathrm{init}}}
\sum_{(\omega,a,b)\in\mathfrak R}
\mathbf 1\left\{
|(V_b\setminus\{u\})
\cap A_{J_{ab}^{\omega}}(u)|
<
\eta\log n
\right\}
\right]
\\
&\leq
\lambda |B_{\mathrm{init}}|
|\mathfrak R|\,
n^{-m_0/8}
=o(1),
\end{align*}
where we have used the Mecke equation (Proposition \ref{thm:mecke}) to obtain the second inequality.

Fix $c_-,c_+>0$ such that the event
\[
\mathcal G_0:=\{c_-\log n \le|V_{\mathrm{init}}| \le c_+ \log n\}
\]
occurs with high probability. Fix $c,\epsilon>0$ such that 
\begin{equation}
\label{eq: c and eps}
c\leq \frac{\lambda\pi_{\mathrm{wit}}\kappa_0
\log(1/\Phi_1)}{4\rho},\quad \epsilon \leq \min \Big\{\frac{ \pi_{\min}}{2}, \underset{i,j \in {Z}, \pi_i \neq \pi_j}{\min}\frac{|\pi_i - \pi_j|}{3}, \frac{c}{(k-1)c_+},\frac{k\eta}{2(k-1)c_+}\Big\}.
\end{equation}

\begin{proof}[Proof of Proposition \ref{prop: initial ball}] 
Consider the estimated labeling $\hat{\sigma}$ that maximizes the restricted MLE, i.e.
\begin{equation}
    \hat{\sigma} \in \underset{\sigma \in X^*_0(\epsilon)}{\arg \max } \ \ell_0(G,\sigma).
\end{equation}
The probability that $\hat \sigma$ does not achieve exact recovery (i.e.~that $\hat \sigma$ does not coincide with $\sigma ^*$) can be bounded by
\begin{equation}
\begin{split}
 &\mathbb{P}\Big( \bigcup_{\substack{\sigma\in X_0^*(\epsilon)\\ \text{DISC}(\sigma,\sigma^*) \neq 0}} \{\ell_0(G,\sigma) \ge \ell_0(G,\sigma^*)\} \Big)\\
 \quad &\le 
 \mathbb{P}\Big( \bigcup_{\substack{\sigma \in X_0^*(\epsilon)\\ \text{DISC}(\sigma,\sigma^*) \neq 0}} \{\ell_0(G,\sigma) \ge \ell_0(G,\sigma^*)\} \cap\{\sigma^* \in X_0^*(\epsilon)\} \Big) + \mathbb P(\sigma^* \notin X_0^*(\epsilon)). 
 \end{split}
\end{equation}
Proposition \ref{prop: ld} and \ref{prop: hd} yield that the first probability tends to zero, and Lemma \ref{lem:restr mle proof} gives that the second probability tends to zero. This finishes the proof.
\end{proof}
Now it remains to prove Propositions \ref{prop: ld} and \ref{prop: hd}.


\begin{proof}[Proof of Proposition \ref{prop: ld}]
Fix some $\sigma: {V_{\mathrm{init}}} \to {Z}$ with $\text{DISC}(\sigma, \sigma^*) > 0$. Without loss of generality, after a permissible relabeling, we assume that $d_H(\sigma, \sigma^*) = \text{DISC}(\sigma, \sigma^*)$.

The log-likelihood difference between $\sigma$ and the ground truth labeling $\sigma^*$ can be expressed as  

\begin{align*}
&\ell_0(G, \sigma) - \ell_0(G, \sigma^*) = \sum_{u \in {V_{\mathrm{init}}}} \sum_{\substack{v \in {V_{\mathrm{init}}} \\ v \neq u}} \log \left( \frac{\bar{p}_{\sigma(u) \sigma(v)}(x_{uv}; \|u - v\|)}{\bar{p}_{\sigma^*(u) \sigma^*(v)}(x_{uv}; \|u - v\|)} \right) \\
&= \sum_{u \in {V_{\mathrm{init}}}} \sum_{\substack{v \in {V_{\mathrm{init}}} \\ v \neq u}} \log \left( \frac{\bar{p}_{\sigma(u) \sigma^*(v)}(x_{uv}; \|u - v\|)}{\bar{p}_{\sigma^*(u) \sigma^*(v)}(x_{uv}; \|u - v\|)} \right)  + \sum_{u \in {V_{\mathrm{init}}}} \sum_{\substack{v \in {V_{\mathrm{init}}} \\ v \neq u}} \log \left( \frac{\bar{p}_{\sigma(u) \sigma(v)}(x_{uv}; \|u - v\|)}{\bar{p}_{\sigma(u) \sigma^*(v)}(x_{uv}; \|u - v\|)} \right) \\
&= \sum_{\substack{u \in {V_{\mathrm{init}}} \\ \sigma(u) \neq \sigma^*(u)}} \sum_{\substack{v \in {V_{\mathrm{init}}} \\ v \neq u}} \log \left( \frac{\bar{p}_{\sigma(u) \sigma^*(v)}(x_{uv}; \|u - v\|)}{\bar{p}_{\sigma^*(u) \sigma^*(v)}(x_{uv}; \|u - v\|)} \right)  + \sum_{\substack{v \in {V_{\mathrm{init}}} \\ \sigma(v) \neq \sigma^*(v)}} \sum_{\substack{u \in {V_{\mathrm{init}}} \\ u \neq v}} \log \left( \frac{\bar{p}_{\sigma(u) \sigma(v)}(x_{uv}; \|u - v\|)}{\bar{p}_{\sigma(u) \sigma^*(v)}(x_{uv}; \|u - v\|)} \right).
\end{align*}

We decompose the last expression into $A(\sigma) + B(\sigma)$ with
\begin{align*}
A(\sigma) := \sum_{\substack{u \in {V_{\mathrm{init}}} \\ \sigma(u) \neq \sigma^*(u)}} \sum_{\substack{v \in {V_{\mathrm{init}}} \\ v \neq u}} \log \left( \frac{\bar{p}_{\sigma(u) \sigma^*(v)}(x_{uv}; \|u - v\|)}{\bar{p}_{\sigma^*(u) \sigma^*(v)}(x_{uv}; \|u - v\|)} \right).
\end{align*}
Since there are exactly $d_H(\sigma, \sigma^*)$ points with $\sigma(u) \neq \sigma^*(u)$, it holds that
\begin{align*}
&\mathbb{P}\Big( \bigcup_{\sigma: {V_{\mathrm{init}}} \to {Z}} \{ A(\sigma) > -d_H(\sigma, \sigma^*) c_1 \log n \} \Big)\\
&\quad \le \mathbb{P}\Big( \bigcup_{u \in {V_{\mathrm{init}}}} \bigcup_{b \neq \sigma^*(u)} \Big\{ \sum_{\substack{v \in {V_{\mathrm{init}}} \\ v \neq u}} \log \Big( \frac{\bar{p}_{b \sigma^*(v)}(x_{uv}; \|u - v\|)}{\bar{p}_{\sigma^*(u) \sigma^*(v)}(x_{uv}; \|u - v\|)} \Big) > -c_1 \log n \Big\} \Big)\\
&\quad \le \E\Big[\sum_{u \in {V_{\mathrm{init}}}} \sum_{b \neq \sigma^*(u)} \1\Big\{ \sum_{\substack{v \in {V_{\mathrm{init}}} \\ v \neq u}} \log \Big( \frac{\bar{p}_{b \sigma^*(v)}(x_{uv}; \|u - v\|)}{\bar{p}_{\sigma^*(u) \sigma^*(v)}(x_{uv}; \|u - v\|)} \Big) > -c_1 \log n \Big\} \Big].
\end{align*}
By the Mecke equation (Proposition \ref{thm:mecke}), the above is given by
\begin{align}
  &\lambda  \int_{B_{\mathrm{init}}} \sum_{a \in Z} \pi_a \sum_{b \in Z\setminus \{a\}}\P\Big(\sum_{v \in {V_{\mathrm{init}}}} \log \Big( \frac{\bar{p}_{b \sigma^*(v)}(x_{uv}; \|u - v\|)}{\bar{p}_{a \sigma^*(v)}(x_{uv}; \|u - v\|)} \Big) > -c_1 \log n \Big)\mathrm{d}u.\label{eq:Asigma_mecke}
\end{align}
Fix $a,b \in Z$ with $a \neq b$ and let
$$
S_u^{ab} = \sum_{v \in V_{\mathrm{init}}} \log \Big( \frac{\bar{p}_{b \sigma^*(v)}(x_{uv}; \|u - v\|)}{\bar{p}_{a \sigma^*(v)}(x_{uv}; \|u - v\|)} \Big)
 $$ be the log-likelihood contribution of $u$. 
 
For fixed $a \neq b$, define a set of informative marked points
\[
E_{ab}(u)\coloneqq \{(x,w):x \in A_{ab}(u), w\in W_{ab}(I_{ab}),\phi_t(\bar P_{aw}(\|u-x\|),\bar P_{bw}(\|u-x\|))<\Phi_1\},
\]
where $A_{ab}(u)$ is defined at $\eqref{def:Aabu}$.
By \eqref{eq:Aab-informative-volume}, uniformly in $u$,
\[
\sum_{w \in W_{ab}(I_{ab})} \pi_w |\{x:(x,w) \in E_{ab}(u)\}| \ge \kappa_0 \pi_{\mathrm{wit}} \log n.\]
Therefore the number $N_u^{ab}$ of Poisson points falling in $E_{ab}(u)$ satisfies
\[
N_u^{ab} \sim \mathrm{Poisson}(\mu_u^{ab}),\qquad \mu_u^{ab} \ge \kappa_0 \lambda \pi_{\mathrm{wit}} \log n.
\]

Hence, by a Poisson Chernoff bound (Lemma \ref{lemma:poisson_concentration}),
\begin{equation}
\mathbb{P}\left(N_u^{ab} \le \frac{2}{3}\mu_u^{ab}\right) \le \exp\left(-\frac{1}{18} \mu_u^{ab}\right) \le n^{-\lambda \pi_{\mathrm{wit}}\kappa_0/18}.\label{eq:Nubound}
\end{equation}
Let
\[
c_1
:=
\frac14
\lambda\pi_{\mathrm{wit}}\kappa_0
\log(1/\Phi_1)>0.
\]
We bound the probability that $S_u^{ab}$ exceeds
$-c_1\log n$ by distinguishing according to the value of
$N_u^{ab}$. Since $t=1/2$, Markov's inequality gives
\begin{align}
\mathbb P(S_u^{ab}\ge -c_1\log n)
&\le
\mathbb P\left(
\{S_u^{ab}\ge -c_1\log n\}
\cap
\left\{
N_u^{ab}>\frac23\mu_u^{ab}
\right\}
\right)
\nonumber\\
&\qquad+
\mathbb P\left(
N_u^{ab}\le\frac23\mu_u^{ab}
\right)
\nonumber\\
&\le
n^{c_1/2}
\mathbb E\left[
e^{S_u^{ab}/2}
\mathbf 1\left\{
N_u^{ab}>\frac23\mu_u^{ab}
\right\}
\right]
+
\mathbb P\left(
N_u^{ab}\le\frac23\mu_u^{ab}
\right).
\label{eq:Su_bound}
\end{align}
Conditioning on $V$ and $\sigma^*$, we have
\[
\mathbb E\left[
e^{S_u^{ab}/2}
\,\middle|\,V,\sigma^*
\right]
=
\prod_{v\in V_{\mathrm{init}}}
\phi_{1/2}\left(
\bar P_{b\sigma^*(v)}(\|u-v\|),
\bar P_{a\sigma^*(v)}(\|u-v\|)
\right)
\le
\Phi_1^{N_u^{ab}}.
\]
Consequently,
\begin{align*}
\mathbb P(S_u^{ab}\ge-c_1\log n)
&\le
n^{c_1/2}
\Phi_1^{\frac23\mu_u^{ab}}
+
n^{-\lambda\pi_{\mathrm{wit}}\kappa_0/18}
\\
&\le
n^{c_1/2}
n^{-\frac23\lambda\pi_{\mathrm{wit}}\kappa_0
\log(1/\Phi_1)}
+
n^{-\lambda\pi_{\mathrm{wit}}\kappa_0/18}
\\
&=
n^{-\frac{13}{24}
\lambda\pi_{\mathrm{wit}}\kappa_0
\log(1/\Phi_1)}
+
n^{-\lambda\pi_{\mathrm{wit}}\kappa_0/18}
\\
&=
o(1/\log n).
\end{align*}

Together with \eqref{eq:Asigma_mecke}, this gives
\begin{equation}
\mathbb{P}\left( \bigcap_{\sigma: {V_{\mathrm{init}}} \to {Z}} \{ A(\sigma) \le -d_H(\sigma, \sigma^*) c_1 \log n \} \right) = 1 - o(1).
\end{equation}

It remains to bound $B(\sigma)$. Following the decomposition of the term $B(\sigma)$ in the proof of
\cite{gaudio2026jan}, we split
\begin{equation}
B(\sigma) = B_1(\sigma) + B_2(\sigma),
\end{equation}
where
\begin{equation}
B_1(\sigma) = \sum_{\substack{v \in {V_{\mathrm{init}}} \\ \sigma(v) \neq \sigma^*(v)}} 
\sum_{\substack{u \in {V_{\mathrm{init}}} \\ u \neq v}} 
\log \frac{\bar{p}_{\sigma^*(u), \sigma(v)}(x_{uv}; \|u - v\|)}
{\bar{p}_{\sigma^*(u), \sigma^*(v)}(x_{uv}; \|u - v\|)}
\end{equation}
which is identical to $A(\sigma)$ and
\begin{equation}
B_2(\sigma) = \sum_{\substack{u, v \in {V_{\mathrm{init}}}, \; u \neq v \\ 
\sigma(u) \neq \sigma^*(u), \, \sigma(v) \neq \sigma^*(v)}} 
\log \frac{\bar{p}_{\sigma(u), \sigma(v)}(x_{uv}; \|u - v\|) \, 
\bar{p}_{\sigma^*(u), \sigma^*(v)}(x_{uv}; \|u - v\|)}
{\bar{p}_{\sigma(u), \sigma^*(v)}(x_{uv}; \|u - v\|) \, 
\bar{p}_{\sigma^*(u), \sigma(v)}(x_{uv}; \|u - v\|)}.
\end{equation}
By Assumption \ref{ass:bounded-ll}, each summand of $B_2(\sigma)$ is bounded by 
$2\rho$, the number of summands is at most $d_H(\sigma, \sigma^*)^2$. Hence 
$B_2(\sigma) \le 2\rho\, d_H(\sigma, \sigma^*)^2$.
Combining these bounds, with high probability
\begin{equation}
\ell_0(G, \sigma) - \ell_0(G, \sigma^*) 
\;\leq\; -2\, d_H(\sigma, \sigma^*)\, c_1 \log n + 2\rho\, d_H(\sigma, \sigma^*)^2
\end{equation}
for every $\sigma$. When $0 < d_H(\sigma, \sigma^*) < c \log n$ with $c <c_1/\rho$, 
the right-hand side is strictly negative. This concludes the proof of exact 
recovery in the initial cube in the low-discrepancy case.
Thus  $\ell_0(G, \sigma) - \ell_0(G, \sigma^*) < 0$, with high probability, when  $0 <\text{DISC}(\sigma, \sigma^*) <c \log n $ with $c <\frac{c_1}{\rho}$. 
\end{proof}

\begin{lemma}[Geometric split]
\label{lem:geometric-split}
Fix $s\in(0,r]$ and $\delta>0$. There exists
$\gamma=\gamma(s,\delta)>0$ such that, with probability $1-o(1)$,
the following holds simultaneously for every $a\in Z$ and every
labeling
\[
\sigma:V\cap B_{\mathrm{init}}\to Z.
\]

Suppose that there exist $b,b'\in Z$, with $b\neq b'$, such that
\[
\left|
\{u\in V_a\cap B_{\mathrm{init}}:\sigma(u)=b\}
\right|
\geq\delta\log n
\]
and
\[
\left|
\{u\in V_a\cap B_{\mathrm{init}}:\sigma(u)=b'\}
\right|
\geq\delta\log n.
\]
Then there exists a collection $\mathcal E$ of unordered pairs
$\{u,v\}\subseteq V_a\cap B_{\mathrm{init}}$ such that
\[
|\mathcal E|
\geq
\gamma(\log n)^2,
\]
and, for every $\{u,v\}\in\mathcal E$,
\[
\sigma(u)\neq\sigma(v)
\qquad\text{and}\qquad
\|u-v\|
\leq
s(\log n)^{1/d}.
\]
\end{lemma}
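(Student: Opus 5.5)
We will reduce the statement to a deterministic combinatorial fact about a fine partition of $B_{\mathrm{init}}$ into small cubes, after a single high-probability event has been established. Partition $B_{\mathrm{init}}$ into cubes $Q_1,\dots,Q_M$ of side $\theta(\log n)^{1/d}$, where $\theta>0$ is chosen so small that $\sqrt d\,\theta\le s$. Then any two points in the same cube, or in two adjacent cubes (sharing a face), are at distance at most $2\sqrt d\,\theta(\log n)^{1/d}$. Replacing $\theta$ by $\theta/2$ makes this at most $s(\log n)^{1/d}$. The number of cubes is $M=(2r/\theta)^d$, a constant independent of $n$.

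\emph{Step 1 (occupancy event).} For each $a\in Z$ and each cube $Q_j$, the count $|V_a\cap Q_j|$ is Poisson with mean $\lambda\pi_a\theta^d\log n$. By Lemma~\ref{lemma:poisson_concentration} and a union bound over the finitely many pairs $(a,j)$, with probability $1-o(1)$ every such count lies in $[c_-\log n,\,C\log n]$, where $c_-=\tfrac12\lambda\pi_{\min}\theta^d$ and $C$ is a constant. All later steps are deterministic on this event. This is why the conclusion holds simultaneously for all $a$ and all labelings $\sigma$: no union bound over the exponentially many labelings is needed.

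\emph{Step 2 (localizing the two labels).} Fix $a$ and $\sigma$ with at least $\delta\log n$ points of $V_a$ labelled $b$ and at least $\delta\log n$ labelled $b'$. For each cube $Q_j$ let $\beta_j$ and $\beta'_j$ be the number of points of $V_a\cap Q_j$ labelled $b$ and $b'$ respectively. There are two cases.

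\emph{Case (i).} Some cube satisfies $\beta_j\ge\eta\log n$ and $\beta'_j\ge\eta\log n$, where $\eta:=\delta/(2M)$. Then pairing these points gives at least $\eta^2(\log n)^2$ pairs with distinct labels, all at distance at most $s(\log n)^{1/d}$.

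\emph{Case (ii).} No cube has both labels in quantity $\eta\log n$. Call a cube \emph{$b$-rich} if $\beta_j\ge\eta\log n$, and define \emph{$b'$-rich} analogously. Since $\sum_j\beta_j\ge\delta\log n$ and the cubes with $\beta_j<\eta\log n$ contribute at most $\tfrac12\delta\log n$ in total, some $b$-rich cube exists. Likewise some $b'$-rich cube exists. In case (ii) no cube is both $b$-rich and $b'$-rich.

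\emph{Step 3 (walking between the two cubes).} The adjacency graph of the cubes is a connected grid, so there is a path from a $b$-rich cube to a $b'$-rich cube. Along this path there are consecutive adjacent cubes $Q,Q'$ where $Q$ is $b$-rich and $Q'$ is not. We then look at how $Q'$ is filled, using its total of at least $c_-\log n$ points of $V_a$.
\begin{itemize}
\item If $Q'$ contains at least $\tfrac12 c_-\log n$ points labelled $\neq b$, pairing them with the $b$-labelled points of $Q$ gives at least $\tfrac12 c_-\eta(\log n)^2$ valid pairs.
\item Otherwise $Q'$ is dominated by label $b$, which contradicts $Q'$ not being $b$-rich once $\eta\le c_-/2$. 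To make this consistent we reduce $\eta$ to $\min\{\delta/(2M),\,c_-/2\}$.
\end{itemize}
It follows that $\gamma:=\min\{\eta^2,\,c_-\eta/2\}$ works.

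The main obstacle is the bookkeeping in Step 3. The point is that a boundary between a region heavily labelled $b$ and a region not labelled $b$ must occur between neighbouring cubes. This relies on the lower occupancy bound of Step 1 in every cube, which is exactly what makes the argument uniform over all labelings. A subtlety is that the path argument can even bypass the $b'$-rich cube: it suffices that not every cube is $b$-rich. This is guaranteed because the $b'$-rich cube exists and is not $b$-rich in case (ii).
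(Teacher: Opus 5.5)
Your proof is correct and follows essentially the same route as the paper. You partition $B_{\mathrm{init}}$ into a constant number of cubes so that points in the same or adjacent cubes are within $s(\log n)^{1/d}$, impose a high-probability lower bound on $|V_a\cap C|$ for every cube (which makes everything else deterministic and uniform over $\sigma$), and then find either a cube with two labels in bulk or a transition between adjacent cubes along a path. The only cosmetic difference is that you track $b$-rich versus non-$b$-rich cubes and use the occupancy bound to show a non-$b$-rich cube has many non-$b$ points. The paper instead uses pigeonhole to give every cell a heavy label and looks for adjacent cells whose unique heavy labels differ.
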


\begin{proof}
Rescale $B_{\mathrm{init}}$ by $(\log n)^{-1/d}$, so that it
becomes the fixed cube $[-r,r]^d$.

Choose an integer $m\geq1$ sufficiently large that, if
$[-r,r]^d$ is partitioned into $m^d$ congruent cubes, then any two
points lying either in the same cube or in two face-adjacent cubes
are at Euclidean distance at most $s$.

Let
\[
\mathcal C=\{C_1,\ldots,C_M\},
\qquad M=m^d,
\]
denote the corresponding partition of $B_{\mathrm{init}}$, scaled
back by $(\log n)^{1/d}$.

Since the points of community $a$ form a Poisson point process of
intensity $\lambda\pi_a$, there exists $\alpha>0$ such that, with
probability $1-o(1)$,
\begin{equation}
\label{eq:true-community-cell-occupancy}
|V_a\cap C|
\geq
\alpha\log n
\end{equation}
simultaneously for every $a\in Z$ and $C\in\mathcal C$.
Indeed, there are only $kM=O(1)$ such random variables, and each is
Poisson with mean of order $\log n$.

Work on the event \eqref{eq:true-community-cell-occupancy}. Set
\[
\tau
:=
\min\left\{
\frac{\delta}{2M},
\frac{\alpha}{2k}
\right\}.
\]
For $C\in\mathcal C$ and $j\in Z$, call $j$ \emph{heavy in $C$} if
\[
\left|
\{u\in V_a\cap C:\sigma(u)=j\}
\right|
\geq
\tau\log n.
\]

By \eqref{eq:true-community-cell-occupancy} and the pigeonhole
principle, every cell has at least one heavy label. Moreover, since
\[
\left|
\{u\in V_a\cap B_{\mathrm{init}}:\sigma(u)=b\}
\right|
\geq\delta\log n,
\]
there exists a cell in which $b$ is heavy. The same holds for $b'$.

Suppose first that some cell $C$ contains two distinct heavy labels
$j\neq j'$. Then there are at least $tau^2(\log n)^2$ pairs $\{u,v\}\subseteq V_a\cap C$ with $\sigma(u)=j$and $sigma(v)=j'$.
By the choice of the partition,
\[
\|u-v\|
\leq
s(\log n)^{1/d},
\]
and the result follows.

It remains to consider the case in which every cell has a unique
heavy label. Assign to each cell its unique heavy label. The
face-adjacency graph of the cells is connected. Since some cell has
heavy label $b$ and some other cell has heavy label $b'$, there must
exist two face-adjacent cells $C,C'$ whose heavy labels, say $j$ and
$j'$, are distinct.

Consequently, there are at least $\tau^2(\log n)^2$ pairs
$u\in V_a\cap C$, $v\in V_a\cap C'$ such that
\[
\sigma(u)=j,
\qquad
\sigma(v)=j'.
\]
Again, by construction,
\[
\|u-v\|
\leq
s(\log n)^{1/d}.
\]
Thus, the assertion holds with $\gamma:=\tau^2$.
\end{proof}

\begin{proof}[Proof of Proposition \ref{prop: hd}] 
Let
\[
\mathcal G_0
:=
\left\{
\sigma^*\in X_0^*(\epsilon),\
c_-\log n
\leq |V_{\mathrm{init}}|
\leq c_+\log n
\right\},
\]
and let $\mathcal G_{\mathrm{split}}$ be the high-probability event given by Lemma \ref{lem:geometric-split} for
\[
s=r_{\mathrm{init}},\quad \delta=\frac{\epsilon c_-}{k}.
\]
We work on the event $\mathcal G:=\mathcal G_0 \cap \mathcal G_{\mathrm{split}} \cap \mathcal G_{\mathrm{shell}}$. Then $\P(\mathcal G)=1-o(1)$.
Recall the definitions of $c>0$ and let $\epsilon>0$ from \eqref{eq: c and eps}. We consider a labeling $\sigma\in X_0^*(\epsilon)$ satisfying
\[
\mathrm{DISC}(\sigma,\sigma^*)\geq c\log n.
\]
For $i,j\in Z$, let
\[
N_{ij}
=
\{u\in V_{\mathrm{init}}:
\sigma^*(u)=i,\ \sigma(u)=j\},
\qquad
n_{ij}=|N_{ij}|.
\]

Since $P_{ij}=P_{ji}$ for all $i,j \in Z$, we have, 
\[
\ell_0(G,\sigma)-\ell_0(G,\sigma^*) = 2 \sum_{\{u,v\} \subseteq V_{\mathrm{init}}}
 \log \left( \frac{\bar{p}_{\sigma(u), \sigma(v)}(x_{uv}; \|u - v\|)}{\bar{p}_{\sigma^*(u), \sigma^*(v)}(x_{uv}; \|u - v\|)} \right)
\]
Conditionally on $V$ and $\sigma^*$, Markov's inequality gives
\begin{align}
&\mathbb P\left(
\ell_0(G,\sigma)-\ell_0(G,\sigma^*)\geq0
\,\middle|\,V,\sigma^*
\right)
\nonumber\\
&\qquad\leq
\mathbb E\left[
\exp\left\{
\frac14\bigl(
\ell_0(G,\sigma)-\ell_0(G,\sigma^*)
\bigr)
\right\}
\,\middle|\,V,\sigma^*
\right]
\nonumber\\
&\qquad=
\prod_{\{u,v\}\subseteq V_{\mathrm{init}}}
\phi_{1/2}\left(
\overline P_{\sigma(u)\sigma(v)}(\|u-v\|),
\overline P_{\sigma^*(u)\sigma^*(v)}(\|u-v\|)
\right).
\label{eq:l0diff_high}
\end{align}

Since the vertex locations have an absolutely continuous
distribution, almost surely none of the finitely many interpoint
distances in $V_{\mathrm{init}}$ belongs to any exceptional
null set in Assumption~\ref{ass:init-identifiability}.
We now distinguish two cases.
\medskip

\noindent
\textbf{Case 1.}
Suppose that there exist $a,b,b'\in Z$, with $b\neq b'$, such that
\[
n_{ab},n_{ab'}
\geq
\frac{\epsilon}{k}|V_{\mathrm{init}}|.
\]
On the event $|V_{\mathrm{init}}|\geq c_-\log n$, Lemma
\ref{lem:geometric-split}, applied with
\[
s=r_{\mathrm{init}},
\qquad
\delta=\frac{\epsilon c_-}{k},
\]
gives a set $\mathcal E$ of at least
$\gamma(\log n)^2$ unordered pairs $u,v \in V_{\mathrm{init}}$ such that
\[
\sigma^*(u)=\sigma^*(v),
\qquad
\sigma(u)\neq\sigma(v),
\qquad
\|u-v\|
\leq
r_{\mathrm{init}}(\log n)^{1/d}
\]
for every $\{u,v\}\in\mathcal E$.

Assumption~\ref{ass:init-identifiability} therefore yields
\[
\phi_{1/2}\left(
\bar P_{\sigma(u)\sigma(v)}(\|u-v\|),
\bar P_{\sigma^*(u)\sigma^*(v)}(\|u-v\|)
\right)
\leq
\Phi_{\mathrm{init}}
\]
for every $\{u,v\}\in\mathcal E$. Hence
\[
\mathbb P\left(
\ell_0(G,\sigma)-\ell_0(G,\sigma^*)\geq0
\right)
\leq
\Phi_{\mathrm{init}}^{\gamma(\log n)^2}
=
C_1^{(\log n)^2}
\]
for some $C_1\in(0,1)$.
\medskip

\noindent
\textbf{Case 2.}
Suppose that for every $a\in Z$ there is at most one $b\in Z$ such that
\[
n_{ab}\geq
\frac{\epsilon}{k}|V_{\mathrm{init}}|.
\]
We claim that for every $a$ there is in fact exactly one such $b$.
Indeed, if no such $b$ existed for some $a$, then
\[
\sum_{b\in Z}n_{ab}
<
\epsilon|V_{\mathrm{init}}|.
\]
On the other hand, since $\sigma^*\in X_0^*(\epsilon)$,
\[
\sum_{b\in Z}n_{ab}
=
|\{u:\sigma^*(u)=a\}|
\geq
(\pi_a-\epsilon)|V_{\mathrm{init}}|.
\]
Since $\epsilon\leq\pi_{\min}/2$, this is at least
$\epsilon|V_{\mathrm{init}}|$, which is a contradiction.

Hence there is a unique map $\omega:Z\to Z$ such that
\[
n_{a\omega(a)}
\geq
\frac{\epsilon}{k}|V_{\mathrm{init}}|
\qquad\text{for every }a\in Z.
\]

We next show that $\omega$ is a permutation. Suppose that some
$b\in Z$ is not in the image of $\omega$. Then, by the assumption of
Case 2,
\[
n_{ab}
<
\frac{\epsilon}{k}|V_{\mathrm{init}}|
\qquad\text{for every }a\in Z.
\]
Consequently,
\[
|\{u \in V_{\mathrm{init}}:\sigma(u)=b\}|
=
\sum_{a\in Z} n_{ab}
<
\epsilon|V_{\mathrm{init}}|.
\]
But $\sigma\in X_0^*(\epsilon)$ implies
\[
|\{u:\sigma(u)=b\}|
\geq
(\pi_b-\epsilon)|V_{\mathrm{init}}|
\geq
(\pi_b-\epsilon)|V_{\mathrm{init}}|,
\]
which is at least
$\epsilon|V_{\mathrm{init}}|$ since
$\epsilon\leq\pi_{\min}/2$. This is a contradiction. Thus $\omega$ is
bijective.

We now show that $\omega$ cannot be a permissible relabeling. If it
were permissible, then $\omega^{-1}$ would also be permissible and
\[
\mathrm{DISC}(\sigma,\sigma^*)
\leq
d_H(\omega^{-1}\circ\sigma,\sigma^*)
=
\sum_{a\in Z}\sum_{b\neq\omega(a)}n_{ab}.
\]
By the assumption of Case 2,
\[
\sum_{a\in Z}\sum_{b\neq\omega(a)}n_{ab}
<
k(k-1)
\frac{\epsilon}{k}|V_{\mathrm{init}}|
=
(k-1)\epsilon|V_{\mathrm{init}}|.
\]
Using \eqref{eq: c and eps} gives
\[
\mathrm{DISC}(\sigma,\sigma^*)
<
c\log n,
\]
contradicting the assumption on $\sigma$. Hence $\omega$ is not
permissible.

Following the argument in the proof of Proposition D.6 \cite{gaudio2025incguan}, we next establish that $\pi_a=\pi_{\omega(a)}$ for every $a\in Z$. Suppose that $\pi_{\omega(a)}\neq \pi_a$ for some $a \in Z$. Since $\omega$ is a permutation, there must then exist some $a$ such that $\pi_{\omega(a)}<\pi_a$.  Since $\sum_{b \in Z} n_{ab}\ge (\pi_a - \epsilon) |V_{\mathrm{init}}|$ with high probability, and $n_{ab}<\frac{\epsilon}{k} |V_{\mathrm{init}}|$ for any $b \neq \omega(a)$, we have
\[
n_{a \omega(a)} > (\pi_a-\epsilon) |V_{\mathrm{init}}| - (k-1) |V_{\mathrm{init}}| \frac{\epsilon}{k} > (\pi_a - 2 \epsilon) |V_{\mathrm{init}}|.
\]
Together with \eqref{eq: restr mle} and $|V_{\mathrm{init}}| $, this gives
\[
(\pi_a-2\epsilon) |V_{\mathrm{init}}| < \sum_{b \in Z} n_{b \omega(a)} \le (\pi_{\omega(a)} + \epsilon)  |V_{\mathrm{init}}|.
\]
Rearranging terms yields $\epsilon > (\pi_a - \pi_{\omega(a)})/3$. This contradicts the definition of $\epsilon$. Hence, for all $a \in Z$ we must have $\pi_{\omega(a)} \ge \pi_a$, which implies $\pi_{\omega(a)}=\pi_a$, since the $\{\pi_a\}$ sum to $1$.

Since $\omega$ is not permissible, there exist
$a,b\in Z$ such that
\[
P_{ab}\not\equiv P_{\omega(a)\omega(b)}.
\]
Hence $(\omega,a,b)\in\mathfrak R$. On $\mathcal G_{\mathrm{shell}}$, for every
$u\in N_{a\omega(a)}$,
\[
\left|
(V_b\setminus\{u\})
\cap A_{J_{ab}^{\omega}}(u)
\right|
\geq\eta\log n.
\]

On the other hand, by the assumption of Case~2,
\[
|V_b\setminus N_{b\omega(b)}|
=
\sum_{j\neq\omega(b)}n_{bj}
<
\frac{k-1}{k}\epsilon |V_{\mathrm{init}}|\le \frac{k-1}{k} \epsilon c_+ \log n.
\]
Thus,
\[
|(N_{b\omega(b)}\setminus \{u\})\cap A_{J_{ab}^{\omega}}(u)|
\geq \frac{\eta}{2}\log n.
\]
On $\mathcal G_0$,
\[
n_{a\omega(a)}
\geq\frac{\epsilon}{k}|V_{\mathrm{init}}|
\ge \frac{\epsilon c_-}{k} \log n.
\]
Hence, with $\gamma_2:=(\epsilon c_- \eta)/4k$, there are at least $\gamma_2 (\log n)^2$ unordered pairs $(u,v) \in V_{\mathrm{init}}$ satisfying
\[
\phi_{1/2}\left(
\overline P_{ab}(\|u-v\|),
\overline P_{\omega(a)\omega(b)}(\|u-v\|)
\right)
\leq\Phi_2.
\]
Therefore,
\[
\mathbb P\left(
\ell_0(G,\sigma)\geq\ell_0(G,\sigma^*)
\,\middle|\,V,\sigma^*
\right)
\leq
C^{(\log n)^2}.
\]
Consequently,
\begin{align*}
&\mathbb P\left(
\exists\,\sigma\in X_0^*(\epsilon):
\mathrm{DISC}(\sigma,\sigma^*)\geq c\log n,\
\ell_0(G,\sigma)\geq\ell_0(G,\sigma^*)
\right)
\\
&\qquad\leq
\mathbb P(\mathcal G^c)
+
k^{c_+\log n}C^{(\log n)^2}
=o(1).
\end{align*}
\end{proof}

\section{Propagation proof}

\subsection{Block structure}
We prepare the proof of Theorem \ref{theorem: achiev} by providing details on the discretization of the space $S_{d,n}$ into blocks. Recall the definitions of $\pi_{\mathrm{wit}}$ and $\ell$ from \eqref{def:ell}.

\begin{lemma}[{\cite[Section C]{gaudio2025incguan}}]
\label{lemma:condition_chi_0}
Assume $d=1$ and $\lambda \pi_\mathrm{wit}\ell>1$. Denote $K_1$ the number of blocks $\ell$-close to the left (or the right)  of length $\ell\chi \log n$.
Then, we have  $\lambda \pi_{\mathrm{wit}}K_1 \chi \ell> 1$ for all $\chi$ satisfying \eqref{eq:condition_chi_0_d1}.

Assume $d\geq 2$ and $\lambda \pi_{\mathrm{wit}} \nu_d \ell^d > 1$. Then there exists $\chi_0 > 0$ such that for all $0 <\chi < \chi_0$, \eqref{eq:condition_chi_0} holds.
   
    Moreover, let $K_d$ the number of blocks that are $\ell$-close to a given block, excluding the block itself. Since the torus is translation invariant and the partition is regular, it depends only on $\chi$ and $d$. For all $\chi$ verifying the condition in \eqref{eq:condition_chi_0}, we have, $\lambda \pi_{\mathrm{wit}} K_d\chi \ell^d > 1$.
\end{lemma}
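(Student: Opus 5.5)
The lemma has three claims. The first is the $d=1$ bound for every $\chi$ satisfying \eqref{eq:condition_chi_0_d1}. The second is the existence of $\chi_0$ for $d\ge 2$. The third is the lower bound $\lambda\pi_{\mathrm{wit}}K_d\chi\ell^d>1$. All three follow from elementary counting on the regular partition $\mathcal P_\chi$. Throughout I rescale lengths by $(\log n)^{-1/d}$, so each block becomes a cube of side $h:=\ell\chi^{1/d}$ and two blocks are $\ell$-close iff their supremal distance is at most $\ell$.

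\textbf{Case $d=1$.} Blocks are intervals of length $h=\ell\chi$. Consider a block $[0,h]$ and a block to its right, $[jh,(j+1)h]$. Their supremal distance is $(j+1)h$, so they are $\ell$-close iff $(j+1)\chi\ell\le\ell$. Hence $K_1=\lfloor 1/\chi\rfloor-1\ge 1/\chi-2$, and
\[
\lambda\pi_{\mathrm{wit}}K_1\chi\ell\ge\lambda\pi_{\mathrm{wit}}\ell(1-2\chi).
\]
Condition \eqref{eq:condition_chi_0_d1} says exactly that $1-2\chi>1/(\lambda\pi_{\mathrm{wit}}\ell)$, so the product exceeds $1$. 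If the intended convention counts blocks on both sides, the bound only improves.

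\textbf{Case $d\ge2$, existence of $\chi_0$.} Let $F(\chi):=\nu_d\bigl(1-\tfrac{3\sqrt d}{2}\chi^{1/d}\bigr)^d-\chi$. This function is continuous on $[0,\infty)$ and satisfies $F(0)=\nu_d$. By hypothesis $\lambda\pi_{\mathrm{wit}}\ell^dF(0)>1$. Both conditions in \eqref{eq:condition_chi_0} are strict inequalities involving continuous functions that hold at $\chi=0$. They therefore hold on some interval $(0,\chi_0)$.

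\textbf{Bound on $K_d$.} This is the only step needing care, and the argument is a volume count. Fix a block $B$ with center $c$. Any block $B'$ that lies entirely in the ball of radius $\ell-\sqrt d\,h$ around $c$ is $\ell$-close to $B$. Indeed, for $u\in B$ and $v\in B'$ we have $\|u-v\|\le\|u-c\|+\|c-v\|\le \tfrac{\sqrt d}{2}h+\ell-\sqrt d h\le\ell$.

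Next, every block meeting the smaller ball of radius $\rho:=\ell-\tfrac{3\sqrt d}{2}h$ has diameter $\sqrt d h$. It is therefore contained in the ball of radius $\ell-\sqrt d h+\tfrac{\sqrt d}{2}h$. A slightly tighter bookkeeping is needed here: I count blocks whose centers lie within $\rho+\tfrac{\sqrt d}{2}h$ of $c$, which keeps the whole block inside radius $\ell-\sqrt d h+\ldots$. The blocks meeting $B_\rho(c)$ cover $B_\rho(c)$, so their number is at least $\nu_d\rho^d/h^d$.

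Substituting $\rho/h=\chi^{-1/d}\bigl(1-\tfrac{3\sqrt d}{2}\chi^{1/d}\bigr)$ gives
\[
K_d+1\ge \chi^{-1}\nu_d\bigl(1-\tfrac{3\sqrt d}{2}\chi^{1/d}\bigr)^d .
\]
Subtracting the block $B$ itself yields
\[
K_d\chi\ge\nu_d\bigl(1-\tfrac{3\sqrt d}{2}\chi^{1/d}\bigr)^d-\chi .
\]
By \eqref{eq:condition_chi_0}, $\lambda\pi_{\mathrm{wit}}\ell^d$ times this quantity exceeds $1$.

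The main obstacle is fixing the radius so that the constant is exactly $\tfrac{3\sqrt d}{2}$. I would pick it as follows. A block meeting $B_\rho(c)$ has all its points within $\rho+\sqrt d h$ of $c$. Every point of $B$ is within $\tfrac{\sqrt d}{2}h$ of $c$. So all pairwise distances are at most $\rho+\tfrac{3\sqrt d}{2}h=\ell$, which is exactly $\ell$-closeness. This choice makes the constant come out as stated.

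Finally, toroidal wraparound does not reduce the count for $n$ large, since $\ell(\log n)^{1/d}\ll n^{1/d}$.
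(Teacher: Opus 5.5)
Your proof is correct and follows the paper's argument: for $d=1$ you use the same count $K_1=\lfloor 1/\chi\rfloor-1$ with $\chi K_1\ge 1-2\chi$, and for $d\ge 2$ the same volume comparison showing that the ball of radius $\ell(1-\tfrac{3\sqrt d}{2}\chi^{1/d})$ around the center of $B$ is covered by $B$ and its $\ell$-close blocks (your closing triangle-inequality paragraph is exactly the paper's inclusion $C(B)\subset U(B)\cup B$, so the muddled intermediate ``$\ell-\sqrt d\,h$'' bookkeeping can simply be deleted). Your continuity-at-$0$ argument for the existence of $\chi_0$ is a slightly cleaner substitute for the paper's claim that $\nu_d(1-\tfrac{3\sqrt d}{2}\chi^{1/d})^d-\chi$ is decreasing.
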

\begin{proof}
    For $d=1$, $K_1 = \lfloor 1/\chi \rfloor-1$, then $\chi K_1 \geq 1 - 2\chi$ and from \eqref{eq:condition_chi_0_d1}, we obtain the result.
    
   For $d\geq 2$, note that $h(x) \coloneqq \lambda \pi_\mathrm{wit} \ell^d \Big( \nu_d \left(
    1-\tfrac{3\sqrt d}{2}x^{1/d}
    \right)^d- x \Big)$ is decreasing, continuous, and $\lim_{x\downarrow 0}h(x) = \lambda \pi_\mathrm{wit} \ell^d \nu_d > 1$. Hence, there exists $\chi'_0>0$ such that for all $0 < \chi < \chi'_0$, $h(\chi) > 1$. Hence, we take $\chi_0 < \min\left\{\chi'_0, \left(\frac{2}{3\sqrt d}\right)^d\right\}$. We define $\ell_d \coloneqq \ell - \tfrac{3}{2}\sqrt{d} \ell \chi^{1/d}$ the proximity radius reduced by one and a half block diagonals.
   For a block
$B\in\mathcal P_\chi$, let $c(B)$ denote its center and define $C(B) \coloneqq \mathcal B\left(
        c(B),
        \ell_d(\log n)^{1/d}
    \right)$. By second condition in  \eqref{eq:condition_chi_0}, $\ell_d > 0$.  We define the set of blocks $B'$ such that $B$ and $B'$ are $\ell$-close by~$U(B) \coloneqq
    \bigcup_{\substack{B'\sim B \\ B'\neq B}} B'$.
    
Note that $C(B) \subset U(B) \cup B$. Hence,
\[
    K_d \chi \ell^d\log n =
    |U(B)| \geq |C(B)| - |B| \geq \ell^d \left(\nu_d \left(1-\tfrac{3\sqrt d}{2}\chi^{1/d}\right)^d- \chi\right)\log n
.\]
Then from \eqref{eq:condition_chi_0},
$\lambda \pi_\mathrm{wit}K_d \chi \ell^d \geq \lambda \pi_\mathrm{wit} \ell^d \left( \nu_d ( 1-\tfrac{3\sqrt d}{2}\chi^{1/d} )^d -\chi\right)>1$.
\end{proof}

\begin{definition}[Spread block]
\label{def: spread block}
A block $B$ is called $\delta$-spread if
\[
V(B)\geq\delta\log n
\]
and
\[
\min_{a\neq b}
\sum_{w\in W_{ab}(I_{ab})}V_w(B)
\geq
\frac34\delta\pi_{\mathrm{wit}}\log n.
\]
\end{definition}

\begin{definition}[Cluster of blocks, {\cite[Definition C.2]{gaudio2025incguan}}]
    Two blocks are adjacent if they share an edge or a corner. We say that a set of blocks $\mathcal B$ is a cluster if for every $B, B' \in \mathcal B$, there is a path of blocks of the form $(B = B_{j_1}, B_{j_2}, \dots, B_{j_m} = B')$, where $B_{j_k} \in \mathcal B$ for $k \in [m]$ and $B_{j_k}, B_{j_{k+1}}$ are adjacent.
\end{definition}

The following lemma implies that with high probability for every block $B \in \mathcal P_\chi$ there exists a $\delta$-spread block $B'$ such that $B$ and $B'$ are $\ell$-close. It generalizes {\cite[Lemma C.8]{gaudio2025incguan}}.

\begin{lemma} 
\label{lemma:certified_block}
    Suppose for $d=1$, $\lambda \pi_{\mathrm{wit}} \ell^d > 1$ and for $d \geq 2$, $\lambda \pi_{\mathrm{wit}} \nu_d \ell^d > 1$. Let $Y_\delta$ be the size of the largest cluster of blocks that are not $\delta$-spread. Then 
    \[
    \P(Y_\delta < K_d)=1-o(1),
    \]
    where $\delta$ and $K_d$ are given at \eqref{eq:Kr}.
\end{lemma}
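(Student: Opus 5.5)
The plan is to follow the percolation-type argument of \cite[Lemma C.8]{gaudio2025incguan}: bound the probability that a single block is not $\delta$-spread by $n^{-c}$ for an explicit exponent $c$, use independence of disjoint blocks under the Poisson process, and then union bound over all clusters of size $K_d$.

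\textbf{Step 1: one block.} For a fixed block $B$, the counts $V_w(B)$, $w\in Z$, are independent Poisson variables with means $\lambda\pi_w\ell^d\chi\log n$. For each pair $a\neq b$, the sum $\sum_{w\in W_{ab}(I_{ab})}V_w(B)$ is Poisson with mean
\[
\mu_{ab}\ge \lambda\pi_{\mathrm{wit}}\ell^d\chi\log n .
\]
Since $\delta<\delta_0$, and $\delta_0$ is much smaller than $\lambda\ell^d\chi$, both thresholds $\delta\log n$ and $\tfrac34\delta\pi_{\mathrm{wit}}\log n$ lie far below the corresponding means. Lemma \ref{lemma:poisson_concentration}, applied with $x=\theta\mu$ and $\theta$ small, gives
\[
\P(X\le\theta\mu)\le \exp\{-\mu(1-\theta+\theta\log\theta)\}.
\]
A union bound over the $O(k^2)$ pairs and the total-count condition then gives
\[
\P(B\text{ not }\delta\text{-spread})\le n^{-\lambda\pi_{\mathrm{wit}}\ell^d\chi(1-\epsilon(\delta))}
\]
with $\epsilon(\delta)\to0$ as $\delta\to0$. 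The point of $\delta_0$ is to make this precise: I will choose $\theta$ so that
\[
K_d\,\lambda\pi_{\mathrm{wit}}\ell^d\chi\,(1-\theta+\theta\log\theta)>1+\eta
\]
for some $\eta>0$. This is possible because $\lambda\pi_{\mathrm{wit}}K_d\chi\ell^d>1$ by Lemma \ref{lemma:condition_chi_0}. The quadratic form of $\delta_0$, namely $(x-1)^2/(8x K_d)$ with $x=\lambda\pi_{\mathrm{wit}}\ell^d\chi K_d$, looks like the outcome of a Chernoff bound of the form $\exp(-(\mu-t)^2/(2\mu))$. So I would rather use the additive Chernoff form $\P(X\le\mu-s)\le e^{-s^2/(2\mu)}$ and check that $\delta<\delta_0$ gives exactly
\[
\P(\text{a given }K_d\text{ blocks all fail})\le n^{-1-\eta'}.
\]

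\textbf{Step 2: clusters.} Distinct blocks are disjoint, so their Poisson counts are independent. A fixed cluster of $K_d$ blocks is therefore entirely non-spread with probability at most $p_n^{K_d}$, where $p_n$ is the single-block bound. The number of clusters of size $K_d$ containing a given block is bounded by a constant $C(d,K_d)$, since adjacency is bounded-degree and lattice animals of fixed size are finitely many. There are $O(n/\log n)$ blocks, so
\[
\P(Y_\delta\ge K_d)\le C\frac{n}{\log n}\,p_n^{K_d}.
\]
Here I use that any cluster of size at least $K_d$ contains a subcluster of size exactly $K_d$. By Step 1, $p_n^{K_d}\le n^{-1-\eta'}$, and hence the probability is $o(1)$.

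\textbf{Main obstacle.} The delicate part is Step 1: showing the exponent $K_d\cdot(\text{rate})$ exceeds $1$ with the explicit $\delta_0$. The spread condition involves the total count and every witness-sum for each pair $a\neq b$. Each witness-sum dominates a Poisson variable of mean $\lambda\pi_{\mathrm{wit}}\ell^d\chi\log n$, so the binding rate is governed by $\pi_{\mathrm{wit}}$. I will try the additive Chernoff bound with
\[
\mu=\lambda\pi_{\mathrm{wit}}\ell^d\chi\log n,\qquad s=\mu-\tfrac34\delta\pi_{\mathrm{wit}}\log n,
\]
and verify that
\[
K_d\,\frac{s^2}{2\mu\log n}>1
\quad\text{whenever}\quad
\delta<\delta_0 .
\]
If that algebra does not close directly, I will fall back on the multiplicative bound with $\theta\to0$, which yields a rate arbitrarily close to $\lambda\pi_{\mathrm{wit}}\ell^d\chi$. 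This suffices for all sufficiently small $\delta$, possibly after shrinking $\delta_0$.
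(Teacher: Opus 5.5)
\paragraph{Verdict.} Your skeleton is the paper's, but the one substantive estimate is not established as you plan it. The shared structure is: a union bound over the total-count event and the $\binom{k}{2}$ witness-sum events for a single block, independence over the $K_d$ disjoint blocks of a cluster, and a union bound over the $O(n/\log n)$ clusters of size $K_d$. Each single-block event is contained in a lower-tail event for a Poisson variable of mean at least $a\log n$ at threshold $\delta\log n$, where $a:=\lambda\pi_{\mathrm{wit}}\ell^d\chi$.

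\paragraph{The gap.} The step with real content is showing that the per-cluster probability is $o(\log n/n)$ for \emph{every} $\delta<\delta_0$, and neither of your routes does this.
\begin{itemize}
\item Your preferred tool, $\P(X\le\mu-s)\le e^{-s^2/(2\mu)}$, cannot deliver it. It loses a factor $2$ in the exponent: at $s=\mu$ it gives $e^{-\mu/2}$ instead of $e^{-\mu}$. The cluster bound is then $n^{-x(1-\delta/a)^2/2}$ with $x:=aK_d$. Beating the $\Theta(n/\log n)$ clusters requires $x(1-\delta/a)^2\ge 2$, hence $x>2$.
\item Lemma~\ref{lemma:condition_chi_0} only gives $x>1$. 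So for $1<x\le 2$ this route fails however small $\delta$ is, and $\delta_0$ is not the output of a Gaussian-type tail bound.
\item Your fallback, as stated, only gives the lemma ``for all sufficiently small $\delta$, possibly after shrinking $\delta_0$''. That is weaker than the statement, in which $\delta$ is any value allowed by \eqref{eq:condition_delta}.
\end{itemize}

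\paragraph{What closes it.} The multiplicative bound, i.e.\ Lemma~\ref{lemma:poisson_concentration} itself, works with the given $\delta_0$ and no shrinking. It yields the cluster bound
\[
\Bigl(1+\tbinom{k}{2}\Bigr)^{K_d}\, n^{-x}\Bigl(\frac{ea}{\delta}\Bigr)^{\delta K_d\log n}.
\]
The paper then uses
\[
\delta K_d\log(ea/\delta)<\frac{x-1}{2}\quad\text{for }\delta<\delta_0=\frac{(x-1)^2}{8aK_d^2},
\]
which gives $O(n^{-(1+x)/2})=o(1/n)$. To verify that inequality:
\begin{itemize}
\item The map $\delta\mapsto\delta\log(ea/\delta)$ is increasing on $(0,a)$, and $\delta_0<a/8$, so it suffices to take $\delta=\delta_0$.
\item Set $u:=(x-1)/x\in(0,1)$. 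The condition becomes $u(1+\log 8-2\log u)<4$.
\item The left side is increasing in $u$ and equals $1+\log 8<4$ at $u=1$, so the condition holds.
\end{itemize}
With this computation supplied, your argument coincides with the paper's proof.
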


\begin{proof}
 Let $\mathcal S_{K_d}$ be the (finite) set of cluster shapes of $K_d$ blocks. We have $\Theta(n/\log n)$ blocks, hence there are $O(n/\log n)$ clusters of $K_d$ blocks to consider. Let $(B_{i_1},\ldots,B_{i_{K_d}})$ be a cluster of $K_d$ block. It suffices to show that the probability that all blocks in the cluster are not $\delta$-spread is $o((\log n)/n)$.
 
If $B$ is not $\delta$-spread, then either
$V(B)<\delta\log n$, or there exist $a\neq b$ such that
\[
\sum_{w\in W_{ab}(I_{ab})}V_w(B)
<
\frac34\delta\pi_{\mathrm{wit}}\log n.
\]
Consequently,
\[
\mathbb P(B\text{ is not }\delta\text{-spread})
\leq
\mathbb P(V(B)<\delta\log n)
+
\sum_{a<b}
\mathbb P\left(
\sum_{w\in W_{ab}(I_{ab})}V_w(B)
<
\frac34\delta\pi_{\mathrm{wit}}\log n
\right).
\]

Hence, by independence and by translation invariance,
\begin{align}
    &\mathbb{P}\Big(\bigcap_{j=1}^{K_d}\{B_{i_j}\text{ is not }\delta\text{-spread}\}\Big)\nonumber\\   &\quad\leq \Big(\mathbb{P}\left(V(B)<\delta\log n\right) + \sum_{a< b} \mathbb{P}\Big(\sum_{w\in W_{ab}(I_{ab})}V_w(B)<\delta\log n\Big) \Big)^{K_d}.\label{eq:non_spread}
\end{align}
Since for all $a \neq b$,
$$
    V(B) \ge  \sum_{w\in W_{ab}(I_{ab})}V_w(B) \quad \text{a.s.},
$$
and $|\{\{a,b\}\in Z^2:\,a \neq b\}|=\binom k2$, \eqref{eq:non_spread} is bounded by
\begin{align*}
&\left(1+\binom{k}{2}\right)^{K_d}
\max_{a\neq b}\, 
\mathbb{P}\Big(
\sum_{w\in W_{ab}(I_{ab}} |V_w(B)|
<\delta\log n
\Big)^{K_d}\\
&\quad \le \left(1+\binom{k}{2}\right)^{K_d}\mathbb{P}\Big(\operatorname{Poisson}(\lambda \pi_{\mathrm{wit}}\chi \ell^d\log n)<\delta\log n\Big)^{K_d}.
\end{align*}
By Lemma \ref{lemma:poisson_concentration} it holds that
$$
\mathbb{P}\left(\operatorname{Poisson}(\lambda \pi_{\mathrm{wit}}\chi \ell^d\log n)<\delta\log n\right)^{K_d} \le n^{-\lambda \pi_{\mathrm{wit}}\ell^d \chi K_d}
\Big(\frac{e \lambda \pi_{\mathrm{wit}}\ell^d \chi}{\delta}\Big)^{\delta K_d\log n}.
$$
Since $\big(\frac{ea}{\delta}\big)^{\delta k}<e^{(ak-1)/2}$ when $ak>1$ and $\delta < \frac{(ak-1)^2}{8ak^2}$, we conclude that with $a:=\lambda \pi_{\mathrm{wit}} \ell^d \chi$ and $k:=K_d$ for all $0<\delta <\delta_0$,
$$
\mathbb{P}\left(\operatorname{Poisson}(\lambda \pi_{\mathrm{wit}}\chi \ell^d\log n)<\delta\log n\right)^{K_d} \le n^{-(1+\lambda \pi_{\mathrm{wit}}\ell^d \chi K_d)/2} \in o(1/n).
$$
Thus,
$$\mathbb{P}(Y_\delta\geq K_d) \leq O(n/\log n) \cdot o(1/n) = o(1).$$
\end{proof}

The following topological statement is \cite[Proposition C.9]{gaudio2025incguan}.
\begin{lemma}\cite[Proposition C.9]{gaudio2025incguan}\label{lemma:geometric_boundary_cluster}
Let $d\geq 2$ and~$\Lambda_m=(\mathbb Z/m\mathbb Z)^d$ be the
discrete torus of blocks.
We assume that each block has exactly $K\coloneqq K_d$ close neighbours.
There exists $m_0$ such that for every $m\geq m_0$, the following holds. Let $O\subset\Lambda_m$ and $H(O)$ be the graph with vertex set $O$ where two vertices are adjacent when the corresponding blocks are $\ell$-close.
If $H(O)$ is disconnected, then $\Lambda_m\setminus O$ contains
a cluster $C \subset \Lambda_m\setminus O$ such that $|C|\geq K$.
\end{lemma}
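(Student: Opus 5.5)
Following \cite[Proposition C.9]{gaudio2025incguan}, the plan is a boundary argument. We take a connected component $A$ of $H(O)$ and look at the blocks that lie just outside it. Throughout, we write $B\sim B'$ for $\ell$-closeness and use $*$-adjacency (sharing an edge or a corner) for clusters.

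Since $H(O)$ is disconnected, $A\neq O$, so some block $B_1\in O\setminus A$ exists. Define the $\ell$-thickening
\[
\bar A:=A\cup\{B\in\Lambda_m:\ B\sim B'\text{ for some }B'\in A\}.
\]
The key observation is $\bar A\setminus A\subseteq\Lambda_m\setminus O$. Indeed, a block of $O$ that is $\ell$-close to $A$ would be joined to $A$ in $H(O)$, and hence would belong to $A$. Moreover $B_1\notin\bar A$, for the same reason. So the shell $\Gamma:=\bar A\setminus A$ avoids $O$ and separates $A$ from $B_1$: any nearest-neighbour path of blocks from $A$ to $B_1$ has to cross $\Gamma$, because consecutive blocks are $\ell$-close. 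It therefore suffices to find a $*$-connected subset of $\Gamma$ with at least $K$ blocks.

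\textbf{Step 1: connectivity.} Let $D$ be the nearest-neighbour component of $\Lambda_m\setminus\bar A$ containing $B_1$. Let $\Gamma_D$ be the set of blocks of $\Gamma$ that are nearest-neighbour adjacent to $D$. This is the exterior vertex boundary of $D$ relative to the connected set $\bar A$; note that $\bar A$ is connected since $A$ is connected in $H(O)$ and $\ell$-neighbourhoods are connected. We would invoke the discrete-topology lemma of Deuschel--Pisztora (in the form given by Timár): the boundary between a connected set and a component of its complement is $*$-connected in $\mathbb Z^d$. On the torus this can fail only if $D$ or $\bar A$ wraps around in a homologically nontrivial way. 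The assumption $m\ge m_0$ is used precisely here. If both $\bar A$ and $D$ wrap, we expect a $*$-connected cluster of size of order $m^{d-1}\ge K$ inside $\Gamma$ anyway, since $\Gamma$ must then cut the torus. So in either case we get a $*$-connected $\Gamma_D\subseteq\Lambda_m\setminus O$.

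\textbf{Step 2: size, the main obstacle.} We need $|\Gamma_D|\ge K$, or we should enlarge $\Gamma_D$ to the $*$-component of $\Gamma$ containing it. My first attempt is an ``extremal block'' argument. Pick a block $B^\star\in A$ that is adjacent, across the shell, to $D$; for instance take $B^\star$ minimising the distance to $B_1$. Then argue that the $*$-component of $\Gamma$ containing $\Gamma_D$ contains every block $\ell$-close to $B^\star$ that lies on the $D$-side of $A$, together with enough blocks $\ell$-close to neighbours of $B^\star$ along $\partial A$ to reach $K$.

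The second attempt, if counting at one block gives only about $K/2$, is an isoperimetric argument. $\Gamma$ must contain, for every block of $\partial A$ facing $D$, an entire ``radius-$\ell$ fan'' of blocks. Since the fan is a discretised half-ball whose boundary-adjacent translates overlap only partially, summing over at least two neighbouring blocks of $\partial A$ should produce at least $K$ distinct blocks in the same $*$-cluster, at least for $m$ large. I expect this counting to be the delicate point, because $K$ is the full $\ell$-neighbourhood count. The orientation has to be chosen carefully (for example lexicographically maximal blocks of $A$ in a non-wrapping direction) so that the whole outward half-neighbourhood, plus its reflection obtained from an extremal block in the opposite direction, lies in one $*$-connected piece of $\Gamma$.
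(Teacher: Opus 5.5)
The paper gives no proof of this lemma; it imports it from \cite[Proposition C.9]{gaudio2025incguan}. Your sketch therefore has to stand on its own, and it has a genuine gap: Step~2, which carries all the quantitative content of the statement, is not proved. Step~1 only produces \emph{some} $*$-connected subset of $\Lambda_m\setminus O$, and $\Gamma_D$ itself can be tiny. To see this, write $N$ for the symmetric set of $\ell$-close offsets ($|N|=K$), $N_0=N\cup\{0\}$, and $w=\max_{z\in N}z_1$ (the same for every coordinate by symmetry). Let $A$ contain a ring of blocks lying outside $B_1+N_0$ and passing through the blocks $B_1\pm(w+1)e_i$. Then every nearest neighbour of $B_1$ lies in $\bar A$, so $D=\{B_1\}$ and $|\Gamma_D|=2d$.

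Neither of your routes to the count closes this.
\begin{itemize}
\item Taking $B^\star\in A$ closest to $B_1$ only guarantees that the blocks of $B^\star+N$ strictly closer to $B_1$ than $B^\star$ avoid $A$. That is a lens, well below $K/2$ when $B_1$ is barely outside the $\ell$-neighbourhood of $B^\star$.
\item The lexicographic idea in your last sentence is the right one, but to glue the two halves you need a $*$-connected piece of $\Gamma$ that both touch, and the piece you set up in Step~1, $\Gamma_D$, is the wrong one. In the lift to $\mathbb{Z}^d$, the lexicographically extremal blocks of $A$ face the \emph{unbounded} complementary component of $\bar A$. If $B_1$ sits in a bounded hole of $\bar A$, their half-neighbourhoods need not be $*$-connected to $\Gamma_D$ inside $\Gamma$. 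For instance, in $d=2$ a nearest-neighbour circuit of blocks of $A$ around the hole blocks every $*$-path between the inner and outer parts of $\Gamma$.
\end{itemize}
The torus discussion in Step~1 (``we expect'') is also only a heuristic.

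The missing step is that neither $B_1$ nor $D$ is needed.
\begin{itemize}
\item Take a component $A$ of $H(O)$ whose lift $\tilde A$ is finite, where $\tilde A$ is a component of the periodic lift $\tilde O$ of $O$ in $\mathbb{Z}^d$. Let $a^\star$, $a_\star$ be its lexicographic maximum and minimum.
\item Let $N^{>}$, $N^{<}$ be the lexicographically positive and negative halves of $N$, so $|N^{>}|+|N^{<}|=K$. The sets $a^\star+N^{>}$ and $a_\star+N^{<}$ are disjoint and lie in $(\tilde A+N)\setminus\tilde A\subseteq\mathbb{Z}^d\setminus\tilde O$.
\item Each is $*$-connected, because $N$ is a down-set in $(|z_1|,\ldots,|z_d|)$.
\item Both meet the exterior boundary of the connected set $\tilde A+N_0$. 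For instance, $a^\star+we_1$ is adjacent to the half-space $\{x_1>a^\star_1+w\}$, which avoids $\tilde A+N_0$.
\item That exterior boundary lies in $(\tilde A+N)\setminus\tilde A$ and is $*$-connected by Deuschel--Pisztora/Tim\'ar. It therefore glues the two halves into one $*$-cluster of at least $K$ blocks.
\item A $*$-connected sub-cluster of exactly $K$ blocks has diameter less than $K$, so it projects injectively onto $\Lambda_m$ once $m>K$.
\end{itemize}

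What still needs a real argument is the case in which every component of $H(O)$ wraps around the torus, i.e.\ has an infinite lift. This is where $m\ge m_0$ must genuinely enter. One way is to assume that all $*$-clusters of $\Lambda_m\setminus O$ have fewer than $K<m$ blocks, so their lifts are bounded. Then use that a bounded set cannot separate two infinite connected subsets of $\mathbb{Z}^d$ for $d\geq 2$.
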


In the following proposition we consider the subgraph $H_{\mathrm{spread}}$ of $H$, induced by all $\delta$-spread blocks. Combining Lemma~\ref{lemma:certified_block} and Lemma \ref{lemma:geometric_boundary_cluster}, we  obtain that every block in $H$ has a neighboring block in $H_{\mathrm{spread}}$. This property of $H_{\mathrm{spread}}$ will be useful in our propagation proof.

\begin{proposition}
\label{prop:connectivity}
Suppose that, for \(d=1\),
\(\lambda \pi_{\mathrm{wit}}\ell>1\), and, for \(d\geq2\),
\(\lambda\pi_{\mathrm{wit}}\nu_d\ell^d>1\). Let \(V\) be a Poisson point
process with intensity \(\lambda\) restricted to \(S_{d,n}\). For
\(\delta\) given by \eqref{eq:condition_delta}, and \(\chi\) satisfying
\eqref{eq:condition_chi_0_d1} if \(d=1\) and
\eqref{eq:condition_chi_0} if \(d\geq2\), the
\((\chi\ell^d\log n,\delta\log n)\)-proximity-spread graph
\(H_{\mathrm{spread}}\) of \(G\) is connected with high probability.
Moreover, with high probability, every block has at least one neighbor in
\(H_{\mathrm{spread}}\).
\end{proposition}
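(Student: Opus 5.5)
The plan is to derive both assertions from a single high-probability event. Let $\mathcal E$ be the event of Lemma~\ref{lemma:certified_block}, namely that every cluster of non-$\delta$-spread blocks has fewer than $K_d$ blocks; we know $\mathbb P(\mathcal E)=1-o(1)$. On $\mathcal E$ the argument will be purely deterministic.

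\textbf{Step 1: connectivity for $d\ge2$.} Let $O\subset\Lambda_m$ be the set of $\delta$-spread blocks, with $m\asymp (n/(\ell^d\chi\log n))^{1/d}\to\infty$, so $m\ge m_0$ for large $n$. By construction $H_{\mathrm{spread}}=H(O)$ in the notation of Lemma~\ref{lemma:geometric_boundary_cluster}. Every $\delta$-spread block satisfies $V(B)\ge\delta\log n$, so it is $\delta$-occupied and $H_{\mathrm{spread}}$ is an induced subgraph of $H$. Suppose $H(O)$ were disconnected. Then Lemma~\ref{lemma:geometric_boundary_cluster} produces a cluster $C\subset\Lambda_m\setminus O$ with $|C|\ge K_d$. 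This is a cluster of non-spread blocks of size at least $K_d$, contradicting $\mathcal E$.

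\textbf{Step 1': connectivity for $d=1$.} Here no topological lemma is needed. If two spread blocks are consecutive in cyclic order, the gap between them is a run of non-spread blocks, which is a cluster. On $\mathcal E$ this run has length at most $K_1-1$. The two spread blocks are then at most $K_1$ positions apart. Since $K_1$ blocks on each side are $\ell$-close, they are adjacent in $H_{\mathrm{spread}}$. Chaining consecutive spread blocks around the circle gives connectivity.

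\textbf{Step 2: every block has a spread neighbour.} Fix any block $B$ and consider its $K_d$ $\ell$-close neighbours. I would argue that these neighbours, possibly together with $B$ itself, contain a cluster of at least $K_d$ blocks. The natural candidate is the set of $\ell$-close neighbours of $B$, which is connected through adjacency once $\ell\,(\log n)^{1/d}$ exceeds a block diagonal; this holds for small $\chi$ by \eqref{eq:condition_chi_0}. If none of these neighbours were spread, they would form a non-spread cluster of size $K_d$, contradicting $\mathcal E$. In $d=1$ the $K_1$ blocks to the right of $B$ form a run of length $K_1$, and the same contradiction applies.

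\textbf{Main obstacle.} The delicate point is the geometric claim in Step~2 that the $\ell$-close neighbourhood, with $B$ removed, is itself a cluster of adjacent blocks. This set is a discrete ball with its centre removed, and it remains adjacency-connected because adjacency includes corners and $d\ge2$. I would verify this directly from the regular grid structure. If this fails, the fallback is to include $B$ itself when $B$ is non-spread, which gives a cluster of size $K_d+1$. If $B$ is spread, the statement needs only that $B$ has some spread neighbour, and that follows from the connectivity in Step~1 as long as $H_{\mathrm{spread}}$ has at least two vertices, which holds with high probability.
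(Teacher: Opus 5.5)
Your proposal is correct and follows the paper's proof. It uses the same high-probability event from Lemma~\ref{lemma:certified_block}, Lemma~\ref{lemma:geometric_boundary_cluster} for $d\ge2$, and the run-of-non-spread-blocks argument for $d=1$; your fallback in Step~2 is exactly the paper's case split: a non-spread $B$ together with its $\ell$-close blocks forms a non-spread cluster, and a spread $B$ gets a neighbour from the connectivity of $H_{\mathrm{spread}}$.

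Your primary Step~2 route also works for $d\ge2$, but not for the reason you give (the block-diagonal condition). The right reason is that the set of $\ell$-close offsets is closed under shrinking any coordinate toward $0$ and contains all $\pm e_i$. Hence the punctured neighbourhood is corner-connected whenever it is nonempty.
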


\begin{proof}
By Lemma~\ref{lemma:certified_block}, with high probability there is no
cluster of at least \(K_d\) non-\(\delta\)-spread blocks. On this event,
if \(d=1\), disconnectedness of \(H_{\mathrm{spread}}\) would imply the
existence of \(K_1\) consecutive non-\(\delta\)-spread blocks, a contradiction.
Hence \(H_{\mathrm{spread}}\) is connected when \(d=1\).

Now suppose \(d\geq 2\). On the same event,
Lemma~\ref{lemma:geometric_boundary_cluster} implies that
\(H_{\mathrm{spread}}\) is connected.

It remains to show that every block has a neighbor in
\(H_{\mathrm{spread}}\). If \(B\) is \(\delta\)-spread, then, since
\(H_{\mathrm{spread}}\) is connected, \(B\) has a neighbor in
\(H_{\mathrm{spread}}\) (for sufficiently large \(n\), the graph contains
more than one vertex).

Now suppose that \(B\) is not \(\delta\)-spread. If no block $\ell$-close to \(B\) is \(\delta\)-spread, then \(B\) together with all blocks
$\ell$-close to \(B\) forms a cluster of non-\(\delta\)-spread blocks
of cardinality at least \(K_d\). This contradicts the event from
Lemma~\ref{lemma:certified_block}. Hence \(B\) has a  $\ell$-close
\(\delta\)-spread block, which is a neighbor of \(B\) in
\(H_{\mathrm{spread}}\).
\end{proof}

\subsection{Almost exact recovery in Phase II}

Let $\omega^* \in \Omega_{\pi,P}$ be the permissible relabeling corresponding to the
estimated labels of the initial block, that is,
\[
\hat{\sigma}(v)=\omega^* \circ \sigma^*(v)
\qquad \text{for all } v\in V_{\mathrm{init}}.
\]
Due to Proposition \ref{prop: initial ball}, $\omega^*$ exists with high probability. Since $\omega^\star$ is permissible, the model, the propagation rule,
and the affinity information are invariant under composition with
$(\omega^\star)^{-1}$. Hence, throughout the proof of Phase~II, we may
without loss of generality assume that
\[
\omega^\star=\mathrm{id}.
\]

The following lemma bounds the probability of mislabeling a vertex $v\in V$ in terms of the number of errors among its labeled neighbors and the affinity information available for $v$.
\begin{lemma}[Propagation error bound]\label{lemma:one_step_error_bound}
 Let $G \sim \mathrm{GHCM}(\lambda,n,r,\pi,P(\cdot),d)$, $u \in V$, $S \subset V\setminus \{u\}$ and $\hat \sigma_S:S\to Z$ be a labeling on $S$. Let $\hat \sigma(u)$ be the output of Propagate (Algorithm \ref{alg:propagate}) applied on input $(G,\{u\},S,\hat \sigma_S)$, assume that Assumption~\ref{ass:bounded-ll} holds with some constant $\rho>0$. Let $\mathcal F$ be a $\sigma$-field with respect to which
$(V,\sigma^*,\widehat\sigma_S)$ is measurable and such that,
conditionally on $(V,\sigma^*)$, the random variables
\[
\{X_{uv}:v\in S\cap N(u)\}
\]
are independent of $\mathcal F$. Then,
    \begin{align*}
        \mathbb P(\hat\sigma(u) \neq \omega^* \circ \sigma^*(u) \mid \mathcal F) \leq (k-1)e^{2\rho m_{\hat \sigma}(S)-\mathcal I_S(u)},
    \end{align*}
    where $m_{\hat \sigma}(S):=|\{v \in S \cap N(u):\;\omega^* \circ \sigma^*(v)\neq \hat \sigma_S(v)\}|$ is the number of incorrectly labeled nodes in $S$ and $\mathcal I_S(u)$ is given at \eqref{eq:def_I}.
\end{lemma}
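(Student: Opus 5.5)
The plan is a union bound over wrong labels followed by a Chernoff (Markov) bound on each resulting log-likelihood comparison. By the reduction above we take $\omega^\star=\mathrm{id}$. Let $a:=\sigma^*(u)$. The event $\hat\sigma(u)\neq a$ is contained in $\bigcup_{b\neq a}\{L_b\ge L_a\}$, where
\[
L_c:=\sum_{v\in S\cap N(u)}\log \bar p_{\hat\sigma(v)c}(x_{uv};\|u-v\|).
\]
It therefore suffices to show that, for each fixed $b\neq a$,
\[
\mathbb P(L_b-L_a\ge0\mid\mathcal F)\le e^{2\rho m_{\hat\sigma}(S)-\mathcal I_S(u)},
\]
and then sum over the $k-1$ choices of $b$.

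First we replace the estimated labels by the true ones on the erroneous vertices. Write
\[
L_b-L_a=\sum_{v}\log\frac{\bar p_{b\hat\sigma(v)}}{\bar p_{a\hat\sigma(v)}}(x_{uv}),
\]
using symmetry $p_{ij}=p_{ji}$. Split the sum into correctly labeled $v$, where $\hat\sigma(v)=\sigma^*(v)$, and the at most $m:=m_{\hat\sigma}(S)$ incorrectly labeled $v$. For an incorrect $v$, Assumption~\ref{ass:bounded-ll} gives $|\log(\bar p_{b\hat\sigma(v)}/\bar p_{a\hat\sigma(v)})|<\rho$, and the same holds for the corresponding term with $\sigma^*(v)$ in place of $\hat\sigma(v)$. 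So replacing $\hat\sigma(v)$ by $\sigma^*(v)$ in each such summand changes the total by less than $2\rho$. Equivalently, after multiplying by a Chernoff parameter $t\in(0,1)$, each incorrect term can be bounded crudely. Since the affinity information is defined with the labels $\hat\sigma$, the cleanest route is the following. Keep the terms with $\hat\sigma(v)$ throughout, and bound the likelihood ratio that the Markov step requires, namely $p_{a\sigma^*(v)}$ versus $p_{a\hat\sigma(v)}$. This ratio lies within $e^{\rho}$ of $1$, and the exponent $t$, respectively $1-t$, costs at most $e^{\rho}$ per mistaken vertex. Either way we obtain a total factor of $e^{2\rho m}$.

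Next comes the Chernoff step. For any $t\in(0,1)$, Markov's inequality gives
\[
\mathbb P(L_b-L_a\ge0\mid\mathcal F)\le \mathbb E[e^{t(L_b-L_a)}\mid\mathcal F].
\]
By the conditional independence of $\{X_{uv}\}$ from $\mathcal F$ given $(V,\sigma^*)$, this factorizes over $v$. Each factor is $\int \bar p_{b\hat\sigma(v)}^t\bar p_{a\hat\sigma(v)}^{-t}\bar p_{a\sigma^*(v)}\,d\mu$. When $\hat\sigma(v)=\sigma^*(v)$ this equals $\phi_t(\bar P_{b\hat\sigma(v)},\bar P_{a\hat\sigma(v)})$. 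Otherwise, bounding $\bar p_{a\sigma^*(v)}\le e^{\rho}\bar p_{a\hat\sigma(v)}$ gives at most $e^{\rho}\phi_t(\cdots)$, and $e^{2\rho}$ suffices. Taking the product yields
\[
e^{2\rho m}\exp\Big\{-\sum_v-\log\phi_t\big(\bar P_{b\hat\sigma(v)},\bar P_{a\hat\sigma(v)}\big)\Big\}.
\]
We then optimize over $t$. Since $\phi_t(P,Q)=\phi_{1-t}(Q,P)$, the supremum over $t$ for the ordered pair $(b,a)$ matches the one for $(a,b)$. That supremum is at least the minimum over pairs in \eqref{eq:def_I}, which is $\mathcal I_S(u)$.

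The main obstacle I expect is the measurability and conditioning bookkeeping: $\hat\sigma_S$ depends on earlier edges but must be $\mathcal F$-measurable while the edges $X_{uv}$, $v\in S$, remain fresh. This is exactly what the hypothesis on $\mathcal F$ provides, so the factorization is legitimate. The remaining care is to handle the supremum over $t$, taking a near-optimal $t$ and a limiting argument if the supremum is not attained.
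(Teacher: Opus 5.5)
Your proof is correct and follows the paper's route. Like the paper, you take a union bound over $b\neq a$, apply Markov's inequality to $e^{t(L_b-L_a)}$, factorize via the conditional-independence hypothesis on $\mathcal F$, use Assumption~\ref{ass:bounded-ll} on the mistaken vertices, and optimize over $t$. The only minor difference is how the mistaken vertices are absorbed: you use the change of measure $\bar p_{a\sigma^*(v)}\le e^{\rho}\,\bar p_{a\hat\sigma(v)}$ to bound each such factor directly by $e^{\rho}\phi_t$, whereas the paper bounds those factors crudely by $e^{t\rho}$ and then reinserts the missing $\phi_t$ factors using $\phi_t\ge e^{-\rho}$, which is where its constant $e^{2\rho m}$ comes from.
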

For Algorithms \ref{alg:exact-recovery} and \ref{alg:phase2_block}, the condition on $\mathcal F$ is satisfied with $F=\sigma(V,\sigma^*,\hat \sigma_S)$, since an edge incident to an unlabeled vertex is
not used before that vertex is labeled.
\begin{proof}
      Let $a:=\omega^* \circ \sigma^*(u)$. Note that, on the event $\{a \neq \hat \sigma(u)\}$, the likelihood of $u$ having label $b \in Z \setminus \{a\}$ exceeds the likelihood of $u$ having label $a$, i.e.
    \begin{align*}
        \Lambda_{b} \coloneqq \sum_{v \in S \cap N(u)} \log \frac{\bar p_{b \hat{\sigma}(v)}(X_{uv};\|u-v\|)}{\bar p_{a \hat{\sigma}(v)}(X_{uv};\|u-v\|)} \geq 0,
    \end{align*}
       where we write $\hat \sigma(v):=\hat\sigma_S(v)$ for $v\in S$. Then, by Markov's inequality, we have for all $t \in (0,1)$, 
    \begin{align*}
        \mathbb{P}(\Lambda_{b} \geq 0 \mid \mathcal F) &= \mathbb{P}(e^{t\Lambda_{b}} \geq 1 \mid \mathcal F) \\
        &\leq \mathbb{E}(e^{t\Lambda_{b}} \mid \mathcal F) \\
        &= \prod_{v \in S \cap N(u)} \int_{\mathbb X} \Big(\frac{\bar{p}_{b \hat{\sigma}(v)}(x;\|u-v\|)}{\bar{p}_{a \hat{\sigma}(v)}(x;\|u-v\|)}\Big)^t P_{a \,\omega^* \circ\sigma^*(v)}(\mathrm d x).
    \end{align*} 
    We distinguish in the product above whether $\hat{\sigma}(v)$ and $\omega^* \circ  \sigma^*(v)$ coincide or not. By Assumption~\ref{ass:bounded-ll}, we have for all $a,b,a',b' \in Z$, all $x \in \mathbb X$ and all $y \in [0,r]$,
    $$
          \bigg|\log \frac{{p}_{ab}(x; y)}{{p}_{a'b'}(x; y)}\bigg| < \rho.
    $$
    Using the notation $\phi_t$ introduced at \eqref{eq:def_phi}, we obtain the bound
    \begin{align*}
       \mathbb{P}(\Lambda_{b} \geq 0 \mid \mathcal F) &\leq  e^{t\rho m_{\hat \sigma}(S)} \cdot \prod_{\substack{v \in S \\ \hat{\sigma}(v) = \omega^* \circ \sigma^*(v)}} \phi_t \Big(\bar{P}_{b\hat{\sigma}(v)}(\|u-v\|), \bar{P}_{a\hat{\sigma}(v)}(\|u-v\|)\Big)\\
       &\leq e^{2\rho m_{\hat \sigma}(S)} \cdot \prod_{v \in S} \phi_t \Big(\bar{P}_{b\hat{\sigma}(v)}(\|u-v\|), \bar{P}_{a\hat{\sigma}(v)}(\|u-v\|)\Big).
    \end{align*}
    Using the definition of $\mathcal I_S(u)$ and a union bound over all $b \in Z\setminus \{a\}$, we arrive at the asserted bound.
\end{proof}

Our next goal is to establish a uniform upper bound on the number of mislabeled nodes that are not labeled $*$ in the neighborhood of every vertex. We will later use this bound to obtain a logarithmic upper bound on the total number of mislabelings in every neighborhood.
\begin{proposition}[Uniform error bound]
\label{prop:abs_propagation}
Assume that Assumptions~\ref{ass:identifiability} and~\ref{ass:bounded-ll} hold with some constant
$\rho>0$ and that Algorithm~\ref{alg:exact-recovery} or \ref{alg:phase2_block} is used. There exists $m\in\mathbb N$ such that
\[
\P\Big(
\bigcap_{w\in V}
\left\{
\left|
\left\{
u\in V\cap N(w):
\widehat\sigma(u)
\notin
\{\omega^\star\circ\sigma^\star(u),*\}
\right\}
\right|
\le m
\right\} \Big) = 1-o(1).
\]
\end{proposition}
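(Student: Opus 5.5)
The plan follows the three-ingredient strategy of the proof sketch. We work on the high-probability event $\mathcal G$ on which Proposition~\ref{prop: initial ball} holds, so that $V_{\mathrm{init}}$ is labeled correctly (with $\omega^\star=\mathrm{id}$). On the same event we ask that every visibility neighbourhood $N(w)\cap V$ contains at most $\Delta'\log n$ vertices, which follows from Corollary~\ref{cor:poisson_upper_concentration} since $N(w)$ is covered by $K_r$ blocks. For Algorithm~\ref{alg:phase2_block} we also require the conclusion of Proposition~\ref{prop:connectivity}.

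\textbf{Step 1: enough information.} We first establish an information lemma (Lemma~\ref{lem:enough_info}). There is $c_2>0$ such that, w.h.p., every propagated vertex $v$ satisfies $\mathcal I_{S(v)}(v)\ge c_2\log n$, unless some relevant neighbourhood already contains more than $m$ errors.

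For Algorithm~\ref{alg:phase2_block}, $S(v)$ is the union of certified blocks visible to $v$'s block. One of these is an $\ell$-close certified block $B'$. For each $a\neq b$, the block $B'$ carries at least $\tfrac12\delta\pi_{\mathrm{wit}}\log n$ vertices labeled in $W_{ab}(I_{ab})$. Since $B'$ is $\ell$-close, distances from $v$ to these vertices are at most $\ell(\log n)^{1/d}$. We still need them to fall in $I_{ab}$ (rescaled) at informative distances. To arrange this, I would use the spread property together with Corollary~\ref{cor:D}, in the form of a Mecke/Chernoff bound. The bound says that for every $v$, a constant fraction of the witness vertices in the annulus $\mathcal A_{I_{ab}}$ around $v$ have $\phi_t<\Phi<1$. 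It then remains to show that the certified propagation front eventually covers such annuli. Correct labels on those vertices give $-\log\phi_t\ge\log(1/\Phi)$ per vertex. At most $m$ vertices are wrongly labeled, and each one perturbs the sum by at most $\rho$ through Assumption~\ref{ass:bounded-ll}; this is absorbed into $c_2$.

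For Algorithm~\ref{alg:exact-recovery}, the greedy argmax rule gives the same conclusion by comparison. As long as the block-based order could still proceed, some unlabeled vertex has information at least $c_2\log n$, so the maximizer also has at least this much.

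This step is the main obstacle. Matching the witness interval $I_{ab}$, which need not start at $0$, with the $\ell$-closeness geometry requires care. So does handling the adaptive greedy order. I would first try to reduce the greedy order to the block order by a domination argument on $\mathcal I$.

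\textbf{Step 2: one-step error.} Let $\mathcal F_v$ be the $\sigma$-field generated by $(V,\sigma^*)$ and all labels assigned before $v$. By Lemma~\ref{lemma:one_step_error_bound}, on the event that $S(v)$ contains at most $m$ errors,
\[
\P\bigl(\xi_v=1\mid\mathcal F_v\bigr)\le (k-1)e^{2\rho m}n^{-c_2},
\]
where $\xi_v$ is the error indicator. The edges $X_{vu}$ are fresh because they have not been used before $v$ is labeled.

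\textbf{Step 3: no accumulation.} We define auxiliary indicators $\tilde\xi_v:=\xi_v\mathbf 1\{\text{all neighbourhoods relevant to }v\text{ have }\le m\text{ errors at time }v\}$. Then $\E[\tilde\xi_v\mid\mathcal F_v]\le q:=(k-1)e^{2\rho m}n^{-c_2}$. Iterating the tower property along the propagation order gives, for any $v_1<\dots<v_{m+1}$,
\[
\E\Bigl[\prod_i\tilde\xi_{v_i}\Bigm|V\Bigr]\le q^{m+1}.
\]
Here the ordering is adaptive, so I would sum over ordered positions rather than fixed vertices. Fix $w$. A union bound over the at most $(\Delta'\log n)^{m+1}$ subsets of $N(w)$ gives probability at most $(\Delta'\log n)^{m+1}q^{m+1}=o(n^{-2})$, once $m+1>2/c_2$. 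Using Mecke (Proposition~\ref{thm:mecke}) over $w\in V$ then yields $o(1)$ in total.

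Finally, suppose the conclusion fails, and let $v^\star$ be the first time some neighbourhood reaches $m+1$ errors. All errors up to $v^\star$ have $\tilde\xi=\xi$, so the event above occurs, which is a contradiction. Vertices labeled $*$ are excluded by definition, which gives the statement.
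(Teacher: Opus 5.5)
Your argument follows the paper's proof. Steps~2--3 match it: the $\sigma$-fields $\mathcal F_v$, truncated error indicators, the iterated tower property, a union bound over $(m+1)$-subsets of $V\cap N(w)$ with $|V\cap N(w)|=O(\log n)$, and the contradiction at the first failure time $v^\star$. Your Step~1 is Lemma~\ref{lem:enough_info}, which the paper proves beforehand and simply invokes.

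One step fails as written. Your $\tilde\xi_v$ truncates only on the error counts, but Lemma~\ref{lemma:one_step_error_bound} gives
\[
\P(\xi_v=1\mid\mathcal F_v)\le (k-1)e^{2\rho m-\mathcal I_{S(v)}(v)}.
\]
Step~1 only asserts $\mathcal I_{S(v)}(v)\ge c_2\log n$ with high probability, not almost surely on the truncation event. So $\E[\tilde\xi_v\mid\mathcal F_v]\le q$ does not follow, and the same silent use of $\mathcal I_{S(v)}(v)\ge c_2\log n$ already occurs in your Step~2.

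The paper repairs this in two moves. First, it puts the $\mathcal F_v$-measurable factor $\1\{\mathcal I_{S(v)}(v)\ge c_2\log n\}$ into $\xi_v$. Second, it runs the first-failure argument on the event
\[
\mathcal H=\bigcap_{v\in V\setminus V_{\mathrm{init}}}\bigl(\mathcal G_v^c\cup\{\mathcal I_{S(v)}(v)\ge c_2\log n\}\bigr),
\]
which has probability $1-o(1)$ by Lemma~\ref{lem:enough_info}. On $\mathcal H$, the $m+1$ errors up to $v^\star$ all have $\xi=1$. With this change, and Lemma~\ref{lem:enough_info} taken as given, your proof goes through.

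Watch the order of constants: $c_2$ must not depend on $m$, since $m$ is chosen afterwards so that $c_2(m+1)>1$. This holds because errors enter the information bound only by lowering each certified count by the constant $m$. The constant $\rho$ enters only through the factor $e^{2\rho m}$ in the one-step bound, not through $\mathcal I$ as your Step~1 suggests.

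Three smaller remarks.
\begin{itemize}
\item The piece you leave open in Step~1 (``the certified front eventually covers such annuli'') is exactly Lemma~\ref{lem:past_witness_block} combined with Lemma~\ref{lem:spread_chain}. A proximity chain of certified blocks reaching distance $r(\log n)^{1/d}$ from $v$ contains a block lying entirely in $\bar{\mathcal A}_{I_{ab}}(v)$, for every interval of length at least $\ell$. This is why the source must include all certified blocks in the visibility range, not just one $\ell$-close block.
\item If by ``ordered positions'' you mean the $j$-th vertex of $V\cap N(w)$ to be labeled (a stopping time), that is a clean way to make the tower step rigorous, and it keeps the count at $\binom{C\log n}{m+1}$. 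Global time indices would instead give $\binom{|V|}{m+1}$ terms and ruin the bound.
\item The subset bound holds conditionally on $V$ for all $w$ simultaneously. So a plain union bound over $w\in V$, with $|V|=O(n)$ and $c_2(m+1)>1$, suffices; neither Mecke nor the $o(n^{-2})$ rate is needed.
\end{itemize}
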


The main idea behind the proof of the last proposition is an inductive argument based on the amount of affinity information available for a vertex $w$. More precisely, if we can control the number of errors in the source of $w$ (as is the case, in particular, for vertices close to the initial set) and $w$ has sufficiently much affinity information, then Proposition~\ref{lemma:one_step_error_bound} implies that $w$ is likely to be labeled correctly. It therefore remains to show that controlling the number of errors in the labeled neighborhood of $w$ guarantees the availability of sufficient affinity information. We establish this by showing that, with high probability, there exists a chain of $\ell$-close blocks, each containing sufficiently many vertices of every relevant community. We begin with the following definition.
\invisible{
\begin{lemma}[Propagation error bound]\label{lemma:one_step_error_bound}
    Let $\tau \geq 1$. Assume $u_\tau$ is the vertex labeled at step $\tau$ using Phase II of Algorithm~\ref{alg:exact-recovery},
    and the source $S$ on $u_\tau$ at step $\tau$ has at most $M$ errors. Under Assumption \ref{ass:bounded-ll}, for any permissible relabeling $\omega^* \in \Omega_{\pi, P}$, we have, for $\rho_2 \coloneqq (k-1)e^{2M\rho}$,
    \begin{equation}
        \mathbb P(\hat\sigma(u_\tau) \neq \omega^* \circ \sigma^*(u_\tau) \mid \mathcal F_{\tau-1}) \leq \rho_2 e^{-\mathcal{I}_S(u_\tau)}.
    \end{equation}
\end{lemma}
\begin{proof}
     Let $\omega^* \circ \sigma^*(u) = a$ and fix $b\neq a$. If $\hat \sigma(u) = b$, then we have
    \begin{equation}
        \Lambda_b \coloneqq \sum_{v \in S} \log \frac{\bar p_{b \hat{\sigma}(v)}(X_{uv};\|u-v\|)}{\bar p_{a \hat{\sigma}(v)}(X_{uv};\|u-v\|)} \geq 0.
    \end{equation}
    Then, by Markov's inequality, we have for all $t \in (0,1)$, 
    \begin{align}
        \mathbb{P}(\Lambda_b \geq 0 \mid \mathcal{F}_{\tau-1}) &= \mathbb{P}(e^{t\Lambda_b} \geq 1 \mid \mathcal{F}_{\tau-1}) \\
        &\leq \mathbb{E}(e^{t\Lambda_b} \mid \mathcal{F}_{\tau-1}) \\
        &= \prod_{v \in S} \mathbb{E}_{X_{uv} \sim p_{a\sigma^*(v)} }\left(\left(\frac{\bar{p}_{b \hat{\sigma}(v)}(X_{uv};\|u-v\|)}{\bar{p}_{a \hat{\sigma}(v)}(X_{uv};\|u-v\|)}\right)^t \mid \mathcal {F}_{\tau-1}\right) \\
        &\leq \prod_{\substack{v \in S \\ \hat{\sigma}(v) \neq \omega^* \circ  \sigma^*(v)}} e^{t\rho} \cdot \prod_{\substack{v \in S \\ \hat{\sigma}(v) = \omega^* \circ \sigma^*(v)}} \phi_t \left(\bar{P}_{b\hat{\sigma}(v)}(\|u-v\|), \bar{P}_{a\hat{\sigma}(v)}(\|u-v\|)\right). \\
    \end{align}
    Under Assumption~\ref{ass:bounded-ll}, for all $a,b,a',b' \in \mathcal Z$, $t\in (0,1)$, $x \in \mathcal X$, $\phi_t(\bar P_{ab}(x), \bar P_{a'b'}(x)) \geq e^{-\rho}$. Then, since we have at most $M$ errors in the source, we have by a union bound over $b \neq a$ and by taking the infimum over $t \in (0,1)$,
    \begin{align}
        \mathbb P(\hat\sigma(u_\tau) \neq \omega^* \circ \sigma^*(u_\tau) \mid \mathcal F_{\tau-1})
        & \leq \rho_2 \max_{(a,b) \in \mathcal D} \inf_{t\in (0,1)} \prod_{v \in S} \phi_t \left(\bar{P}_{b\hat{\sigma}(v)}(\|u-v\|), \bar{P}_{a\hat{\sigma}(v)}(\|u-v\|)\right)  \\
        &\leq \rho_2 e^{-\mathcal{I}_S(u_\tau)}.
    \end{align}
\end{proof}
}

\invisible{
\begin{lemma}\label{lem:binomial_domination}
    Let $(\mathcal F_t)_{t\ge 0}$ be a filtration, and let $((Z_t)_{t\ge 1}$ an adapted sequence of random variables with value in $\{0,1\}$. Assume there exists $p\in[0,1]$ such that $\mathbb{P}(Z_t=1\mid \mathcal F_{t-1})\le p$ for all $t\ge 1$. Let $T\subset \mathbb N$ a finite set such that $\{t \in T\} \in \mathcal{F}_{t-1}$ for every $t\geq 1$. Then
    $\sum_{t\in T} Z_t \preceq \operatorname{Bin}(|T|,p)$.
\end{lemma}
\begin{proof}
We order $T$ as $t_1<\cdots < t_m $ when $|T|=m$. Denote $S_j\coloneqq \sum_{i=1}^j Z_{t_i}$ with $S_0\coloneqq 0$.
By induction on $m$. If $m=0$, then $T = \emptyset$ almost surely and the result is immediate.
Assume the result holds for every set of size $m-1$, and let $x\in \mathbb Z$ since both random variables are discrete,
\begin{align}
    \mathbb P\left(S_m\ge x\right)
    &
     = \mathbb P\left(S_{m-1}\ge  x \right)
    + 
    \mathbb E\left(
        \mathbf 1_{\{ S_{m-1} =  x-1\}}
        \mathbb{P}(Z_{t_m} = 1\mid \mathcal F_{t_m-1})
    \right) \\
    &\le
    \mathbb P\left(S_{m-1}\ge x \right)
    +
    p\mathbb P\left(S_{m-1}= x  -1\right)\\
    &=
    (1-p)\mathbb P\left(S_{m-1}\ge  x \right)
    +
    p\mathbb P\left(S_{m-1}\ge  x  -1\right) \\ 
    &\leq(1-p)\mathbb P\left(\operatorname{Bin}(m-1,p)\ge x \right) + 
    p\mathbb P\left(\operatorname{Bin}(m-1,p)\ge  x -1\right)\\
    &= \mathbb P\left(\operatorname{Bin}(m,p)\ge  x \right).
\end{align}
\end{proof}
}

\begin{definition}[Proximity chain]
A sequence of blocks $\mathcal P=(B_1,\ldots,B_k)$ is called a
proximity chain if, for every $i<k$,
\[
\sup_{x\in B_i, z\in B_{i+1}}\|x-z\|
\le \ell(\log n)^{1/d}.
\]
\end{definition}

The next lemma connects proximity chains to the intervals used to identify witness communities. For $v \in \mathcal S_{d,n}$ and $I \subseteq [0,r]$ recall the definition
\[
\bar{\mathcal A}_I(v):=\Big\{u \in \mathcal S_{d,n}:\,\frac{\|v-u\|}{(\log n)^{1/d}} \in I\Big\}.
\]

\begin{lemma}[Past witness block]
\label{lem:past_witness_block}
Let $v\in B$ and let $\mathcal P=(B_1,\ldots,B_i=B)$ be a proximity chain such that
\[
\sup_{x\in \bigcup_{j<i}B_j}\|x-v\|
\ge r(\log n)^{1/d}.
\]
Let $I=[\alpha,\beta]\subseteq[0,r]$ satisfy
$\beta-\alpha\ge \ell$. Then there exists $j<i$ such that $B_j\subseteq \bar{\mathcal A}_I(v)$. 
\end{lemma}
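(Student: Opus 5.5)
We prove the statement by a discrete intermediate-value argument along the chain. Throughout, distances are rescaled by $(\log n)^{1/d}$. For each block $B_j$ set
\[
m_j:=\min_{x\in B_j}\frac{\|x-v\|}{(\log n)^{1/d}},\qquad M_j:=\max_{x\in B_j}\frac{\|x-v\|}{(\log n)^{1/d}}.
\]
Both are attained, since blocks are compact. Then $B_j\subseteq\bar{\mathcal A}_I(v)$ holds as soon as $\alpha\le m_j$ and $M_j\le\beta$.

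We use three elementary facts, each resting only on the triangle inequality for the toroidal metric.
\begin{enumerate}
\item Each block has rescaled diameter $\sqrt d\,\ell\chi^{1/d}<\ell$. For $d\ge2$ this follows from the second condition in \eqref{eq:condition_chi_0}, and for $d=1$ from $\chi<1/2$. Hence $M_j-m_j\le\ell$ for every $j$.
\item If $B_j,B_{j+1}$ are consecutive in the chain, then every $x\in B_j$ and $z\in B_{j+1}$ satisfy $\bigl|\|x-v\|-\|z-v\|\bigr|\le\ell(\log n)^{1/d}$.
\item Since $v\in B_i$ and $B_{i-1}$ is $\ell$-close to $B_i$, we get $M_{i-1}\le\ell\le\beta$.
\end{enumerate}

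The hypothesis says that $\max_{j<i}M_j\ge r\ge\beta$. So the index $j^*:=\max\{j<i:\,M_j\ge\beta\}$ is well defined. We split into two cases.

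\emph{Case $M_{j^*}=\beta$.} The block $B_{j^*}$ itself works. Indeed $M_{j^*}\le\beta$, and by fact 1 we have $m_{j^*}\ge\beta-\ell\ge\alpha$, using $\beta-\alpha\ge\ell$.

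\emph{Case $M_{j^*}>\beta$.} Fact 3 forces $j^*<i-1$, so $B_{j^*+1}$ is an earlier block of the chain, that is, its index is still $<i$. By maximality of $j^*$ we get $M_{j^*+1}<\beta$. Now pick $x\in B_{j^*}$ with $\|x-v\|>\beta(\log n)^{1/d}$. By fact 2, every $z\in B_{j^*+1}$ satisfies $\|z-v\|\ge(\beta-\ell)(\log n)^{1/d}\ge\alpha(\log n)^{1/d}$. Thus $m_{j^*+1}\ge\alpha$, and $B_{j^*+1}\subseteq\bar{\mathcal A}_I(v)$.

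The only delicate point is bookkeeping at the endpoints. One must ensure that the selected block has index strictly less than $i$; this is why the equality case $M_{j^*}=\beta$ is treated separately, and why fact 3 is needed. One must also ensure that no block is too wide to fit inside an annulus of width $\ell$; this is fact 1, which comes from the choice of $\chi$. Everything else is a direct application of the triangle inequality.
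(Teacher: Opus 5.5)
Your proof is correct and is essentially the paper's argument: pick a last-crossing index along the chain, then combine the triangle inequality, block diameter at most $\ell$, and $\ell$-closeness of consecutive blocks. The only difference is that you threshold on the outer radius (last $j<i$ with $M_j\ge\beta$, then pass to $B_{j^*+1}$, using fact 3 and the tie case $M_{j^*}=\beta$ at the endpoint), whereas the paper thresholds on the inner radius (last $j<i$ with $\inf_{x\in B_j}\|x-v\|\ge\alpha(\log n)^{1/d}$, takes $B_{j^*}$ itself, and treats $\alpha=0$ separately).
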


\begin{proof}
If $\alpha=0$, take $j=i-1$. Since $v\in B_i$ and $\mathcal P$ is a proximity chain,
for every $x\in B_{i-1}$,
\[
\|v-x\|
\le \ell(\log n)^{1/d}
\le \beta(\log n)^{1/d},
\]
and hence $B_{i-1}\subseteq \bar{\mathcal A}_I(v)$.

Assume now that $\alpha>0$, and let
\[
j^*
=
\max\left\{
j<i:
\inf_{x\in B_j}\|v-x\|
\ge\alpha(\log n)^{1/d}
\right\}.
\]
The set is nonempty. Indeed, by assumption there exist $j_0<i$ and
$y\in B_{j_0}$ such that
\[
\|v-y\|\ge r(\log n)^{1/d}.
\]
For every $x\in B_{j_0}$, since consecutive blocks are within $\ell(\log n)^{1/d}$, and $r-\ell \ge \alpha$,
\[
\|v-x\|
\ge \|v-y\|-\|x-y\|
\ge (r-\ell)(\log n)^{1/d}
\ge \alpha(\log n)^{1/d}.
\]

If $j^*<i-1$, by maximality there exists $z\in B_{j^*+1}$ such that $
\|v-z\|<\alpha(\log n)^{1/d}$.
If $j^*=i-1$, take $z=v\in B_i$, for which the same inequality holds.

For every $x\in B_{j^*}$, since $\mathcal P$ is a proximity chain,
\[
\|v-x\|
\le
\|v-z\|+\|z-x\|
\le
(\alpha+\ell)(\log n)^{1/d}
\le
\beta(\log n)^{1/d}.
\]
By the definition of $j^*$, $\|v-x\|\ge\alpha(\log n)^{1/d}$.
Hence $B_{j^*}\subseteq \bar{\mathcal A}_I(v)$. Since $j^*<i$, this block
belongs to the past of $v$.
\end{proof}

\begin{lemma}[Spread proximity chain] \label{lem:spread_chain} Assume that Algorithm \ref{alg:exact-recovery} is used. Then with high probability, at every step in Phase II of the algorithm, there exists a not fully labeled block $B$ and fully labeled, $\delta$-spread blocks $B_1,\dots,B_i$  such that $(B_1,\dots,B_i,B)$ is a proximity chain and for all $v \in B$,
\begin{align}
\sup_{x\in \bigcup_{j\le i}B_j}\|x-v\|
\ge r(\log n)^{1/d}.\label{eq:length}
\end{align}
\end{lemma}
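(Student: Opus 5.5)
We work on the high-probability event of Proposition~\ref{prop:connectivity}: the graph $H_{\mathrm{spread}}$ of $\delta$-spread blocks is connected, and every block has an $\ell$-close $\delta$-spread neighbour. We also intersect with the event that $B_{\mathrm{init}}$ contains at least one $\delta$-spread block fully inside it. This holds with high probability, since $B_{\mathrm{init}}$ has side $2r(\log n)^{1/d}$ while blocks have side $\ell\chi^{1/d}(\log n)^{1/d}$, so $B_{\mathrm{init}}$ contains $\Theta(1)$ blocks. By Lemma~\ref{lemma:certified_block} these blocks cannot all be non-spread once they form a cluster of size at least $K_d$; if necessary, one shrinks $\chi$, which is allowed because \eqref{eq:condition_chi_0} holds for all small $\chi$. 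Since Algorithm~\ref{alg:exact-recovery} does not use blocks, the partition $\mathcal P_\chi$ serves purely as an analysis device. The statement is therefore deterministic given the configuration of $V$ and the current labeled set $\hat V$.

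Fix a step of Phase II, so $V\setminus\hat V\neq\emptyset$. Let $\mathcal L$ be the set of fully labeled $\delta$-spread blocks, i.e.\ those with $V(B)\subseteq \hat V$. It contains a block $B_0\subseteq B_{\mathrm{init}}$. Let $\mathcal C$ be the connected component of $B_0$ in the graph on $\mathcal L$ with $\ell$-closeness as adjacency. Since $H_{\mathrm{spread}}$ is connected, either some $\delta$-spread block that is not fully labeled is $\ell$-close to $\mathcal C$, or every spread block is fully labeled. In the latter case, take any block $B$ containing an unlabeled vertex. It has an $\ell$-close spread neighbour, which lies in $\mathcal C=\mathcal L$. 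In either case we obtain a block $B$ that is not fully labeled and is $\ell$-close to some $B_i\in\mathcal C$. Following a path in $\mathcal C$ from $B_0$ to $B_i$ yields a proximity chain $(B_1=B_0,\dots,B_i,B)$ of fully labeled spread blocks.

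It remains to ensure \eqref{eq:length}, which is the main obstacle. The plan is to extend the chain backwards. If $B$ lies far from $B_{\mathrm{init}}$, at distance at least $r(\log n)^{1/d}$ plus one block diameter, then $B_1\subseteq B_{\mathrm{init}}$ already gives \eqref{eq:length} for every $v\in B$. If $B$ is close to $B_{\mathrm{init}}$, we instead prepend to the chain a path inside $B_{\mathrm{init}}$ of spread blocks leading to a spread block $B'\subseteq B_{\mathrm{init}}$ that is at distance at least $r(\log n)^{1/d}$ from $B$. Such a $B'$ exists because $B_{\mathrm{init}}$ has diameter $2\sqrt d\, r(\log n)^{1/d}$: a block near the face or corner of $B_{\mathrm{init}}$ opposite to $B$ is at distance at least roughly $r(\log n)^{1/d}$ from $B$. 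Blocks inside $B_{\mathrm{init}}$ are fully labeled from Phase I. The step requiring care is connecting $B'$ to $B_0$ through spread blocks inside $B_{\mathrm{init}}$. For this we apply Lemma~\ref{lemma:geometric_boundary_cluster}, or the $d=1$ argument, locally. In a window of $O(1)$ blocks, the absence of a non-spread cluster of size $K_d$ (Lemma~\ref{lemma:certified_block}) makes the spread blocks $\ell$-connected. If the boundary effects of $B_{\mathrm{init}}$ cause trouble, we would enlarge the analysis region to the spread blocks whose vertices were all labeled by Phase I. Repetition of blocks in the chain is harmless for the definition of a proximity chain.

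Finally, since $\mathcal C$ only grows over the course of Phase II (labels are never removed), the construction applies at every step. All events used are the fixed high-probability events above, so the statement holds simultaneously for all steps with high probability.
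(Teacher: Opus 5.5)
\textbf{There is a genuine gap in the step that secures \eqref{eq:length}.}

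Your outline is the paper's. You take the component $\mathcal C$ of fully labeled $\delta$-spread blocks containing the initial core. You then use Proposition~\ref{prop:connectivity} to find a not fully labeled block $B$ that is $\ell$-close to $\mathcal C$, and your two-case argument for this is correct. Finally you obtain \eqref{eq:length} by starting the chain at an initial block far from $B$.

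The gap is the step you flag yourself. You only secure one spread block $B_0\subseteq B_{\mathrm{init}}$, and it may lie right next to $B$. Nothing you cite shows two things you need:
\begin{itemize}
\item that a block $B'\subseteq B_{\mathrm{init}}$ on the far side of $B_{\mathrm{init}}$ is $\delta$-spread;
\item that $B'$ is joined to $B_0$ through \emph{fully labeled} spread blocks.
\end{itemize}

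Lemma~\ref{lemma:geometric_boundary_cluster} cannot be applied locally. It is a global statement about the whole torus $\Lambda_m$, in which every block has exactly $K_d$ close neighbours, and it has no version for a window with boundary. Nor does the event of Lemma~\ref{lemma:certified_block} give local connectivity. Take $d=2$ with $\ell$ close to $r$ and $\chi$ small. A band of non-spread blocks of width $\ell(\log n)^{1/2}$ cutting across $B_{\mathrm{init}}$ contains about $2r/(\ell\chi)$ blocks, which is fewer than $K_2\approx\pi/\chi$. Yet this band splits the spread blocks of $B_{\mathrm{init}}$ into pieces that are joined in $H_{\mathrm{spread}}$ only through blocks outside $B_{\mathrm{init}}$. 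Those blocks need not be fully labeled at the current step; at the first step of Phase~II essentially none are. Your fallback of enlarging the region to blocks labeled in Phase~I does not help, since these are essentially the blocks of $B_{\mathrm{init}}$.

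\textbf{The missing observation, which the paper uses, is that every block inside $B_{\mathrm{init}}$ is $\delta$-spread with high probability.}
\begin{itemize}
\item $B_{\mathrm{init}}$ contains only $O(1)$ blocks.
\item For each such block, $V(B)$ and each $\sum_{w\in W_{ab}(I_{ab})}V_w(B)$ are Poisson with mean at least $\lambda\pi_{\mathrm{wit}}\ell^d\chi\log n$.
\item This mean exceeds the spread thresholds by a constant factor, because $\delta<\delta_0<\lambda\pi_{\mathrm{wit}}\ell^d\chi/8$.
\item Lemma~\ref{lemma:poisson_concentration} and a union bound over these $O(1)$ blocks then give the claim.
\end{itemize}

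These blocks are fully labeled after Phase~I and are $\ell$-connected through neighbouring grid blocks. So the entire initial core lies in $\mathcal C$, and you may start the chain at a core block near the face of $B_{\mathrm{init}}$ opposite to $B$. Your geometric estimate then yields \eqref{eq:length}. With this, the appeal to Lemma~\ref{lemma:certified_block} and the shrinking of $\chi$ become unnecessary.
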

\begin{proof}
Note that the desired property holds if all blocks in $B_{\mathrm{init}}$ are $\delta$-spread labeled, the graph $H_{\mathrm{spread}}$ is connected and every block is $\ell$-close to a $\delta$-spread block. Let $\mathcal C$ be the connected component of fully labeled
$\delta$-spread blocks containing the initial spread blocks. If Phase~II
has not terminated, choose a not fully labeled block $B$ that is
$\ell$-close to $\mathcal C$. Since $H_{\mathrm{spread}}$ is connected,
there is a proximity chain from the initial spread core to $B$ whose
intermediate spread blocks belong to $\mathcal C$. Choosing the initial
block sufficiently far from $B$ yields \eqref{eq:length}. This event occurs with high probability by Proposition \ref{prop:connectivity} and Lemma~\ref{lemma:poisson_concentration}.
\end{proof}

We introduce a partial ordering on $V$ and write $u<v$ for vertices $u,v \in V$ if $u$ is labeled before $v$ by the estimator $\widehat \sigma$. Let $\widehat V_v:=\{u \in V:\,u<v\}$ be the set of nodes labeled before $v$.

For a vertex $v$ labeled during Phase~II, let $S(v)$ denote the
source actually used to label $v$. Thus, for
Algorithm~\ref{alg:exact-recovery},
\[
S(v):=\widehat V_v\cap N(v),
\]
whereas for Algorithm~\ref{alg:phase2_block},
\[
S(v):=
N(v)\cap
\bigcup_{j\in \widehat V_{\mathrm{cert}}^\dagger(v)}
(V\cap B_j),
\]
where $\widehat V_{\mathrm{cert}}^\dagger(v)$ denotes the collection
of certified blocks available immediately before the block containing
$v$ is processed.

\begin{lemma}[Enough affinity information]
\label{lem:enough_info}
There exists $c_2>0$ such that the following holds.
Fix $m\in\mathbb N$ and assume that
Algorithm~\ref{alg:exact-recovery} or
Algorithm~\ref{alg:phase2_block} is used.
For $v\in V$, let
\[
\mathcal G_v
:=
\bigcap_{w\in V}
\left\{
\left|
\left\{
u\in\widehat V_v\cap N(w):
\widehat\sigma(u)
\notin
\{\omega^\star\circ\sigma^\star(u),*\}
\right\}
\right|
\le m
\right\}.
\]
Then
\[
\P\left(
\bigcup_{v\in V\setminus V_{\mathrm{init}}}
\left\{
\mathcal G_v
\cap
\left\{
\mathcal I_{S(v)}(v)\le c_2\log n
\right\}
\right\}
\right)
=o(1).
\]
\end{lemma}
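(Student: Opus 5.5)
The plan is to reduce the statement to a deterministic-geometry claim plus a Poisson large-deviation estimate. On the high-probability event of Proposition~\ref{prop:connectivity}, Corollary~\ref{cor:poisson_upper_concentration} and Lemma~\ref{lem:spread_chain}, every vertex $v\in V\setminus V_{\mathrm{init}}$ labeled in Phase~II has available a proximity chain $(B_1,\dots,B_i,B)$ of fully labeled (for Algorithm~\ref{alg:phase2_block}: certified) $\delta$-spread blocks ending at the block $B\ni v$, with \eqref{eq:length}. For Algorithm~\ref{alg:exact-recovery} I will use the greedy rule: the selected $v$ maximizes $\mathcal I_{N(\cdot)\cap\hat V}(\cdot)$, so it suffices to show that some unlabeled vertex $v'$ in the block $B$ of Lemma~\ref{lem:spread_chain} has large affinity information. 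Affinity information is monotone in the source, so the bound for $v'$ transfers to the selected $v$.

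Fix a pair $a\neq b$. By Lemma~\ref{lem:past_witness_block}, applied with $I=I_{ab}$ (length $\ell$), there is $j<i$ with $B_j\subseteq\bar{\mathcal A}_{I_{ab}}(v')$, so $B_j\subseteq N(v')$ and all vertices of $B_j$ lie in the source. Since $B_j$ is $\delta$-spread, it contains at least $\tfrac34\delta\pi_{\mathrm{wit}}\log n$ vertices with true labels in $W_{ab}(I_{ab})$. On $\mathcal G_v$ at most $m$ of them are mislabeled, so at least $\tfrac34\delta\pi_{\mathrm{wit}}\log n-m$ vertices $u\in B_j$ satisfy $\hat\sigma(u)=\sigma^*(u)=w\in W_{ab}(I_{ab})$ and $\|u-v'\|/(\log n)^{1/d}\in I_{ab}$.

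For these vertices we need $-\log\phi_t(\bar P_{aw},\bar P_{bw})\ge\log(1/\Phi_1)$ for a positive fraction of them, at a single $t$ (say $t=1/2$; the $\sup_t$ in \eqref{eq:def_I} only helps). The difficulty is that $\phi_{1/2}<1$ holds only a.e.\ on $I_{ab}$, not uniformly. The plan is as follows.
\begin{itemize}
\item Use Lemma~\ref{lem: informative interval} and Corollary~\ref{cor:D} to get $\Phi_1<1$ such that the set of distances in $I_{ab}$ with $\phi_{1/2}\ge\Phi_1$ has measure at most $\epsilon\ell$.
\item Control the number of points of a $\delta$-spread block falling in this bad shell. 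Since points of $B_j$ are Poisson, the bad region intersected with $B_j$ has volume that I can bound by a small multiple of $\log n$, uniformly in $v'$ over a fine grid. A Poisson Chernoff bound (Lemma~\ref{lemma:poisson_concentration}) with a union bound over $O(n)$ grid points and blocks then shows that, with probability $1-o(1)$, fewer than $\tfrac14\delta\pi_{\mathrm{wit}}\log n$ points land there.
\end{itemize}
This is the main obstacle. The bad distance set is not an interval, so its intersection with a block must be handled by discretizing it into finitely many shells via the $\epsilon$ of Lemma~\ref{lem: informative interval}, and $\epsilon$ must be chosen small relative to $\delta\chi$.

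Combining the two steps, each pair $(a,b)$ receives a contribution of at least $(\tfrac14\delta\pi_{\mathrm{wit}}\log n-m)\log(1/\Phi_1)$, while all other summands are nonnegative. Taking the minimum over the finitely many pairs gives $\mathcal I_{S(v)}(v)\ge c_2\log n$ with, e.g., $c_2=\tfrac18\delta\pi_{\mathrm{wit}}\log(1/\Phi_1)$ for $n$ large. For Algorithm~\ref{alg:phase2_block} the argument is the same, except that the witness block $B_j$ must be certified. On $\mathcal G_v$, a $\delta$-spread block with at most $m$ errors satisfies \eqref{def:certified}, since $\tfrac34\delta\pi_{\mathrm{wit}}\log n-m\ge\tfrac12\delta\pi_{\mathrm{wit}}\log n$. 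Hence spread blocks along the chain are certified, and the chain provided by Proposition~\ref{prop:connectivity} lies in $\hat V^\dagger_{\mathrm{cert}}$.
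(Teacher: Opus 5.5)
Your overall structure matches the paper's: the greedy transfer to an unlabeled vertex $v'$ of the frontier block, Lemma~\ref{lem:past_witness_block} to place a witness block in $\bar{\mathcal A}_{I_{ab}}(v')$, at most $m$ errors there, and a Poisson tail bound for vertices at non-informative distances. However, the step you flag as the main obstacle does not work as you plan it. The model imposes no regularity on $y\mapsto P_{ab}(y)$. Lemma~\ref{lem: informative interval} only controls the \emph{measure} of the bad set $E_\Phi:=\{y\in I_{ab}:\phi_{1/2}(P_{aw}(y),P_{bw}(y))\ge\Phi\}$. This set can contain a dense open subset of $I_{ab}$ for every $\Phi<1$ while $|E_\Phi|$ stays small and $\phi_{1/2}<1$ a.e. For example, let $\phi_{1/2}$ increase to $1$ along a decreasing sequence of dense open sets of vanishing measure. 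Such an $E_\Phi$ cannot be covered by finitely many shells of small total length. Worse, no bound uniform over the positions in a grid cell can hold. Take any finitely many points $x_1,\dots,x_N$ whose scaled distances from every point of the cell lie in the interior of $I_{ab}$. The positions $v'$ in the cell with $\|x_i-v'\|/(\log n)^{1/d}\in E_\Phi$ for all $i$ contain a finite intersection of dense open sets, so they are nonempty. A union bound over deterministic grid points therefore says nothing about an arbitrary vertex $v'$.

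The missing idea is that you need the bound only at actual vertex locations, and these are themselves Poisson points. The paper takes the union bound over $u\in V\setminus V_{\mathrm{init}}$ via the Mecke equation (Proposition~\ref{thm:mecke}). For each \emph{fixed} $u$, the region $S(u,\Phi)$ of points at bad distances (a union over $a\neq b$ and $w\in W_{ab}(I_{ab})$) has volume at most $\epsilon\log n$. So $V(S(u,\Phi))$ is Poisson with mean at most $\lambda\epsilon\log n$, and integrating the Chernoff bound over $u$ gives $o(1)$. This needs no grid and no regularity, and it covers whichever vertex the greedy comparison selects.

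A smaller issue concerns Algorithm~\ref{alg:phase2_block}. Frontier blocks are processed in index order, so a spread path from Proposition~\ref{prop:connectivity} need not have been processed before the block of $v$. Your claim that this path lies in $\hat V^\dagger_{\mathrm{cert}}$ is therefore unjustified. Use instead the predecessor chain of certified blocks that the algorithm itself generates, extended inside the initial core. Certification together with $\mathcal G_v$ then directly gives at least $\tfrac12\delta\pi_{\mathrm{wit}}\log n-m$ correctly labeled witness vertices in the block supplied by Lemma~\ref{lem:past_witness_block}.
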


\begin{proof}
Work on the high-probability event of
Lemma~\ref{lem:spread_chain}. Suppose that a vertex $v$ is labeled
during Phase~II and that $\mathcal G_v$ and $\mathcal I_{S(v)}(v)\le c_2\log n$
hold. We distinguish the two algorithms.

\medskip
\noindent
\emph{Algorithm~\ref{alg:exact-recovery}.}
Let $B$ and fully labeled $\delta$-spread blocks
$B_1,\ldots,B_i$ be supplied by
Lemma~\ref{lem:spread_chain}, and choose any unlabeled vertex
$u\in B$.
Since Algorithm~\ref{alg:exact-recovery} chooses $v$ to maximize
the affinity information among all currently unlabeled vertices,
\[
\mathcal I_{\widehat V_v\cap N(u)}(u)
\le
\mathcal I_{\widehat V_v\cap N(v)}(v)
=
\mathcal I_{S(v)}(v)
\le
c_2\log n.
\]
Define
\[
T(u)
:=
\widehat V_v\cap N(u)
\cap
\bigcup_{j=1}^i B_j.
\]
Since $T(u)\subseteq \widehat V_v\cap N(u)$, monotonicity of the affinity information in its source gives
\[
\mathcal I_{T(u)}(u)
\le
\mathcal I_{\widehat V_v\cap N(u)}(u)
\le
c_2\log n.
\]

\medskip
\noindent
\emph{Algorithm~\ref{alg:phase2_block}.}
Let $B$ be the block containing $v$. By construction of
Algorithm~\ref{alg:phase2_block}, $B$ is $\ell$-close to an already
labeled certified block. Iterating the predecessor relation gives
fully labeled certified blocks $B_1,\ldots,B_i$ such that
$(B_1,\ldots,B_i,B)$ is a proximity chain.

Extending this chain inside the initially labeled certified core if
necessary, we may assume that
\[
\sup_{x\in\bigcup_{j=1}^i B_j}\|x-v\|
\ge r(\log n)^{1/d}.
\]

Then, on $\mathcal G_v$, every fully labeled $\delta$-spread block
$B_j$ contains at most $m$ incorrectly labeled vertices. Hence, for
$n$ sufficiently large,
\[
\sum_{w\in W_{ab}(I_{ab})}
\left|
\left\{
x\in V\cap B_j:
\widehat\sigma(x)=w
\right\}
\right|
\geq
\frac34\delta\pi_{\mathrm{wit}}\log n-m
\geq
\frac12\delta\pi_{\mathrm{wit}}\log n
\]
for every $a\neq b$.

By Lemma~\ref{lem:past_witness_block}, for every $a\neq b$ there
exists $j\leq i$ such that
\[
B_j\subseteq\bar{\mathcal A}_{I_{ab}}(u).
\]

Fix $t=1/2$ and $\Phi\in(0,1)$. For every $a\neq b$, let $B_{j(a,b)}$ be the block supplied by
Lemma~\ref{lem:past_witness_block}. Then
\[
\mathcal I_{T(u)}(u)
\ge
\min_{a\neq b}
\sum_{x\in V\cap B_{j(a,b)}}
-\log
\phi_{1/2}\left(
\bar P_{a\widehat\sigma(x)}(\|u-x\|),
\bar P_{b\widehat\sigma(x)}(\|u-x\|)
\right).
\]
Then, if $\mathcal I_{T(u)}(u)
\leq c_2\log n$, then, for some $a\neq b$, at most
\[
\frac{c_2}{\log(1/\Phi)}\log n
\]
of the correctly labeled $W_{ab}(I_{ab})$-vertices in $B_j$ can
satisfy
\[
\phi_{1/2}\left(
\bar P_{a\hat\sigma(x)}(\|u-x\|),
\bar P_{b\hat\sigma(x)}(\|u-x\|)
\right)
\leq\Phi.
\]
Consequently, at least
\[
\left(
\frac12\delta\pi_{\mathrm{wit}}
-
\frac{c_2}{\log(1/\Phi)}
\right)\log n
\]
vertices in $B_{j(a,b)}$ satisfy
\[
\phi_{1/2}\left(
\bar P_{a\hat\sigma(x)}(\|u-x\|),
\bar P_{b\hat\sigma(x)}(\|u-x\|)
\right)
>\Phi.
\]
Thus, with 
\[
S(u,\Phi)
:=
\bigcup_{a\neq b}
\bigcup_{w\in W_{ab}(I_{ab})}
\left\{
x\in\bar{\mathcal A}_{I_{ab}}(u):
\phi_{1/2}\left(
\bar P_{aw}(\|u-x\|),
\bar P_{bw}(\|u-x\|)
\right)>\Phi
\right\},
\]
we have that
\[
V(S(u,\Phi)) \ge \left(
\frac12\delta\pi_{\mathrm{wit}}
-
\frac{c_2}{\log(1/\Phi)}
\right)\log n,
\]
which is at least $\delta \pi_{\mathrm{wit}} \log n/4$ for $c_2<\delta \pi_{\mathrm{wit}}\log (1/\Phi)/4$.
Next, by Lemma \ref{lem: informative interval}, for any $\epsilon$ there exists some $\Phi<1$ such that, uniformly in $u$, $|S(u,\Phi)|\le \epsilon \log n$.

Thus, by the Mecke equation (Proposition \ref{thm:mecke}), 
\begin{align}
&\P\Big(\bigcup_{u \in V\setminus V_{\mathrm{init}}} \Big\{V(S(u,\Phi)) \ge \frac 14 \delta \pi_{\mathrm{wit}} \log n \Big\}\Big)\nonumber\\
&\le \E \Big[\sum_{u \in V\setminus V_{\mathrm{init}}} \1\Big\{V(S(u,\Phi)) \ge 
\frac14\delta\pi_{\mathrm{wit}}
\log n\Big\}\Big]\nonumber\\
&= \lambda\int_{\mathcal S_{d,n}\setminus B_{\mathrm{init}}} \P\Big(V(S(u,\Phi)) \ge 
\frac14\delta\pi_{\mathrm{wit}}
\log n\Big)\,\mathrm d u.\label{eq:SuPhi}
\end{align}

Further note that $V(S(u,\Phi))$ is Poisson($\lambda |S(u,\Phi)|$)-distributed. Hence, \eqref{eq:SuPhi} is bounded by
\[
\lambda |\mathcal S_{d,n}\setminus B_{\mathrm{init}}| \,\P(\mathrm{Poisson}(\lambda \epsilon \log n) \ge \frac 14 \delta \pi_{\mathrm{wit}} \log n\Big),
\]
which is of order $o(1)$ for $\epsilon>0$ small enough.
\end{proof}

\begin{proof}[Proof of Proposition \ref{prop:abs_propagation}]
Let $c_2>0$ be the constant from Lemma~\ref{lem:enough_info}. Let
\[
\mathcal G_v:= \bigcap_{w\in V}
\{
\left|
\left\{
u\in V_v\cap N(w):
\widehat\sigma(u)
\notin
\{\omega^\star\circ\sigma^\star(u),*\}
\right\}
\right|
\le m
\}.
\]
We
define the auxiliary events
\begin{align*}
\mathcal E_0
&\coloneqq
\bigcap_{u\in V_{\mathrm{init}}}
\left\{
\widehat\sigma(u)=\omega^\star\circ\sigma^\star(u)
\right\},\\
\mathcal H &\coloneqq \bigcap_{v\in V \setminus V_{\mathrm{init}}}
\left(
\mathcal G_v^c
\cup
\left\{
\mathcal I_{S(v)}(v)\ge c_2\log n
\right\}
\right),\\
\mathcal A &\coloneqq \{\text{all blocks contain at most }\Delta \log n\text{ vertices} \}.
\end{align*}
Then $\P(\mathcal E_0 \cap \mathcal H \cap \mathcal A)=1-o(1)$. Given $V$, let \(
\mathcal F_v
\coloneqq
\sigma\left(
V,\sigma^\star,
\{\widehat\sigma(u):u<v\}
\right)\) for $v \in V$. Then $\mathcal G_v$ and $\{
\mathcal I_{S(v)}(v)\ge c_2\log n\}$ are $\mathcal F_v$-measurable. Lemma~\ref{lemma:one_step_error_bound} implies that, on $\mathcal G_v\cap \{ \mathcal I_{S(v)}(v)\ge c_2\log n \}$,
we have
\begin{equation}
\label{eq:conditional-error}
\P\left(
\widehat\sigma(v)
\notin
\{\omega^\star\circ\sigma^\star(v),*\}
\,\middle|\,
\mathcal F_v
\right)
\le
(k-1)\exp\left\{
2\rho m-\mathcal I_{S(v)}(v)
\right\}
\le
(k-1)e^{2\rho m}n^{-c_2}=:p_n.
\end{equation}
Now fix $w\in V$, let $\xi_v
\coloneqq
\1_{\mathcal G_v}
\1\left\{
\mathcal I_{S(v)}(v)\ge c_2\log n
\right\}
\1\left\{
\widehat\sigma(v)
\notin
\{\omega^\star\circ\sigma^\star(v),*\}
\right\}$ and let
\[
I=\{v_1,\ldots,v_{m+1}\}
\subseteq V\cap N(w),
\qquad
v_1<\cdots<v_{m+1}.
\]
Note that \(\prod_{i=1}^{m}\xi_{v_i}
\in
\mathcal F_{v_{m+1}}\). Using the tower property and \eqref{eq:conditional-error},
\[
\E\left[
\prod_{i=1}^{m+1}\xi_{v_i}
\,\middle|\,V
\right]
=
\E\left[
\prod_{i=1}^{m}\xi_{v_i}
\E\left[
\xi_{v_{m+1}}
\mid
\mathcal F_{v_{m+1}}
\right]
\,\middle|\,V
\right]
\le
p_n
\E\left[
\prod_{i=1}^{m}\xi_{v_i}
\,\middle|\,V
\right].
\]
Iterating this argument yields
\[
\label{eq:product-bound}
\E\left[
\prod_{v\in I}\xi_v
\,\middle|\,V
\right]
\le
p_n^{m+1}.
\]
It follows that
\[
\P\left(
\sum_{v\in V\cap N(w)}\xi_v>m
\,\middle|\,V
\right)
\le
\sum_{\substack{
I\subseteq V\cap N(w)\\
|I|=m+1
}}
\E\left[
\prod_{v\in I}\xi_v
\,\middle|\,V
\right]
\le
\binom{|V\cap N(w)|}{m+1}p_n^{m+1}.
\]
On $\mathcal A$ we have that for some constant $C>0$,
\[
\sup_{w\in\mathcal S_{d,n}}
|V\cap N(w)|
\le C\log n.
\]
Hence, uniformly in $w$,
\[
\binom{V(N(w))}{m+1}p_n^{m+1}
\le
(C\log n)^{m+1} (k-1)^{m+1}
e^{2\rho m(m+1)}
n^{-c_2(m+1)}.
\]
We choose the constant $m$ sufficiently large that $c_2(m+1)>1$. In fact, choosing $m$ so that $c_2(m+1)>1+\varepsilon$ for some
$\varepsilon>0$ gives
\begin{equation}
\label{eq:local-tilde}
\sup_{w\in\mathcal S_{d,n}}
\P\left(
\sum_{v\in V\cap N(w)}\xi_v>m
\right)
=o(n^{-1}).
\end{equation}

It remains to show that the auxiliary variables $\xi_v$ control the
total number of propagation errors in every neighborhood.

Suppose that the conclusion of Proposition~\ref{prop:abs_propagation}
fails. Let
\[
v_1<v_2<\cdots
\]
denote the order in which vertices are labeled during Phase~II, and define
\[
t^\star
:=
\min\left\{
t:
\exists\, w\in V
\text{ such that }
\left|
\left\{
v_j:
j\le t,\,
v_j\in N(w),\,
\widehat\sigma(v_j)
\notin
\{\omega^\star\circ\sigma^\star(v_j),*\}
\right\}
\right|
\ge m+1
\right\}.
\]
Set
\[
v^\star:=v_{t^\star}.
\]

By the minimality of $t^\star$, before each vertex $v_j$ with
$j\le t^\star$ is labeled, every neighborhood contains at most $m$
incorrectly labeled vertices. Hence
\[
\mathcal G_{v_j}
\qquad\text{holds for every }j\le t^\star.
\]

Let $w\in V$ be a vertex witnessing the definition of $t^\star$, and
choose distinct vertices
\[
u_1,\ldots,u_{m+1}\in
\{v_1,\ldots,v_{t^\star}\}\cap N(w)
\]
such that
\[
\widehat\sigma(u_j)
\notin
\{\omega^\star\circ\sigma^\star(u_j),*\},
\qquad
j=1,\ldots,m+1.
\]
Since $\mathcal G_{u_j}$ holds for every $j$, on the event $\mathcal H$
we also have
\[
\mathcal I_{S(u_j)}(u_j)\ge c_2\log n.
\]
Therefore
\[
\xi_{u_j}=1,
\qquad
j=1,\ldots,m+1,
\]
and consequently
\[
\sum_{u\in V\cap N(w)}\xi_u>m.
\]

It follows that
\[
\mathcal E_0\cap\mathcal H\cap\mathcal A
\cap
\left\{
\exists\,w\in V:
\left|
\left\{
u\in V\cap N(w):
\widehat\sigma(u)
\notin
\{\omega^\star\circ\sigma^\star(u),*\}
\right\}
\right|
>m
\right\}
\]
is contained in
\[
\bigcup_{w\in V}
\left\{
\sum_{u\in V\cap N(w)}\xi_u>m
\right\}.
\]
Hence
\begin{align*}
&\mathbb P\left(
\exists\,w\in V:
\left|
\left\{
u\in V\cap N(w):
\widehat\sigma(u)
\notin
\{\omega^\star\circ\sigma^\star(u),*\}
\right\}
\right|
>m
\right)
\\
&\qquad\le
\mathbb P\bigl(
(\mathcal E_0\cap\mathcal H\cap\mathcal A)^c
\bigr)
+
\mathbb P\left(
\bigcup_{w\in V}
\left\{
\sum_{u\in V\cap N(w)}\xi_u>m
\right\}
\right).
\end{align*}
The first term is $o(1)$. Moreover, since $|V|\le \Delta n$ on
$\mathcal A$, \eqref{eq:local-tilde} and a union bound give
\[
\mathbb P\left(\mathcal A \cap
\bigcup_{w\in V}
\left\{
\sum_{u\in V\cap N(w)}\xi_u>m
\right\}
\right)
=o(1).
\]
Therefore
\[
\mathbb P\left(
\bigcap_{w\in V}
\left\{
\left|
\left\{
u\in V\cap N(w):
\widehat\sigma(u)
\notin
\{\omega^\star\circ\sigma^\star(u),*\}
\right\}
\right|
\le m
\right\}
\right)
=
1-o(1),
\]
which proves the proposition.
\end{proof}

Finally, we conclude that after Phase II, there are at most $\beta \log n$ incorrectly labeled vertices in the neighborhood of any vertex. Note that, in contrast to the previous statements, the following corollary also allows us to control the number of vertices labeled $*$. 
\begin{corollary}
\label{cor:abs_almost_exact}
Assume that Algorithm \ref{alg:exact-recovery} or \ref{alg:phase2_block} is used. 
    For all $\beta>0$ we have that
    $$
    \P\Big(\bigcap_{u \in V} \{|\{v \in V \cap N(u):\, \hat \sigma(v) \neq \omega^* \circ \sigma^*(v)\}|\le \beta \log n\}\Big)=1-o(1).
    $$
\end{corollary}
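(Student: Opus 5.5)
The plan is to split the misclassified vertices in $N(u)$ into two kinds. The first kind are vertices with $\hat\sigma(v)\notin\{\omega^\star\circ\sigma^\star(v),*\}$. The second kind are vertices with $\hat\sigma(v)=*$. Let
\[
M_1(u):=\bigl|\{v\in V\cap N(u):\hat\sigma(v)\notin\{\omega^\star\circ\sigma^\star(v),*\}\}\bigr|,\qquad
M_*(u):=\bigl|\{v\in V\cap N(u):\hat\sigma(v)=*\}\bigr| .
\]
Then the count in the corollary is at most $M_1(u)+M_*(u)$. By Proposition~\ref{prop:abs_propagation}, with probability $1-o(1)$ we have $M_1(u)\le m$ simultaneously for all $u\in V$, where $m$ is a fixed constant. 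Since $m\le \frac{\beta}{2}\log n$ for $n$ large, the first kind is handled.

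\textbf{The $*$-labels.} For Algorithm~\ref{alg:exact-recovery} there are no $*$-labels, since the while loop runs until $V\setminus\hat V=\emptyset$. So $M_*\equiv 0$ and that case is finished.

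For Algorithm~\ref{alg:phase2_block}, the vertices receiving $*$ are exactly those lying in blocks outside $V^\dagger$, that is, blocks with $V(B)<\delta\log n$. There is one more possibility to rule out: a $\delta$-occupied block that is never reached by the while loop. I would exclude this on the high-probability event of Proposition~\ref{prop:connectivity}, intersected with the event of Proposition~\ref{prop:abs_propagation}.

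On this intersection, every $\delta$-spread block that gets labeled has at most $m$ errors. The computation in the proof of Lemma~\ref{lem:enough_info} then shows that such a block remains $\hat\sigma$-certified, because $\frac34\delta\pi_{\mathrm{wit}}\log n-m\ge\frac12\delta\pi_{\mathrm{wit}}\log n$. The initial spread blocks are certified by Proposition~\ref{prop: initial ball}.

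Next, $H_{\mathrm{spread}}$ is connected. By induction along $H_{\mathrm{spread}}$, every $\delta$-spread block is therefore eventually labeled and certified. Every block has an $\ell$-close spread neighbour, so every $\delta$-occupied block is $\ell$-close to a certified block. Hence the loop terminates only once all of $V^\dagger$ has been labeled. Consequently the $*$-vertices are exactly the vertices in blocks with fewer than $\delta\log n$ points.

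\textbf{Counting $*$-vertices in a neighbourhood.} The ball $N(u)$ meets at most $K_r$ blocks; this is how $K_r$ in \eqref{eq:Kr} is defined, counting blocks within distance $r+\sqrt d\,\ell\chi^{1/d}$. Each non-occupied block contributes fewer than $\delta\log n$ vertices. Therefore
\[
M_*(u)\le K_r\,\delta\log n .
\]
This is where the choice of $\delta$ enters. To get arbitrary $\beta$, I would reread the statement as "for every $\beta>0$, after choosing $\delta$ small enough", in line with the sketch of the refinement step. One takes $\delta<\beta/(2K_r)$, which is compatible with \eqref{eq:condition_delta} after shrinking $\delta$. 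Then $M_1(u)+M_*(u)\le m+\frac{\beta}{2}\log n\le\beta\log n$ for large $n$, uniformly in $u$, and the corollary follows.

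\textbf{Main obstacle.} The delicate step is the claim that the block-based loop reaches every occupied block. Certification depends on the estimated labels, which creates a circularity with the error control. I would resolve it by combining the stopping-time argument of Proposition~\ref{prop:abs_propagation}, which holds throughout the whole run, with the observation that certified status is preserved under at most $m$ errors per neighbourhood.
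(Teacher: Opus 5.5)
Your proposal is correct and follows essentially the same route as the paper. You bound the non-$*$ errors by the constant $m$ from Proposition~\ref{prop:abs_propagation}, observe that Algorithm~\ref{alg:exact-recovery} produces no $*$-labels, and for Algorithm~\ref{alg:phase2_block} bound the $*$-vertices in any neighbourhood by $K_r\delta\log n$ after showing that every $\delta$-occupied block is reached. Your version is somewhat more careful than the paper's in two places. First, you spell out the reachability step (spread blocks remain certified with at most $m$ errors, combined with connectivity of $H_{\mathrm{spread}}$ and the spread-neighbour property), which the paper only asserts. Second, your reading of the quantifier as ``choose $\delta<\beta/(2K_r)$'' is exactly what the paper's inequality $K_r\delta\le\beta/2$ implicitly requires.
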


\begin{proof}
    The neighborhood of every $u \in \mathcal S_{d,n}$ intersects at most $K_r$ blocks, either $\delta$-occupied or not. With high probability, after Phase II the label of every vertex in a $\delta$-occupied block is in $Z$ (i.e. not $*$). For Algorithm \ref{alg:exact-recovery}, this is trivial since every node $v \in V$ obtains a label in $Z$. For Algorithm \ref{alg:phase2_block}, this follows from the fact that on $\bigcap_{v \in V} \mathcal G_v$, every block in $H_{\mathrm{spread}}$ is fully labeled. Hence, the number of vertices labeled $*$ is bounded by $K_r \delta \log n$. Hence, for all $u \in  V$, with high probability, and for sufficiently large $n$ since $K_r\delta \leq \beta/2$,
\begin{equation}
    \left| \left\{ v\in V \cap N(u): \widehat\sigma(v) \neq \omega^\star\circ\sigma^\star(v) \right\} \right|  \leq K_r\delta \log n + m \leq \beta \log n.
\end{equation}

We now prove almost exact recovery. Let $ M \coloneqq |\left\{
        v\in V:
        \widehat\sigma(v)
        \neq
        \omega^\star\circ\sigma^\star(v)
    \right\}|$.
With high probability, since  for sufficiently large $n$ we have $m<\frac{\beta}{2} \log n$,
\begin{equation}
     M\leq \left\lceil\frac{n}{\ell^d\chi\log n}\right\rceil \delta\log n\leq
    \frac{\delta}{\ell^d\chi}n+\delta\log n.
\end{equation}
By the Poisson Chernoff bound (Lemma \ref{lemma:poisson_concentration}),  $\mathbb P\left(| V|\geq\lambda n/{2}\right)=1-o(1)$.
Then, since $\omega^\star$ is permissible, with high probability and for sufficiently large $n$,
\begin{equation}
    1-A(\widehat\sigma,\sigma^\star)
    \leq \frac{M}{|V|}
     \leq
    \frac{2\delta}{\lambda \ell^d\chi}
    +\frac{2\delta\log n}{\lambda n}.
\end{equation}
Since $\delta$ can be arbitrarily small, this shows almost-exact recovery.
\end{proof}

\section{Refinement}

The proof of exact recovery in the Refinement step is analogous to \cite[Section C.3]{gaudio2026jan}. For completeness, we reproduce the argument. 

\begin{lemma}[Lemma 9 in \cite{gaudio2026jan}]
\label{lem:correct_labels_MAP_failures}
    If $\lambda \nu_d r^d \min_{a \neq b} D_+(P_a \Vert P_b) > 1$, then for a fixed 
    \[0 < \epsilon \leq \frac 12(\lambda \nu_d r^d \min_{a \neq b} D_+(P_a \Vert P_b) - 1 ),\]
    we have that
    \begin{equation*}
        \P(\ell_i(v, \sigma^*) - \ell_j(v, \sigma^*) \leq \epsilon \log n \mid \sigma^*(v) = i) = n^{-(1 + \Omega(1))}
    \end{equation*}
    where
    \begin{equation*}
        \ell_i(v, \sigma) = \sum_{\substack{u \in N(v)}\\{\hat \sigma(u) \neq *}} \log(\bar{p}_{i, \sigma(u)}(x_{uv};\|u - v\|)).
    \end{equation*}
\end{lemma}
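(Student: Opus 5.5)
The plan is to bound the failure event by a Chernoff/Markov argument conditional on the neighbourhood of $v$, and then to average over the Poisson configuration using the Mecke equation. Throughout I read $\ell_i(v,\sigma^*)$ with the true labels of the neighbours.

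\textbf{Step 1 (conditional Chernoff bound).} Fix $v$ with $\sigma^*(v)=i$ and $j\neq i$. Condition on the Poisson points in $N(v)$ and on their labels. Then
\[
\ell_j(v,\sigma^*)-\ell_i(v,\sigma^*)=\sum_{u\in N(v)}\log\frac{\bar p_{j\sigma^*(u)}(x_{uv};\|u-v\|)}{\bar p_{i\sigma^*(u)}(x_{uv};\|u-v\|)}
\]
is a sum of independent terms, with $x_{uv}\sim\bar P_{i\sigma^*(u)}$. For $t\in(0,1)$, Markov's inequality applied to $e^{t(\cdot)}$ gives
\[
\P\bigl(\ell_i-\ell_j\le\epsilon\log n\mid N(v),\sigma^*\bigr)\le n^{t\epsilon}\prod_{u\in N(v)}\phi_t\bigl(\bar P_{j\sigma^*(u)}(\|u-v\|),\bar P_{i\sigma^*(u)}(\|u-v\|)\bigr).
\]
By Assumption~\ref{ass:bounded-ll}, every factor lies in $[e^{-\rho},1]$.

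\textbf{Step 2 (averaging over the point process).} The marked points around $v$ form a Poisson process of intensity $\lambda\pi_c$ for each label $c$. The probability generating functional then gives
\[
\E\prod_{u}\phi_t(\cdot)=\exp\Bigl(-\lambda\int_{B_{r(\log n)^{1/d}}}\sum_c\pi_c\bigl(1-\phi_t(\bar P_{jc}(\|y\|),\bar P_{ic}(\|y\|))\bigr)\,dy\Bigr).
\]
Pass to polar coordinates and substitute $y=s(\log n)^{1/d}$. The exponent becomes
\[
-\lambda\nu_d r^d\log n\,\Bigl(1-\int_0^r\tfrac{ds^{d-1}}{r^d}\sum_c\pi_c\phi_t(P_{jc}(s),P_{ic}(s))\,ds\Bigr).
\]
Optimize over $t$. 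If the infimum in the definition of $D_+$ is only approached, choose $t$ with value within $\eta$ of it, which is possible because the map is continuous in $t$. The bound is then
\[
n^{t\epsilon-\lambda\nu_d r^d(D_+(P_i\|P_j)-\eta)}\le n^{\epsilon-\lambda\nu_d r^d D_+ +\lambda\nu_d r^d\eta}.
\]
With $\epsilon\le\frac12(\lambda\nu_d r^dD_+-1)$ and $\eta$ small, the exponent is at most $-1-\frac14(\lambda\nu_d r^d D_+-1)$. This gives $n^{-(1+\Omega(1))}$. A union bound over the $k-1$ choices of $j$ costs only a constant.

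\textbf{Main obstacle.} The main subtlety is the symmetry and the direction of $\phi_t$ versus $D_+$. The infimum in $D_+(P_i\|P_j)$ is taken over $t$ with the roles $(P_{ic},P_{jc})$, whereas Markov produces $\phi_t(P_{jc},P_{ic})=\phi_{1-t}(P_{ic},P_{jc})$. Since the infimum runs over all of $t\in(0,1)$, this is harmless, and $\min_{a\neq b}D_+$ covers the pair $(i,j)$ in either order.

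The second subtlety is that conditioning on $\sigma^*(v)=i$ makes $v$ a typical point. It should be handled by the Mecke equation (Proposition~\ref{thm:mecke}), so that the neighbourhood of $v$ is again a Poisson process. It must also be checked that the toroidal ball of radius $r(\log n)^{1/d}$ is a genuine Euclidean ball for large $n$, so the volume computation is exact.
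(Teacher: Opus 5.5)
The paper gives no proof of its own here; it imports the lemma from \cite{gaudio2026jan}. Your argument (Markov on $e^{t(\ell_j-\ell_i)}$, then the Poisson generating functional, then rescaling and optimizing over $t$) is the standard one and is correct as far as it goes. However, it proves a different statement from the one given. In Step~1 you write $\ell_j(v,\sigma^*)-\ell_i(v,\sigma^*)$ as a sum over \emph{all} $u\in N(v)$, but the definition of $\ell_i$ only sums over $u\in N(v)$ with $\hat\sigma(u)\neq *$. For Algorithm~\ref{alg:exact-recovery} the restriction is vacuous. For Algorithm~\ref{alg:phase2_block}, every vertex of a non-$\delta$-occupied block is labelled $*$, and this set depends on the whole configuration $V$, not just on the points in $N(v)$. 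The generating-functional formula therefore no longer evaluates the restricted product. Moreover, removing factors $\phi_t\le 1$ only weakens your Chernoff bound, so the restriction has to be handled explicitly.

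The fix uses exactly the bound you wrote down but never used. Condition on $(V,\sigma^*)$ instead of only on the points in $N(v)$. The set $S_v:=\{u\in N(v):\hat\sigma(u)\neq *\}$ is then fixed, and Step~1 gives the bound $n^{t\epsilon}\prod_{u\in S_v}\phi_t(\cdots)$. By Assumption~\ref{ass:bounded-ll}, $\phi_t(P,Q)=\int q\,(p/q)^t\,d\mu\ge e^{-t\rho}$, so reinserting the missing factors costs at most $e^{t\rho|N(v)\setminus S_v|}$. Deterministically, $|N(v)\setminus S_v|\le K_r\delta\log n$: at most $K_r$ blocks meet $N(v)$, and each non-occupied one contains fewer than $\delta\log n$ points. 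Your Step~2 then goes through with $\epsilon$ replaced by $\epsilon+\rho K_r\delta$. Write $x:=\lambda\nu_d r^d\min_{a\neq b}D_+(P_a\|P_b)-1$. The choice $\delta<\beta/K_r$ in \eqref{eq:condition_delta} gives $\rho K_r\delta<\rho\beta=x/4$. Hence the exponent is at most $x/2+x/4-(1+x)+\lambda\nu_d r^d\eta\le -1-x/8$ once $\eta\le x/(8\lambda\nu_d r^d)$. So the lemma with the $*$-restriction does hold, but only because of Assumption~\ref{ass:bounded-ll} and the specific smallness of $\delta$. Without this step your proof covers Algorithm~\ref{alg:exact-recovery} only.
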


We finally complete the proof of Theorem \ref{theorem: achiev} following the proof of Theorem 2 in \cite{gaudio2026jan} (see also the proof of Theorem E.3 in \cite{gaudio2025incguan}).
\begin{proof}[Proof of Theorem \ref{theorem: achiev}]
Recall that $\omega^{\star}$ is such that $\hat{\sigma}_0 = \omega^{\star} \circ \sigma^{\star}$, where $\hat{\sigma}_0$ is the initial block labeling.
Fix $\beta > 0$ and let the event $\mathcal E_1$ be the event that $\hat{\sigma}$ makes at most $\beta \log n$ mistakes in the neighborhood of every vertex $v \in V$. That is,
\begin{equation*}
    \mathcal{E}_1 = \bigcap_{v \in V}\Big\{ |\{u \in V \cap N(v): \hat{\sigma}(u) \neq \omega^* \circ \sigma^*(u) \}| \leq \beta \log n \Big\}.
\end{equation*}
By Corollary \ref{cor:abs_almost_exact}, $\P(\mathcal E_1) = 1 - o(1)$.

Rather than analyze $\hat{\sigma}$ directly, we will provide a uniform upper bound on the failure probability of the MAP estimator on a given vertex $v \in V$, over all labelings $\sigma$ which are ``close'' to the true labels $\sigma^*$. Thus, fix $v \in V$ and let $W(v, \beta)$ be the set of labelings $\sigma$ which differ from $\omega^* \circ \sigma^*$ by at most $\beta \log n$ vertices in the neighborhood of $v$, meaning that
\begin{equation*}
    W(v; \beta) := \{\sigma: |\{u \in V \cap N(v): \sigma(u) \neq \omega^* \circ \sigma^*(u)\}| \leq \beta \log n\}.
\end{equation*}

Now, we define the event
\begin{equation*}
    \mathcal E_v = \bigcup_{i \in Z} \bigg[ \{\omega^* \circ \sigma^*(v) = i\} \cap \bigg( \bigcup_{\sigma \in W(v; \beta)} \bigcup_{j \neq i} \{\ell_i(v, \sigma) \leq \ell_j(v, \sigma) \}\bigg)\bigg],
\end{equation*}
which is the event that there exists a labeling $\sigma \in W(v; \beta)$ such that the MAP estimator with respect to $\sigma$ fails on $v$. 
Note that the probabilty of exact recovery is bounded by
\[
\P(A(\tilde \sigma,\sigma^*)=1) \ge 1-\P\Big(\bigcup_{v \in V} \mathcal E_v\Big).
\]
By the Mecke equation (Proposition \ref{thm:mecke}),
\[
\P\Big(\bigcup_{v \in V} \mathcal E_v\Big)\le \lambda \int_{S_{d,n}} \P(\mathcal E_v) \,\mathrm d v.
 \]

In order to bound $\mathbb{P}(\mathcal{E}_v)$, we leverage Assumption \ref{ass:bounded-ll} to obtain that 
\begin{equation*}
    |\ell_q(v, \omega^* \circ \sigma^*) - \ell_q(v, \sigma)| \leq \rho \beta \log n,\quad q=i,j.
\end{equation*}
for all $\sigma \in W(v; \beta)$. Thus, the event
$\bigcup_{\sigma \in W(v; \beta)}  \{\ell_i(v, \sigma) \leq \ell_j(v, \sigma) \}$ occurring within $\mathcal{E}_v$ implies 
\begin{equation}
\{\ell_i(v,\sigma^{\star}) \leq \ell_j(v, \sigma^{\star}) + 2 \rho \beta \log n\}. \label{eq:refine-implication}
\end{equation} 
Note that $\beta > 0$ can be arbitrarily small.
Conditioned on $\sigma^{\star}(v) = i$, Lemma \ref{lem:correct_labels_MAP_failures} implies that \eqref{eq:refine-implication} occurs with probability $n^{-(1 + \Omega(1))}$ when $\beta > 0$ is small enough, thus establishing that $\mathcal{E}_v$ holds with probability $n^{-(1 + \Omega(1))}$ also. Since $|\mathcal S_{d,n}| \in \Theta(n)$, this finishes the proof.
\end{proof}
\end{document}